\documentclass{article}
\pdfoutput=1
\usepackage[margin=1in]{geometry}
\usepackage{amsmath, amssymb, amsthm}
\usepackage{multirow}
\usepackage{enumitem}
\usepackage{color}
\usepackage{algpseudocode}
\usepackage{algorithmicx}
\usepackage{algorithm}

\newcommand{\eps}{\epsilon}
\newcommand{\R}{\ensuremath\mathbb{R}}
\newcommand{\N}{\ensuremath\mathbb{N}}
\newcommand{\Z}{\ensuremath\mathbb{Z}}

\DeclareRobustCommand{\stir}{\genfrac\{\}{0pt}{}}

\newmuskip\pFqmuskip

\newcommand*\pFq[6][8]{%
  \begingroup 
  \pFqmuskip=#1mu\relax
\pFqcomma
  {}_{#2}F_{#3}{\left(\genfrac..{0pt}{}{#4}{#5};#6\right)}%
  \endgroup
}
\newcommand{\pFqcomma}{\mskip\pFqmuskip}

\DeclareMathOperator{\rk}{rank}

\DeclareMathOperator{\tr}{tr}
\DeclareMathOperator{\poly}{poly}

\DeclareMathOperator*{\E}{\mathbb{E}}

\DeclareMathOperator*{\Unif}{Unif}
\DeclareMathOperator*{\diag}{diag}
\DeclareMathOperator*{\argmax}{\arg\!\max}

\allowdisplaybreaks

\newtheorem{theorem}{Theorem}
\newtheorem{lemma}{Lemma}

\theoremstyle{definition}
\newtheorem{definition}{Definition}
\theoremstyle{remark}

\title{On Approximating Functions of the Singular Values in a Stream\footnote{A preliminary version is to appear in the \textit{Proceedings of STOC} 2016.}}
\author{
	Yi Li\footnote{Supported by ONR grant N00014-15-1-2388 when the author was at Harvard University.}\\
	Nanyang Technological University\\
	\texttt{yili@ntu.edu.sg}
\and
	David P. Woodruff\footnote{Supported in part 
by the XDATA  program  of  the  Defense  Advanced  Research  Projects  
Agency (DARPA),  administered  through  Air  Force  Research  Laboratory  
contract  FA8750-12-C-0323.}\\
	IBM Almaden Research Center\\
	\texttt{dpwoodru@us.ibm.com}
}
\date{}

\begin{document}
\maketitle

\begin{abstract}
For any real number $p > 0$, we nearly completely 
characterize the space complexity of estimating $\|A\|_p^p = \sum_{i=1}^n \sigma_i^p$ for
$n \times n$ matrices $A$ in which each row and each column has $O(1)$ non-zero entries and 
whose entries are presented one at a time in a data stream model.
Here the $\sigma_i$ are the singular values of $A$, and when $p \geq 1$, $\|A\|_p^p$ is the 
$p$-th power of the Schatten $p$-norm. We show that when $p$ is not an even integer, to obtain
a $(1+\epsilon)$-approximation to $\|A\|_p^p$ with constant probability, any $1$-pass algorithm 
requires $n^{1-g(\epsilon)}$ bits of space, where $g(\epsilon) \rightarrow 0$
as $\epsilon \rightarrow 0$ and $\epsilon > 0$ is a constant independent of $n$. 
However, when $p$ is an even integer, we give an upper bound of $n^{1-2/p} \poly(\epsilon^{-1}\log n)$
bits of space, which holds even in the turnstile data stream model. The latter is optimal up to 
$\poly(\epsilon^{-1} \log n)$ factors. 

Our results considerably strengthen lower bounds in previous work for arbitrary (not necessarily sparse) 
matrices $A$: 
the previous best lower bound was $\Omega(\log n)$ for $p\in (0,1)$, $\Omega(n^{1/p-1/2}/\log n)$ for $p\in [1,2)$ and $\Omega(n^{1-2/p})$ for $p\in (2,\infty)$. We note for $p \in (2, \infty)$, while our lower
bound for even integers is the same, for other $p$ in this range our lower bound is 
$n^{1-g(\epsilon)}$, which is considerably stronger than the previous $n^{1-2/p}$ 
for small enough constant $\epsilon > 0$.  
We obtain similar near-linear lower bounds for Ky-Fan norms, SVD entropy, eigenvalue shrinkers, and M-estimators, many of which could have been solvable in logarithmic space prior to our work. 
\end{abstract}

\section{Introduction}
In the data stream model, there is an underlying vector $x \in \mathbb{Z}^n$ which 
undergoes a sequence of additive updates to its coordinates. 
Each update has the form $(i, \delta) \in [n] \times \{-m, -m+1, \ldots, m\}$ (where $[n]$ denotes $\{1,\dots,n\}$), 
and indicates that $x_i \leftarrow x_i + \delta$. The algorithm maintains a small summary of $x$ while
processing the stream. At the end of the stream it should succeed in approximating a pre-specified 
function of $x$ with constant probability. The
goal is often to minimize the space complexity of the algorithm while processing the stream.
We make the standard simplifying assumption that $n,m$, and the length of the stream are polynomially related. 

A large body of work has focused on characterizing which functions $f$ it is possible to 
approximate $f(x) = \sum_{i = 1}^n f(x_i)$ using a polylogarithmic (in $n$) amount of space. 
The first class of functions studied were the $\ell_p$ norms $f(x_i) = |x_i|^p$, 
dating back to work of Alon, Matias, and Szegedy \cite{ams99}. 
For $p \leq 2$ it is possible to obtain any constant factor approximation using $\tilde{\Theta}(1)$ 
bits of space \cite{i06,knw10}, while for $p > 2$ the bound is 
$\tilde{\Theta}(n^{1-2/p})$ \cite{cks03,bjks04,iw05,ako11,g12,lw13,bksv14,g15}, where $\tilde{f} = f \cdot \poly(\log(f))$.  
Braverman and Ostrovsky later developed a zero-one law for monotonically non-decreasing $f$ for 
which $f(0) = 0$, showing that if $f$ has at most quadratic growth and does not have large ``local jumps'', 
then a constant factor approximation to $f(x)$ can be computed in $\tilde{O}(1)$ space
 \cite{bo10a}. Moreover, if either condition 
is violated, then there is no polylogarithmic space algorithm. This was extended by Braverman and Chestnut to 
periodic and to decreasing $f$ \cite{bc14,bc15}. 
Characterizations were also given in the related sliding window model \cite{bor15}.
 Recently, Bravermen et al.\ gave conditions nearly characterizing all $f$ computable in a constant number 
of passes using $n^{o(1)}$ space \cite{bcwy15}. 

Despite a nearly complete understanding of which functions $f$ one can approximate $\sum_{i=1}^n f(x_i)$ for a vector $x$ using small space in a stream, 
little is known about estimating functions of an $n \times n$ {\it matrix} $A$ presented in a stream. Here,  
an underlying $n \times n$ matrix $A$ undergoes a sequence 
of additive updates to its entries. Each update has the form
$(i, j, \delta) \in [n] \times [n] \times \{-m,-m+1,\dots,m\}$ and indicates that $A_{i, j} \leftarrow
A_{i, j} + \delta$. Every matrix $A$ can be expressed in its singular value decomposition 
as $A = U \Sigma V^T$, where $U$ and $V$ are orthogonal $n \times n$ matrices, and $\Sigma$ is a non-negative 
diagonal matrix with diagonal entries $\sigma_1 \geq \sigma_2 \geq \cdots \geq \sigma_n$, which are the singular values of $A$. 
We are interested in functions which do not depend on the bases $U$ and $V$, but rather only on the spectrum (singular values)
of $A$. 
These functions have the same value under any (orthogonal) change of basis. 

The analogue of the functions studied for vectors are functions of the form $\sum_{i=1}^n f(\sigma_i)$. 
Here, too, we can take $f(\sigma_i) = \sigma_i^p$, in which case $\sum_{i=1}^n f(\sigma_i)$ is the $p$-th power of {\it Schatten} $p$-norm $\|A\|_p^p$ of $A$. 
When $p = 0$, interpreting $0^0$ as $0$ this is the {\it rank} of $A$, which has been studied in the data stream 
\cite{cw09,BS15} and property testing models \cite{ks03,lww14}. 
When $p = 1$, this is the nuclear or trace norm\footnote{The trace norm is not to be confused with
the trace. These two quantities only coincide if $A$ is positive semidefinite.}, with applications to differential privacy \cite{hlm10,lm12} 
and non-convex optimization \cite{cr12,dtv11}. When $p = 2$ this is the Frobenius norm, while for large $p$, this sum approaches the $p$-th power of the operator norm $\sup_{x:\|x\|_2=1} \|Ax\|_2$. 
Such norms are useful in geometry and linear algebra, see, e.g.,~\cite{w14}. 
The Schatten $p$-norms also arise in the context of estimating 
$p$-th moments of a multivariate 
Gaussian matrix in which the components are independent but not of the same variance, see, e.g., \cite{kv16}. 
The Schatten $p$-norms have been studied in the sketching model~\cite{LNW14}, 
and upper bounds there imply upper bounds for streaming. 
Fractional Schatten-$p$ norms of Laplacians were studied by Zhou~\cite{z08} and Bozkurt et al.\ \cite{bozkurt2012sum}. 
We refer the reader to \cite{s09} for applications of the case $p = 1/2$,
which is the Laplacian-energy-like (LEL) invariant of a graph. 

There are a number of other functions $\sum_{i=1}^n f(\sigma_i)$ of importance. One is the ``SVD entropy'' for 
which $f(\sigma_i) = \frac{\sigma_i^2}{\|A\|_F^2} \log_2 \frac{\|A\|_F^2}{\sigma_i^2}$. See the work
of Alter et al. for its foundational applications in genome processing \cite{a00}.   
In insertion-only streams
$\|A\|_F^2$ can be computed exactly, so one can set $f(\sigma_i) = \sigma_i^2 \log_2 \frac{1}{\sigma_i^2}$, from
which given $\|A\|_F^2$ and an approximation to $\sum_{i=1}^n f(\sigma_i)$, one can approximate the SVD entropy. 
Other functions are motivated from regularized
low rank approximation, where one computes the optimal {\it eigenvalue shrinkers} for different
loss functions, such as the Frobenius, operator, and nuclear norm losses \cite{GD14}. For example, 
for Frobenius norm loss, $f(x) = \frac{1}{x}\sqrt{(x^2-\alpha-1)^2-4\alpha}$ for 
$x \geq 1 + \sqrt{\alpha}$, and $f(x) = 0$ otherwise, for a given parameter $\alpha$. 

Other applications include low rank approximation
with respect to functions on the singular values that are not norms, 
such as Huber or Tukey loss functions, which 
could find more robust low dimensional subspaces as solutions; we discuss these functions more in Section~\ref{sec:app}. 

\subsubsection*{Our Contributions} 
The aim of this work is to obtain the first sufficient criteria in the streaming model for 
functions of a matrix spectrum. Prior to our work we did not even know the complexity of most of the
problems we study even in the {\it insertion-only} data stream model 
in which each coordinate
is updated at most once in the stream, and even when $A$ is promised to be \emph{sparse},
i.e., it has only $O(1)$ non-zero entries per row and column. Sparse matrices 
have only a constant factor more entries than diagonal matrices, and the space 
complexity of diagonal matrices is well-understood since it corresponds to that for vectors. 
As a main application, we considerably strengthen the
known results for approximating Schatten $p$-norms. We stress that the difficulty with 
functions of a matrix spectrum is that updates to the matrix entries often affect the singular values
in subtle ways. 

{\it The main qualitative message of this work is that for approximating Schatten $p$-norms up to a sufficiently
small constant factor, for any positive real number 
$p$ which is not an even integer, almost $n$ bits of space is necessary. Moreover, this holds even for matrices
with $O(1)$ non-zero entries per row and column, and consequently is tight for such matrices. It also holds
even in the insertion-only model. Furthermore, for even
integers $p$, we present an algorithm achieving an arbitrarily small constant factor approximation 
for any matrix with $O(1)$ non-zero entries per row and column which achieves 
$\tilde{O}(n^{1-2/p})$ bits of space. Also,
$\Omega(n^{1-2/p})$ bits of space is necessary for even integers $p$, even with $O(1)$ non-zero entries per
row and column and even if all entries are absolute constants independent of $n$. 
Thus, for $p$-norms, there is a substantial
difference in the complexity in the vector and matrix cases:
in the vector case the complexity is logarithmic for $p \leq 2$ and grows as $n^{1-2/p}$ for $p \geq 2$, while in the matrix case the complexity is always almost $n$ bits unless $p$ is an even integer! Furthermore, for each even integer
$p$ the complexity is $\tilde{\Theta}(n^{1-2/p})$, 
just as in the vector case. Note that our results show a ``singularity'' at $p = 2 \pm o(1)$, 
which are the only values of $p$ for which $O(\log n)$ bits of space is possible.}

We now state our improvements over prior work more precisely.  
Henceforth in this section, the approximation parameter $\epsilon$ is a constant (independent of $n$), 
and $g(\epsilon) \rightarrow 0$ as $\epsilon \rightarrow 0$. The number of non-zero entries of $A$ is denoted by nnz$(A)$.
\begin{theorem}(Lower Bound for Schatten $p$-Norms)\label{thm:prior}
Let $p \in [0, \infty) \setminus 2\Z$. 
Any randomized data stream algorithm which outputs, with constant error probability, a $(1+\epsilon)$-approximation to the Schatten $p$-norm of an $n \times n$ matrix $A$ 
requires $\Omega(n^{1-g(\epsilon)})$ bits of space. This holds even if nnz$(A) = O(n)$.  
\end{theorem}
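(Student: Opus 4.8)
The plan is to prove the lower bound via a reduction from a hard communication problem, most naturally a multi-player or two-player version of set-disjointness (or a boolean-hiding variant). The key idea that makes the bound $n^{1-g(\epsilon)}$ rather than merely polynomial is the following spectral phenomenon: if $p$ is not an even integer, then the function $t \mapsto t^p$ is not a polynomial, so $\|A\|_p^p$ cannot be an \emph{exactly} low-degree polynomial in the entries of $A$, and small perturbations to the entries of a carefully chosen base matrix change $\|A\|_p^p$ by an amount that a $(1+\epsilon)$-approximation cannot absorb unless $\epsilon$ is essentially constant-sized. Concretely, I would take $A$ to be block-diagonal with $\Theta(n)$ blocks, where each block is a small (constant-size, so the sparsity constraint nnz$(A)=O(n)$ is automatically met) matrix encoding one bit (or one coordinate) of the communication input; the Schatten $p$-norm is additive over the block-diagonal decomposition, so $\|A\|_p^p = \sum_k \|B_k\|_p^p$, turning the matrix problem into estimating a sum of scalar contributions, one per player/coordinate.

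First I would identify, for each fixed non-even $p$, a pair of small gadget matrices $B^{(0)}, B^{(1)}$ (say $2\times 2$ or slightly larger) whose Schatten $p$-norms differ, and more importantly such that when one superimposes many copies in a way controlled by the communication protocol, the resulting perturbation to $\sum_k \|B_k\|_p^p$ has a \emph{sign} or \emph{magnitude} gap that distinguishes the two cases of the communication problem. Because $t^p$ is strictly convex or strictly concave (depending on whether $p>1$ or $p<1$) and not a polynomial, a Taylor-type expansion of $\|A\|_p^p$ around the ``all-zero-bits'' configuration will have a nonvanishing term at every order; the trick is to choose the gadget so that the lower-order (polynomial-looking) terms cancel or are computable, leaving a genuinely nonlinear residual of relative size $1/\poly(n)$ — but crucially, one chooses the number of blocks and the embedding so that this residual is amplified to a $(1+\Omega_\epsilon(1))$ multiplicative gap whenever $\epsilon$ is small enough. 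This is exactly the mechanism that forces the exponent $1-g(\epsilon)$ with $g(\epsilon)\to 0$: a coarser approximation demands a larger amplification, hence a weaker (but still near-linear) lower bound.

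Next I would invoke the communication complexity of the chosen problem — for a near-linear bound the natural choice is an $\Omega(n)$-communication two-party problem such as Gap-Hamming or the indexing/augmented-indexing problem, or a direct-sum/boolean-hiding-set construction giving $\Omega(n^{1-o(1)})$ total communication even when each player holds a sparse contribution. A $1$-pass (indeed, any-pass with a corresponding blow-up) streaming algorithm using $S$ bits of space yields a communication protocol of cost $O(S\cdot(\#\text{passes}))$ by the standard Alice-sends-memory-state argument: Alice streams her entries into the gadget matrices, passes the memory state, Bob streams his, and the output $(1+\epsilon)$-approximation to $\|A\|_p^p$ decides the communication instance. Combining the $\Omega(n^{1-g(\epsilon)})$ communication lower bound with this simulation gives the claimed space lower bound, and the block-diagonal construction with constant-size blocks keeps nnz$(A)=O(n)$.

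The hard part will be the spectral perturbation analysis: designing gadget matrices and an embedding such that (i) the ``polynomial part'' of $\|A\|_p^p$ up to the relevant order is either constant across instances or can be cancelled/accounted for, so that the protocol is not confounded by it, and (ii) the residual nonlinear part has a provable, one-sided gap of the right magnitude as a function of $\epsilon$ and $n$ — this requires controlling eigenvalues of sums of many non-commuting small perturbations, likely via a careful second- or higher-order expansion of $t\mapsto t^p$ together with bounds on the interaction (off-diagonal/cross) terms, and exploiting that $t^p$ is not a polynomial precisely to guarantee the residual never vanishes. A secondary technical point is ensuring the construction is robust to the stream being in insertion-only mode (so all updates are nonnegative and applied once), which constrains how the two cases of the communication problem can be encoded into matrix entries; this typically forces the gadget entries to be small nonnegative integers and the ``hiding'' to be done combinatorially rather than by cancellation of signed updates.
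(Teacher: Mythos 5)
Your high-level architecture is right --- reduce from a two-party communication problem, use a block-diagonal matrix of constant-size gadgets so that $\|A\|_p^p$ decomposes additively, encode the communication input combinatorially with non-negative entries, and exploit that $t\mapsto t^p$ is not a polynomial when $p$ is not an even integer. But there are several concrete gaps.

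First, the choice of communication problem. You hedge between Gap-Hamming, indexing, and a ``boolean-hiding-set construction,'' but the proof hinges on a specific one: Boolean Hidden Hypermatching $\textsc{BHH}^0_{t,n}$, which has a one-way lower bound of $\Omega(n^{1-1/t})$ \emph{parameterized by} $t$. This parameterization is where $g(\epsilon)$ comes from, and not for the reason you suggest. In the reduction each block encodes a hyperedge of a perfect $t$-hypermatching, and within a block the number of ``tentacles'' $k$ follows a binomial distribution supported on even or odd integers. The gap per block between the even and odd cases is a \emph{constant} $c(t)$ that shrinks roughly like $2^{-m}m^{-p}$ (with $m=t$), not a $1/\poly(n)$ quantity that you amplify. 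So for a given $\epsilon$ you take the largest even $t$ with $c(t)\geq\epsilon$, and $g(\epsilon)=1/t(\epsilon)\to 0$; this is a direct consequence of the $\Omega(n^{1-1/t})$ bound for BHH, not of any ``amplification'' of a vanishing residual.

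Second, the gap mechanism itself. You describe a perturbation-theoretic picture --- Taylor-expand around an all-zeros base, cancel polynomial parts, keep a nonlinear residual, and control ``eigenvalues of sums of many non-commuting small perturbations.'' But the matrix is block diagonal, so there are no cross-block interactions: the singular values are exactly the union of the blocks' singular values, and within a block they are roots of an explicit quadratic. The actual gap to establish is
\[
\E_{q\sim\mathcal{E}(t)} f(M_{m,q}) \;-\; \E_{q\sim\mathcal{O}(t)} f(M_{m,q}) \;\neq\; 0,
\]
for $f(x)=x^p$, where $\mathcal{E}(t),\mathcal{O}(t)$ are even/odd binomial distributions. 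Expanding the singular value expressions as a power series in $k$ and summing against the signed binomial weights via the Stirling-number identity $\sum_k (-1)^k\binom{m}{k}k^s = (-1)^m m!\stir{s}{m}$ reduces everything to showing a certain hypergeometric series has a fixed sign for $s\geq m$, which is a genuine and delicate piece of special-function analysis. Your proposal does not anticipate this, and the ``polynomial part cancels / nonlinear residual survives because $t^p$ is not a polynomial'' intuition, while pointing in the right direction (it is exactly why even integers $p$ must be excluded), does not by itself yield a proof: one must exhibit a concrete gadget whose spectrum is tractable and then carry out this sign analysis. So the missing ingredients are (i) BHH as the communication problem, (ii) the precise even/odd parity-gap formulation in place of a perturbation expansion, and (iii) the Stirling/hypergeometric machinery that makes the gap provable.
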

When $p$ is an even integer, Theorem \ref{thm:prior} does not apply and we instead prove the following theorem.
\begin{theorem}(Upper Bound for Schatten $p$-norms for Even Integers $p$)\label{thm:upperArxiv}
Let $p \in 2\Z$ be a positive even integer. For matrices $A$ with $O(1)$ non-zero entries per row and per column, Algorithm~\ref{alg:even_p} (on page~\pageref{alg:even_p}) returns a value that is a $(1+\epsilon)$-approximation to $\|A\|_p^p$ with constant probability, using $O(n^{1-2/p}\poly(1/\epsilon,\log n))$ bits of space.
\end{theorem}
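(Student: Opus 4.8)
The plan is to exploit the fact that for an even integer $p = 2k$, the quantity $\|A\|_p^p = \tr((A^TA)^k)$ is a \emph{polynomial} in the entries of $A$ of total degree $p$. Expanding the trace, $\|A\|_p^p = \sum \prod A_{i_1 j_1} A_{i_2 j_1} A_{i_2 j_2} \cdots$, where the sum ranges over closed walks of length $2k$ alternating between ``row'' and ``column'' index sets. Thus, unlike for general $p$, we never need to touch the singular values directly: we only need to estimate a fixed low-degree polynomial in the stream. This reduces the problem to a (weighted) subgraph / closed-walk counting problem on the bipartite graph $G_A$ whose edges are the non-zero entries of $A$; the sparsity assumption ($O(1)$ non-zeros per row and column) means $G_A$ has bounded degree and hence $O(n)$ edges and only $O(n)$ closed walks of each bounded length through any fixed structure.

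First I would set up the combinatorial identity above precisely and group the closed walks by their ``shape'' (the isomorphism type of the multigraph traced out), writing $\|A\|_p^p = \sum_{\text{shapes } H} c_H \cdot T_H(A)$ for finitely many shapes $H$ and explicit rational constants $c_H$, where $T_H(A)$ is a sum of monomials, one per embedding of $H$ into $G_A$. Second, I would design, for each shape $H$, a linear sketch that estimates $T_H(A)$ up to a $(1+\epsilon)$ factor (or additively, relative to $\|A\|_p^p$). Since every monomial is a product of at most $p$ entries and the number of distinct monomials is $O(n)$, each $T_H(A)$ is itself a kind of $F_k$-type or weighted-count statistic; I expect to use an $\ell_2$-heavy-hitters / AMS-style sketch combined with the bounded-degree structure so that a single sketch of size $n^{1-2/p}\poly(\epsilon^{-1}\log n)$ suffices — the exponent $1-2/p$ should emerge exactly as in the classical $F_p$ lower/upper bound story, because the ``dominant'' shape is the one contributing a sum of $p$-th powers of entries, which is an $F_p$ moment on a vector of length $O(n)$. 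Third, I would combine the estimators for all shapes (there are $O(1)$ of them, depending only on $p$) by a union bound and triangle inequality, rescaling $\epsilon$ by a constant.

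The main obstacle will be controlling the \emph{cross terms}: the monomials coming from non-diagonal shapes (walks that revisit vertices in complicated ways) can in principle have cancellations, and a naive sketch of $T_H(A)$ incurs variance proportional to a higher moment that need not be bounded by $(\|A\|_p^p)^2$. The fix is to use the sparsity crucially: bounded degree forces the number of edges incident to any vertex to be $O(1)$, which bounds the number of walks of a given shape through any edge and lets one bound the variance of the shape-estimators by $\poly(\log n) \cdot (\|A\|_p^p)^2$ after the $n^{1-2/p}$ ``bucketing'' — essentially because $\sum_i \sigma_i^p \geq n^{2/p - 1}(\sum_i \sigma_i^2)^{p/2} / \text{(something)}$-type inequalities relate the $p$-norm to the Frobenius norm with the right $n$-dependence, and $\|A\|_F^2$ is exactly computable (or $O(\log n)$-space approximable) in a stream. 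A secondary technical point is handling the turnstile model: the sketch must be linear, so the heavy-hitter subroutine should be a CountSketch/CountMin variant rather than a sampling-based one, and I would verify that the whole pipeline is a fixed linear map applied to the vectorization of $A$. Finally I would assemble these pieces into Algorithm~\ref{alg:even_p} and verify the stated space bound by summing the (constantly many) sketch sizes.
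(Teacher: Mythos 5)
Your starting observation---that $\|A\|_p^p = \tr((A^TA)^{p/2})$ is a degree-$p$ polynomial in the entries of $A$, so no singular value needs to be approximated directly---is the same observation the paper relies on, and the power-mean inequality $\|A\|_F^2 \leq n^{1-2/p}\|A\|_p^2$ that you gesture at is exactly what supplies the exponent $1-2/p$ in the paper's variance bound. Where the proposal diverges, and where the genuine gap lies, is in the algorithmic core. You propose a shape-by-shape decomposition $\|A\|_p^p = \sum_H c_H\, T_H(A)$ and a separate sketch for each shape $H$, but you never say what that sketch is; you assert that ``a single sketch of size $n^{1-2/p}\poly(\epsilon^{-1}\log n)$ suffices'' by analogy with $F_p$ moment estimation, but a generic $T_H(A)$ is a signed sum of products of up to $p$ \emph{distinct} entries, not a $p$-th moment, and no standard $F_p$ machinery hands you an estimator for such a quantity with additive error $\epsilon\|A\|_p^p$ in that space. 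You correctly identify that additive (not multiplicative) error relative to $\|A\|_p^p$ is what is needed per shape, which is precisely the hard part, and the proposal stops there. There is also an unaddressed bookkeeping issue: grouping closed walks by the isomorphism type of the multigraph they trace involves multiplicities and over-/under-counting that need to be pinned down before the identity $\|A\|_p^p = \sum_H c_H T_H(A)$ is even well-posed.

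The paper takes a different route that sidesteps the shape decomposition entirely. Writing $q=p/2$, it organizes the expansion not by column indices (i.e.\ by ``walk shapes'') but by rows:
\[
\|A\|_p^p \;=\; \sum_{i_1,\dots,i_q}\ \prod_{j=1}^q \langle a_{i_j}, a_{i_{j+1}}\rangle ,
\]
a single cyclic sum over $q$-tuples of row indices, and then runs $\ell_2$-importance sampling \emph{on the rows}: it draws $T=\Theta(n^{1-2/p}/\epsilon^2)$ rows with probability proportional to $\|a_i\|_2^2$ via the Precision Sampling linear sketch of \cite{ako11}, supplements them with the heavy rows detected by a separate \textsc{Count-Sketch}, and forms an importance-weighted estimator over all $q$-tuples of retained rows. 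Note that Precision Sampling is itself a linear sketch (it hashes $x_i/\sqrt{u_i}$ into a \textsc{Count-Sketch}), so the whole pipeline is a fixed linear map applied to the vectorization of $A$---your worry that a sampling-based subroutine is incompatible with the turnstile model is already resolved by the primitive the paper uses. Sparsity enters in a specific, crucial way: each row has $O(1)$ nonzeros, so each $a_i$ has nonzero inner product with only $O(1)$ other rows, which bounds how many $q$-tuples any one retained row can contribute to; combined with $\|A\|_F^{2r}\|A\|_p^{2p-2r}\leq n^{r(1-2/p)}\|A\|_p^{2p}$ this bounds the variance by $O(\epsilon^2\|A\|_p^{2p})$. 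If you wished to salvage the shape-decomposition route, you would need to supply, for every shape $H$, a linear estimator of $T_H(A)$ with additive error $\epsilon\|A\|_p^p$ in space $n^{1-2/p}\poly(\epsilon^{-1}\log n)$, together with a clean over-counting-free version of the shape identity; the paper's row-sampling formulation avoids both burdens.
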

Theorem \ref{thm:upperArxiv} is optimal up to $\poly(1/\epsilon, \log n)$ factors for matices with $O(1)$ non-zeros per row and column, that is,
it matches known lower bounds for vectors (which can be placed on the diagonal of a matrix). We also show an $\Omega(n^{1-2/p})$ lower bound
for even integers $p$ even if all non-zeros of $A$ are promised to be constants independent of $n$. This is a slight strengthening over applying
the lower bound for vectors since the hard instances for vectors have entries which grow with $n$ (roughly as $n^{1/p}$). 

We obtain similar lower bounds as in Theorem \ref{thm:prior} for estimating the Ky-Fan $k$-norm, which is defined to be the sum of the $k$ 
largest singular values, and has applications
to clustering and low rank approximation \cite{Xia,d14}. 
Interestingly, these norms do not have the form $\sum_{i=1}^n f(\sigma_i)$ but rather 
have the form $\sum_{i=1}^k f(\sigma_i)$, yet our framework is robust enough 
to handle them. In the latter case,  we have the following general result for strictly monotone $f$:
\begin{theorem}\label{thm:partial_singular_values}
Let $\alpha\in (0,1/2)$ and $f$ be strictly monotone with $f(0)=0$. There exists a constant $\epsilon>0$ such that for all sufficiently large $n$ and $k\leq \alpha n$, any data stream algorithm which outputs a $(1+\epsilon)$-approximation to $\sum_{i=1}^k f(\sigma_i(A))$ of an $n \times n$ matrix $A$ 
requires $\Omega(n^{1+\Theta(1/\ln\alpha)})$ space. This holds even if nnz$(A) = O(n)$.  
\end{theorem}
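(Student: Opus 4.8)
The plan is to reduce from the Boolean Hidden Hypermatching problem $\mathrm{BHH}^n_t$ with hyperedge size $t=\Theta(\log(1/\alpha))$. Recall that in this one-way two-party problem Alice holds $x\in\{0,1\}^n$, Bob holds a $t$-hypermatching $M$ on $[n]$ together with $w\in\{0,1\}^{|M|}$, and under the promise that either $\bigoplus_{i\in e}x_i=w_e$ for \emph{every} edge $e\in M$, or $\bigoplus_{i\in e}x_i=w_e\oplus 1$ for every $e$, Bob must decide which case holds; this requires $\Omega(n^{1-1/t})$ bits of one-way communication. A one-pass streaming algorithm using $S$ bits of space gives such a protocol using $O(S)$ bits (Alice feeds her updates to the algorithm and sends the memory contents to Bob, who feeds his updates and reads off the answer), and a $P$-pass algorithm gives $O(PS)$ bits; so it suffices to build, from a $\mathrm{BHH}^n_t$ instance, a sparse matrix $A$ of dimension $\Theta(n)$ together with an index $k\le\alpha n$ so that $\sum_{i=1}^k f(\sigma_i(A))$ takes two values differing by a fixed constant factor in the two cases of the promise.

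For the encoding I would fix, for each hyperedge $e\in M$, a gadget: a matrix of constant size $G=G(\alpha)$ whose rows and columns include the $t$ coordinates of $e$ as well as $O(1)$ auxiliary coordinates. Alice writes entries recording $x_i$ for each $i\in e$ into her part of the gadget, and Bob---who knows $M$ and $w_e$---writes the remaining ``wiring'' entries. The gadget will be designed so that all of its singular values lie in $\{0,\lambda\}$ for a fixed $\lambda>0$, no matter how $x|_e$ is set, while its rank equals $h_0$ or $h_1$ with $h_1\ge(1+\epsilon_0)h_0$ according to whether the bit $\beta_e:=\bigoplus_{i\in e}x_i\oplus w_e$ is $0$ or $1$. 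Laying the gadgets out block-diagonally yields $A=A_{\mathrm{Alice}}+A_{\mathrm{Bob}}$ in which every row and column has $O(1)$ nonzeros, so $\mathrm{nnz}(A)=O(n)$. Under the $\mathrm{BHH}$ promise $\beta_e$ is the \emph{same} for all $e$, so $\rk(A)$ equals $|M|\,h_0$ in one case and $|M|\,h_1$ in the other, and every remaining singular value is $0$. Calibrating the gadget so that its ``heavy fraction'' $h_0/G$ is at most $\alpha/(1+\epsilon_0)$ lets me take $k:=\lfloor(1+\epsilon_0/2)\,|M|\,h_0\rfloor$, so that $|M|\,h_0<k<|M|\,h_1$ and $k\le\alpha n$. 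Since $f$ is strictly monotone with $f(0)=0$ I may assume $\lambda>0$, so $f(\lambda)\neq 0$; because every nonzero singular value of $A$ equals $\lambda$, we get $\sum_{i=1}^k f(\sigma_i(A))=f(\lambda)\cdot\min\{k,\rk(A)\}$, which is $f(\lambda)\,|M|\,h_0$ in the first case and $(1+\epsilon_0/2)\,f(\lambda)\,|M|\,h_0$ in the second. These differ by the fixed factor $1+\epsilon_0/2$, so for a suitable constant $\epsilon>0$ a $(1+\epsilon)$-approximation distinguishes the two cases; with the reduction this gives $S=\Omega(n^{1-1/t})=\Omega(n^{1+\Theta(1/\ln\alpha)})$.

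The hard part will be the gadget construction itself: I need a constant-size sparse gadget whose spectrum is ``clean'', i.e.\ entirely in $\{0,\lambda\}$, robustly over all $2^t$ assignments to the hyperedge bits, yet whose rank nonetheless jumps by a $(1+\epsilon_0)$ factor with the single parity bit $\beta_e$, and whose heavy fraction is $\Theta(\alpha)$. It is precisely this last requirement, together with the constraint that the gadget---and hence $t$---have size $O(1)$ so that $\mathrm{nnz}(A)=O(n)$, that pins down $t=\Theta(\log(1/\alpha))$ and hence the exponent $1+\Theta(1/\ln\alpha)$; getting the constants inside the $\Theta(\cdot)$ right, and using a sufficiently robust (noisy/gap) variant of $\mathrm{BHH}^n_t$ so the reduction tolerates the approximation slack, are the remaining bookkeeping steps. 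This construction is a localization to the top $k$ singular values of the hard-instance analysis underlying Theorem~\ref{thm:prior}.
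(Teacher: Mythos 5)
Your proposal identifies the right source of hardness (a $\mathrm{BHH}$ reduction with hyperedge size $t = \Theta(\log(1/\alpha))$, block-diagonal layout, one-pass communication argument), but it leaves a genuine gap at exactly the point you flag as ``the hard part,'' and the design criterion you impose on the gadget would in any case fail to deliver the full theorem.

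First, the gadget you call for --- constant size, sparse, all nonzero singular values equal to a fixed $\lambda$ for \emph{every} assignment of $x|_e$, with rank depending only on the parity $\beta_e$ --- is not constructed and it is far from clear that it exists. This is not mere bookkeeping: the paper explicitly abandons the Tutte matrix (whose rank does encode parity cleanly) because its singular values are intractable, and the gadget $M_{m,q}$ it adopts instead has a decidedly unclean spectrum, with singular values $\sqrt{\gamma}$ (with multiplicity $q-1$), $\sqrt{r_1(q)}$, and $\sqrt{r_2(q)}$ all depending on $q$. Demanding a $\{0,\lambda\}$ spectrum simultaneously with a parity-controlled rank is a strong rigidity requirement, and you give no reason to believe it is satisfiable.

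Second, even granting the gadget, your scheme proves a lower bound only for the one value $k \approx (1+\epsilon_0/2)\,|M|h_0$; the theorem is quantified over \emph{all} $k \le \alpha n$. Your analysis $\sum_{i=1}^k f(\sigma_i) = f(\lambda)\min\{k,\rk(A)\}$ gives \emph{no gap at all} for any $k \le |M|h_0$: in both cases the top $k$ singular values are all $\lambda$, so the two sums coincide. You cannot rescue this by shrinking the number of gadgets or enlarging them, since small $k$ would force $\Omega(n)$-sized gadgets and destroy the sparsity and the hyperedge-size bound. The paper avoids this problem precisely by making the \emph{values} of the top singular values, not the rank, the discriminator: it keeps the Schatten-norm gadget, observes that the largest possible block size is $q=m$ in the even case but $q=m-1$ in the odd case, notes that a constant fraction $\Theta(N/(m2^m))$ of blocks attain this maximum with high probability (Chernoff over independent groups), and then uses $r_1(m)>r_1(m-1)$ together with strict monotonicity of $f$. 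Because the top $k$ singular values are then $\sqrt{r_1(m)}$ versus $\sqrt{r_1(m-1)}$ for \emph{every} $k \le \alpha N$, the gap $k\bigl(f(\sqrt{r_1(m)})-f(\sqrt{r_1(m-1)})\bigr)$ is proportional to $k$, uniformly. That mechanism --- the top singular value itself shifts with the promise, rather than a rank threshold --- is what you are missing.
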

We summarize prior work on Schatten $p$-norms and Ky-Fan $k$-norms and its relation to our results 
in Table~\ref{tab:results}. The previous bounds for Ky-Fan norms come from planting a hard instances of the set disjointness communication 
problem on the diagonal
of a diagonal matrix
(where each item is copied $k$ times) \cite{ks92,r92}, or from a Schatten $1$-lower bound on $k \times k$ matrices padded with zeros \cite{AKR15}. 

\begin{table*}[t!]
\vspace{2mm}
\centering{
\begin{tabular}{|c|c|c|c|}
\hline
& & \multicolumn{2}{|c|}{Space complexity in bits} \\
\cline{3-4}
& & Previous lower bounds  & Our lower bounds \\
\hline
\multirow{5}{*}{\parbox[c][][c]{1.25cm}{Schatten \\
$p$-norm}} & $p \in (2,\infty) \cap 2\Z$ & $n^{1-2/p}$ \cite{g09,j09} & \\
\cline{2-4}
& $p \in (2,\infty)\setminus 2\Z$ & $n^{1-2/p}$ \cite{g09,j09} & $n^{1-g(\epsilon)}$\\
\cline{2-4}
& $p \in [1,2)$ & $\frac{n^{1/p-1/2}}{\log n}$ \cite{AKR15}  & $n^{1-g(\epsilon)}$\\
\cline{2-4}
& $p \in (0,1)$ & $\log n$ \cite{knw10} & $n^{1-g(\epsilon)}$\\
\cline{2-4}
& $p = 0$ & $n^{1-g(\epsilon)}$ \cite{BS15} & \\
\hline
\multicolumn{2}{|c|}{Ky-Fan $k$-norm} & $\max\{\frac{n}{k}, \frac{k^{1/2}}{\log k}\}$ \cite{bjks04,AKR15}& $n^{1-g(\epsilon)}$ (any $k$)\\
\hline
\end{tabular}}
\caption{\small{A summary of existing and new lower bounds for $(1+\epsilon)$-approximating Schatten $p$-norms and Ky-Fan $k$-norms, where $\epsilon$ is an arbitrarily small constant. 
The $\Omega$-notation is suppressed. The function $g(\epsilon)\to 0$ as $\epsilon\to 0$ and could depend on the parameters $p$ or $k$ and be different in different rows. We show that the lower bound $n^{1-2/p}$ is tight up to log factors by providing a new upper bound for even integers $p$ and sparse matrices. For even integers we also present a new proof of an $n^{1-2/p}$ lower bound in which
all entries of the matrix are bounded by $O(1)$.}}
\label{tab:results}
\end{table*}
The best previous lower bound for estimating the Schatten $p$-norm up to an arbitrarily 
small constant factor 
for $p \geq 2$ was $\Omega(n^{1-2/p})$, which
is the same for vector $p$-norms. In \cite{LNW14}, an algorithm for {\it even integers} $p \geq 2$
was given, and it works in the data stream model using $O(n^{2-4/p})$ bits of space. See also 
\cite{AN13} for finding large eigenvalues, which can be viewed as an additive approximation
to the case $p = \infty$. For $p \in [1,2)$, the lower bound was 
$\Omega(\frac{n^{1/p-1/2}}{\log n})$ \cite{AKR15}. Their approach is based on non-embeddability, 
and the best lower bound obtainable via this approach is $\Omega(n^{1/p-1/2})$,
since the identity map is an embedding of the Schatten $p$-norm into the Schatten-$2$ norm
with $n^{1/p-1/2}$ distortion, and the latter can be sketched with $O(\log n)$ bits; further
it is unknown if the lower bound of \cite{AKR15} holds for sparse matrices \cite{R15}. 
For $p \in (0,1)$, which is not a norm but still a well-defined quantity, 
the prior bound is only $\Omega(\log n)$, which follows from lower bounds
for $p$-norms of vectors. For $p = 0$, an $\Omega(n^{1-g(\eps)})$ lower bound
was shown for $(1+\eps)$-approximation \cite{BS15}. 
We note that lower bounds for Schatten-$p$ norms in the sketching model,
as given in \cite{LNW14}, do not apply to the streaming model, even given work 
which characterizes ``turnstile'' streaming
algorithms as linear sketches\footnote{In short,
in the sketching model one has a matrix $S$ and one distinguishes $S \cdot X$ from $S \cdot Y$ where $X, Y$ are vectors
(or vectorized matrices) with $X \sim \mu_1$ and $Y \sim \mu_2$ for distributions $\mu_1$ and $\mu_2$. One argues if
$S$ has too few rows, then $S \cdot X$ and $S \cdot Y$ have small statistical distance, but such a statement is not 
true if we first discretize $X$ and $Y$.} \cite{lnw14b}. 

One feature of previous work is that it rules out constant factor approximation for 
a large constant
factor, whereas our work focuses on small constant factor approximation. For vector
norms, the asymptotic complexity in the two cases is the same \cite{knw10,bksv14} 
or the same up to a logarithmic
factor \cite{g12,lw13}. Given the many motivations and extensive work on obtaining $(1+\eta)$-approximation for
vector norms for arbitrarily small $\eta$ \cite{iw03,w04,cm05,gc07,nw10,knw10,knpw11,g12,wz12,lw13}, 
we do not view this as a significant
shortcoming. Nevertheless, this is an interesting open question, which could exhibit another
difference between matrix and vector norms. 

Although Theorem~\ref{thm:prior} makes significant progress on Schatten $p$-norms,
and is nearly optimal for sparse matrices (i.e., matrices with $O(1)$ non-zero entries per row and column), 
for dense matrices our bounds are off by a quadratic factor. That is, for $p$ not an even integer, 
we achieve a lower bound which is almost $n$ bits of space, while the upper bound is a trivial $O(n^2)$ words
of space used to store the matrix.
Therefore, $\tilde{\Theta}(n^{1-2/p})$ is an upper and lower
bound for sparse matrices, while for dense matrices the best upper bound is $O(n^{2-4/p})$ given in \cite{LNW14}.
Thus, in both cases ($p$ an even integer, or not) the upper bound is the square of the current lower bound. Resolving this gap is an
intriguing open question. 

The Schatten $p$-norms capture a wide range of possibilities of growths of more general functions, and we are able
to obtain lower bound for a general class of functions by considering their growth near $0$ (by scaling down our
hard instance) or their growth for large inputs (by scaling up our hard instance). If in either case the function
``behaves'' like a Schatten $p$-norm (up to low order terms), then we can apply our lower bounds for Schatten $p$-norms
to obtain lower bounds for the function. 

\subsubsection*{Technical Overview} 
\noindent\textbf{Lower Bound.} The starting point of our work is \cite{BS15}, which showed 
an $\Omega(n^{1-g(\epsilon)})$ lower bound
for estimating the rank of $A$ up to a $(1+\epsilon)$-factor by using the fact that the rank 
of the Tutte matrix equals twice the size of the maximum matching of the corresponding graph, 
and there are 
lower bounds for estimating the maximum matching size in a stream \cite{vy11}. 

This suggests that lower bounds for approximating matching size could be used more generally for
establishing lower bounds for estimating Schatten $p$-norms. We abandon the use of the Tutte matrix, 
as an analysis of its singular values turns out to be quite involved. Instead, we devise simpler families
of hard matrices which are related to hard graphs for estimating matching sizes. 
Our matrices
are block diagonal in which each block has constant size (depending on $\epsilon$). 
For functions $f(x) = |x|^p$ for $p>0$ not an even integer, 
we show a constant-factor multiplicative gap in the value of $\sum_i f(\sigma_i)$ in the case where
the input matrix is (1) block diagonal in which each block is 
the concatenation of an all-1s matrix and a diagonal matrix with an {\it even number}
of 1s versus (2) block diagonal in which each block is the concatenation of an all-1s matrix and a diagonal matrix
with an {\it odd number} of ones. We call these Case 1 and Case 2. We also refer to the 1s on a diagonal matrix
inside a block as {\it tentacles}. 

The analysis proceeds by looking at a block in which the number of tentacles follows a binomial distribution.
We show that the expected value of $\sum_i f(\sigma_i)$ restricted to a block given that the number of tentacles is even,
differs by a constant factor from the expected value of $\sum_i f(\sigma_i)$ restricted to a block given that the number
of tentacles is odd. Using the hard distributions for matching \cite{bjk04,GKKRW07,vy11}, 
we can group the blocks into independent groups of four matrices and then apply 
a Chernoff bound across the groups 
to conclude that with high probability, $\sum_i f(\sigma_i)$ of the entire matrix in Case 1 differs by a $(1+\epsilon)$-factor from $\sum_i f(\sigma_i)$ of the entire matrix in Case 2. 
This is formalized in Theorem \ref{thm:gap_expectation}. 

The number $k$ of tentacles is subject to a binomial distribution supported on even or odd numbers in 
Case 1 or 2 respectively. Proving a ``gap'' in expectation for a random
even value of $k$ in a block versus a random odd value of $k$ in a block is intractable if the expressions for
the singular values are sufficiently complicated. For example, the singular values of the adjacency matrix 
of the instance in \cite{BS15} for $p = 0$ involve roots of a cubic equation, which poses a great obstacle. 
Instead our hard instance has the advantage that the singular values $r(k)$ are the square roots of the roots 
of a quadratic equation, which are more tractable. The function value $f(r(k))$, viewed as a 
function of the number of tentacles $k$, can be expanded into a power series 
$f(r(k)) = \sum_{s=0}^\infty c_s k^s$, and the difference in expectation in the even and odd cases subject to a
binomial distribution is 
\[
\sum_{k=0}^m (-1)^k\binom{m}{k} f(r(k)) = \sum_{s=0}^\infty c_s \sum_{k=0}^m (-1)^i\binom{m}{k} k^s\\
= \sum_{s=0}^\infty c_s (-1)^m m! \stir{s}{m}\\
= (-1)^m m! \sum_{s=m}^\infty c_s \stir{s}{m}
\]
where $\stir{s}{m}$ is the Stirling number of the second kind and $\stir{s}{m}=0$ for $s<m$, and where
the second equality is a combinatorial identity. 
The problem reduces to analyzing the last series of $s$. 
For $f(x)=|x|^p$ ($p > 0$ not an even integer), 
with our choice of hard instance which we can parameterize by a small constant $\gamma > 0$, 
the problem reduces to showing that $c_s = c_s(\gamma) > 0$ 
for a small $\gamma$, and for all large $s$. However, $c_s(\gamma)$ is complicated and admits the form of 
a hypergeometric polynomial, which can be transformed to a different hypergeometric function $c_s'(\gamma)$ of 
simpler parameters. It has a standard infinite series expansion 
$c_s'(\gamma) = 1 + \sum_{n=0}^\infty a_n \gamma^n$. 
By analyzing the series coefficients, the infinite series can be split into three parts: a head part, a middle part,
and a tail; each can be analyzed separately. 
Roughly speaking, the head term is 
alternating and decreasing and thus dominated by its first term, the tail term has geometrically decreasing 
terms and is also dominated by its first term, which is much smaller than the head, and finally 
the middle term is dominated by the head term. 

The result for $f(x)=x^p$ 
generalizes to functions which are asymptotically $x^p$ near $0$ or infinity, by first scaling the input matrix by a small or a large constant.

\vskip 1ex

\noindent\textbf{A simple $\sqrt{n}$ lower bound.} To illustrate our ideas, here we give a very simple
proof of an $\Omega(\sqrt{n})$ lower bound for any real $p \neq 2$. Consider the three possible blocks
\[ A = \begin{pmatrix}
0 & 1\\
1 & 0
\end{pmatrix},\  
B = \begin{pmatrix}
 1 & 1\\
1 & 0
\end{pmatrix},\ 
 C = 
\begin{pmatrix}
1 & 1\\
1 & 1
\end{pmatrix}.\]
A simple computation shows the two singular values are $(1, 1)$ for $A$, are 
$((\sqrt{5}+1)/2, (\sqrt{5}-1)/2)$ for $B$, and $(2, 0)$ for $C$. Our reduction above 
from the Boolean Hidden Matching Problem implies for Schatten $p$-norm estimation, we
get an $\Omega(\sqrt{n})$ lower bound for $c$-approximation for a small enough constant $c > 1$,
provided 
$$\frac{1}{2} \cdot (1^p + 1^p) + \frac{1}{2} \cdot 2^p 
\neq \left (\frac{\sqrt{5}+1}{2} \right )^p + \left(\frac{\sqrt{5}-1}{2} \right )^p ,$$
which holds for any real $p \neq 2$.
\vskip 1ex
\noindent\textbf{Upper Bound.} We illustrate the ideas of our upper bound with $p=4$, in which case, 
$\|A\|_4^4 = \sum_{i,j}|\langle a_i,a_j\rangle|^2$, where $a_i$ is the $i$-th row of $A$. 
Suppose for the moment that every row $a_i$ had the same norm $\alpha = \Theta(1)$. It would
then be easy to estimate $n \alpha^4 = \sum_i |\langle a_i, a_i \rangle|^2 = \Theta(n)$ 
just by looking at the norm of a single row. Moreover,
by Cauchy-Schwarz, $\alpha^4 = \|a_i\|^4 \geq |\langle a_i, a_j \rangle|^2$ 
for all $j \neq i$. 
Therefore in order for
$\sum_{i \neq j} |\langle a_i, a_j \rangle|^2$ to ``contribute'' to $\|A\|_4^4$, its value must be $\Omega(n \alpha^4)$,
but since each summand is upper-bounded by $\alpha^4$, there must be $\Omega(n)$ non-zero terms. 
It follows that if we sample $O(\sqrt{n})$ rows uniformly
and in their entirety, by looking at all $O(n)$ pairs $|\langle a_i, a_j \rangle|^2$ for sampled rows $a_i$ and $a_j$,
we shall obtain $\Omega(1)$ samples of the ``contributing'' pairs $i \neq j$. Using that each row and column
has $O(1)$ non-zero entries, 
this can be shown to be enough to
obtain a good estimate to $\|A\|_4^4$ and it uses $O(\sqrt{n} \log n)$ bits of space. 

In the general situation where the rows of $A$ have differing norms, we need to sample them proportional to 
their squared $2$-norm. Also, it is not possible to obtain the sampled rows $a_i$ in their entirety, but we can
obtain noisy approximations to them. We achieve this by adapting known algorithms for $\ell_2$-sampling in a data
stream \cite{MW10,ako11,JST11}
, and using our conditions that each row and each column of $A$ have $O(1)$ non-zero entries. 
Given rows $a_{i}$ and $a_{j}$, one can verify that 
$|\langle a_{i},a_{j}\rangle|^2 \frac{\|A\|_F^4}{\|a_{i}\|_2^2 \|a_{j}\|_2^2}$ is an unbiased estimator of 
$\|A\|_4^4$, and in fact, this is nothing other than importance sampling. It turns out
that also in this more general case, 
only $O(\sqrt{n})$ rows need to be sampled, and we can look at all $O(n)$ pairs of inner products between such
rows. 

\section{Preliminaries}\label{sec:prelim}

\paragraph{Notation}
Let $\R^{n\times d}$ be the set of $n\times d$ real matrices. 
We write $X\sim \mathcal{D}$ for a random variable $X$ subject to a probability distribution $\mathcal{D}$. Denote the uniform distribution on a set $S$ (if it exists) by $\Unif(S)$.

We write $f\gtrsim g$ (resp.\ $f\lesssim g$) if there exists a constant $C>0$ such that $f\geq Cg$ (resp.\ $f\leq Cg$). Also we write $f\simeq g$ if there exist constants $C_1 > C_2 > 0$ such that $C_2 g \leq f \leq C_1 g$. For the notations hiding constants, such as $\Omega(\cdot)$, $O(\cdot)$, $\lesssim$, $\gtrsim$, we may add subscripts to highlight the dependence, for example, $\Omega_a(\cdot)$, $O_a(\cdot)$, $\lesssim_a$, $\gtrsim_a$ mean that the hidden constant depends on $a$.

\paragraph{Singular values and matrix norms}
Consider a matrix $A\in \R^{n \times n}$. Then $A^TA$ is a positive semi-definite matrix. The eigenvalues of $\sqrt{A^TA}$ are called the singular values of $A$, denoted by $\sigma_1(A)\geq \sigma_2(A)\geq \cdots \geq\sigma_n(A)$ in decreasing order. Let $r=\rk(A)$. It is clear that $\sigma_{r+1}(A) = \cdots = \sigma_n(A) = 0$. 
Define $
\|A\|_p = (\sum_{i=1}^r (\sigma_i(A))^p)^{1/p}
$ ($p>0$).
For $p\geq 1$, it is a norm over $\R^{n\times d}$, called the $p$-th \textit{Schatten norm}, over $\R^{n\times n}$ for $p\geq 1$. When $p=1$, it is also called the trace norm or nuclear norm. When $p=2$, it is exactly the Frobenius norm $\|A\|_F$. 
Let $\|A\|_{op}$ denote the operator norm of $A$ when treating $A$ as a linear operator from $\ell_2^n$ to $\ell_2^n$. It holds that $\lim_{p\to\infty} \|A\|_p = \sigma_1(A) = \|A\|_{op}$.

The Ky-Fan $k$-norm of $A$, denoted by $\|A\|_{F_k}$, is defined as the sum of the largest $k$ singular values: $\|A\|_{F_k} = \sum_{i=1}^k \sigma_i(A)$. Note that $\|A\|_{F_1} = \|A\|_{op}$ and $\|A\|_{F_k} = \|A\|_1$ for $k\geq r$.

\paragraph{Communication Complexity}
We shall use a problem called Boolean Hidden Hypermatching, denoted by $\textsc{BHH}_{t,n}^0$, defined in \cite{vy11}.
\begin{definition}
In the Boolean Hidden Hypermatching Problem $\textsc{BHH}_{t,n}$, Alice gets a Boolean vector $x \in \{0, 1\}^n$ with $n = 2rt$ for some integer $r$ and Bob gets a perfect
$t$-hyper-matching $M$ on the $n$ coordinates of $x$, i.e., each edge has exactly $t$ coordinates, and a binary string $w\in \{0,1\}^{n/t}$. Let $Mx$ denote the vector of length $n/t$ defined as $(\bigoplus_{1\leq i\leq t} x_{M_{1,i}}$, ..., $\bigoplus_{1\leq i\leq t} x_{M_{n/t,i}})$, where $\{(M_{j,1},\dots,M_{j,t})\}_{j=1}^{n/t}$ are edges of $M$. It is promised that either  $Mx \oplus w = \mathbf{1}^{n/t}$ or $Mx \oplus w = \mathbf{0}^{n/t}$. The problem is to return $1$ in the first case and $0$ otherwise.
\end{definition}
They proved that this problem has an $\Omega(n^{1-1/t})$ randomized one-way communication lower bound by proving a lower bound for deterministic protocols with respect to the hard distribution in which $x$ and $M$ are independent and respectively uniformly distributed, and $w=Mx$ with probability $1/2$ and $w=\overline{Mx}$ (bitwise negation of $Mx$) with probability $1/2$. In \cite{BS15}, Bury and Schwiegelshohn defined a version without $w$ and with the constraint that $w_H(x) = n/2$, for which they also showed an $\Omega(n^{1-1/t})$ lower bound. We shall use this version, with a slight modification.
\begin{definition}
In the \emph{Boolean Hidden Hyper-matching Problem} $\textsc{BHH}^0_{t,n}$, Alice gets a Boolean  vector
$x\in \{0,1\}^n$ with $n=4rt$ for some $r\in \N$ and even integer $t$ and $w_{H}(x) = n/2$, Bob gets a perfect $t$-hypermatching $M$ on the $n$ coordinates of $x$, i.e., each edge has exactly $t$ coordinates. We denote by $Mx$ the Boolean vector of length $n/t$ given by $\big(\bigoplus_{i=1}^t x_{M_{1,i}},\dots,\bigoplus_{i=1}^t x_{M_{n/t,i}}\big)$, where $\{(M_{j,1},\dots,M_{j,t})\}_{j=1}^{n/t}$ are the edges of $M$. It is promised that either $Mx = \mathbf{1}^{n/t}$ or $Mx = \mathbf{0}^{n/t}$. The problem is to return $1$ in the first case and $0$ otherwise.
\end{definition}
A slightly modified (yet remaining almost identical) proof as in \cite{BS15} shows that this problem also has an $\Omega(n^{1-1/t})$ randomized one-way communication lower bound. We include the proof here.
\begin{proof}
We reduce $\textsc{BHH}_{t,n}$ to $\textsc{BHH}^0_{t,2n}$. Let $x\in \{0,1\}^n$ with $n=2rt$ for some $r$, $M$ be a perfect $t$-hypermatching on the $n$
coordinates of $x$ and $x\in \{0,1\}^{w/t}$. Define $x' = \begin{pmatrix}x^T & \bar x^T\end{pmatrix}^T$ to be the concatenation of $x$ and $\bar x$ (bitwise negation of $x$). 

Let $\{x_1,\dots, x_t\}\in M$ be the $l$-th hyperedge of $M$. We include two hyperedges in $M'$ as follows. When $w_l=0$, include $\{x_1,\dots,x_t\}$ and $\{\overline{x_1},\overline{x_2},\dots,\overline{x_t}\}$ in $M$; when $w_l=1$, include $\{\overline{x_1},x_2,\dots,x_t\}$ and $\{x_1,\overline{x_2}\dots,\overline{x_t}\}$ in $M'$. The observation is that we flip an even number of bits in the case $w_l=0$ and an odd number of bits when $w_l=1$, and since $t$ is even, this does not change the parity of each set. Therefore $M'x' = 0^{2n}$ if $Mx+w=0^{n/2}$ and $M'x'=1^{2n}$ if $Mx+w=1^{n/2}$. The lower bound then follows from the lower bound for $\textsc{BHH}_{t,n}$.
\end{proof}
When $t$ is clear from context, we shorthand $\textsc{BHH}^0_{t,n}$ as $\textsc{BHH}^0_{n}$.

\paragraph{Special Functions} 
The gamma function $\Gamma(x)$ is defined as
\[
\Gamma(x) = \int_0^\infty x^{t-1}e^{-x}\,dx.
\]
For positive integer $n$ it holds that $\Gamma(n) = (n-1)!$. The definition above can be extended by analytic continuation to all complex numbers except non-positive integers.

The hypergeometric function ${}_pF_q(a_1,a_2,\dots,a_p;b_1,b_2,\dots,b_q;z)$ of $p$ upper parameters and $q$ lower parameters is defined as
\[
\pFq{p}{q}{a_1,a_2,\dots,a_p}{b_1,b_2,\dots,b_q}{z} = \sum_{n=0}^\infty \frac{(a_1)_n(a_2)_n\cdots(a_p)_n}{(b_1)_n(b_2)_n\cdots (b_q)_n} \cdot \frac{z^n}{n!},
\]
where
\[
(a)_n = \begin{cases}
			1, & n = 0;\\
			a(a+1)\cdots(a+n-1), & n > 0.
		\end{cases}	
\]
The parameters must be such that the denominator factors are never $0$. When $a_i = -n$ for some $1\leq i\leq p$ and non-negative integer $n$, the series above becomes terminating and the hypergeometric function is in fact a polynomial in $x$. 
\section{Schatten norms}\label{sec:schatten}
Let $D_{m,k}$ ($0\leq k\leq m$) be an $m\times m$ diagonal matrix with the first $k$ diagonal elements equal to $1$ and the remaining diagonal entries $0$, and let $\mathbf{1}_m$ be an $m$ dimensional vector full of $1$s. Define
\begin{equation}\label{eqn:asymmetric_M}
M_{m,k} = \begin{pmatrix}
				\mathbf{1}_m \mathbf{1}_m^T & 0\\
				\sqrt{\gamma} D_{m,k} & 0
		   \end{pmatrix},
\end{equation}
where $\gamma > 0$ is a constant (which may depend on $m$).

Our starting point is the following theorem. Let $m \geq 2$ and $p_m(k) = \binom{m}{k}/2^{m-1}$ for $0\leq k\leq m$. Let $\mathcal{E}(m)$ be the probability distribution on even integers $\{0,2,\dots,m\}$ with probability density function $p_m(k)$, and $\mathcal{O}(m)$ be the distribution on odd integers $\{1,3,\dots,m-1\}$ with density function $p_m(k)$. We say a function $f$ on square matrices is diagonally block-additive if $f(X)=f(X_1)+\cdots+f(X_s)$ for any block diagonal matrix $X$ with square diagonal blocks $X_1,\dots, X_s$. It is clear that $f(X) = \sum_i f(\sigma_i(X))$ is diagonally block-additive.

\begin{theorem}\label{thm:gap_expectation} Let $t$ be an even integer and $X\in \R^{N\times N}$, where $N$ is sufficiently large. Let $f$ be a function of square matrices that is diagonally block-additive. If there exists $m = m(t)$ such that
\begin{gather}
\E_{q\sim \mathcal{E}(t)} f(M_{m,q}) - \E_{q\sim \mathcal{O}(t)} f(M_{m,q})\neq 0, \label{eqn:gap_expectation}
\end{gather}
then there exists a constant $c=c(t)>0$ such that any streaming algorithm that approximates $f(X)$ within a factor $1\pm c$ with constant error probability must use $\Omega_t(N^{1-1/t})$ bits of space.
\end{theorem}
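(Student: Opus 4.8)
The plan is to reduce from the communication problem $\textsc{BHH}^0_{t,n}$, whose $\Omega(n^{1-1/t})$ one-way randomized lower bound is established earlier in the excerpt. Given an instance $(x,M)$ of $\textsc{BHH}^0_{t,n}$, I will build a block-diagonal matrix $X$ of dimension $N \simeq n \cdot m/t \simeq_t n$, with one block of the form $M_{m,q}$ per hyperedge of $M$, where $q$ is the number of $1$-coordinates of $x$ incident to that hyperedge among a designated subset of $t' \le t$ of the edge's coordinates (so $q$ ranges over $\{0,\dots,t'\}$). The construction must be arranged so that Alice, who knows $x$, can write down the ``$\mathbf{1}_m\mathbf{1}_m^T$'' parts and the deterministic part of the tentacle pattern in the stream, and Bob, who knows $M$, adds the remaining updates that pair up coordinates inside a block; together the stream realizes $X$. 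Then I run the assumed streaming algorithm for $f$, Alice sends its state to Bob, Bob finishes the stream and reads off the approximation to $f(X)$.

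The key point is the case analysis: under the promise $Mx = \mathbf{0}^{n/t}$ versus $Mx = \mathbf{1}^{n/t}$, the parity of the number of tentacles $q$ in each block is forced to be even in one case and odd in the other. Using the hard distribution for $\textsc{BHH}^0_{t,n}$ (in which $x$ is uniform subject to $w_H(x)=n/2$ and $M$ is uniform), each block independently has its tentacle count $q$ distributed (in the limit, up to lower-order corrections from sampling without replacement / the $w_H=n/2$ constraint) as $\mathcal{E}(t)$ in Case 1 and as $\mathcal{O}(t)$ in Case 2 — this is exactly why the binomial-type distributions $p_m(k)=\binom{m}{k}/2^{m-1}$ appear. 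By diagonal block-additivity, $f(X) = \sum_{\text{blocks}} f(M_{m,q})$, so $\E f(X)$ differs between the two cases by $(n/t)\cdot\big(\E_{q\sim\mathcal{E}(t)}f(M_{m,q}) - \E_{q\sim\mathcal{O}(t)}f(M_{m,q})\big)$, which by hypothesis \eqref{eqn:gap_expectation} is a nonzero quantity of magnitude $\Theta_t(n)$. A Chernoff/Hoeffding bound over the $\Theta(n/t)$ independent (or, after grouping into independent quadruples of blocks as sketched in the technical overview, $\Theta(n/t)$ independent groups) summands then shows $f(X)$ concentrates within a $(1\pm c')$ factor of its case-dependent mean with probability $1-o(1)$, for a constant $c'=c'(t)>0$. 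Choosing $c = c(t)$ a small enough constant multiple of the relative gap, a $(1\pm c)$-approximation to $f(X)$ distinguishes the two cases, hence solves $\textsc{BHH}^0_{t,n}$, so the algorithm's state must have $\Omega_t(n^{1-1/t}) = \Omega_t(N^{1-1/t})$ bits.

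There are two things to be careful about. First, the block value $f(M_{m,q})$ must be bounded (uniformly in the instance, i.e., by a constant depending only on $t$ and $m$) so that each summand in the Chernoff bound has bounded range; this follows since $m=m(t)$ is fixed and the entries of $M_{m,q}$ are $O(1)$, so $\sigma_i(M_{m,q})=O_t(1)$ and $f$ is evaluated only on a finite set of matrices. Second, and relatedly, I need $|\E f(X)|$ itself to be $\Omega_t(n)$ in at least one of the two cases (so that a multiplicative $(1\pm c)$ error is genuinely smaller than the additive gap); if both expectations were $o(n)$ one would instead compare $f(X)$ against, say, the larger of the two absolute values, but since $f(M_{m,q})$ is a fixed constant for each $q$ and the gap is nonzero, at least one of $\E_{\mathcal{E}(t)}f(M_{m,q})$, $\E_{\mathcal{O}(t)}f(M_{m,q})$ is nonzero, handling this. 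I expect the main obstacle to be the bookkeeping in the reduction itself: getting the block structure, the dimension accounting $N \simeq_t n$, and the split of stream updates between Alice and Bob to line up exactly so that the induced tentacle-count distribution per block is precisely $\mathcal{E}(t)$ or $\mathcal{O}(t)$ (rather than some perturbation of it), and verifying that the residual dependence between blocks coming from the global constraint $w_H(x)=n/2$ is weak enough that the concentration argument still goes through — this is where the grouping-into-fours device from the technical overview does its work.
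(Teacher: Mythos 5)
Your plan follows the paper's own strategy — reduce from $\textsc{BHH}^0_{t,n}$, build a block-diagonal matrix with one $M_{m,q}$ block per hyperedge whose tentacle count $q$ has parity fixed by the promise, and use concentration across blocks to turn the expectation gap \eqref{eqn:gap_expectation} into a detectable multiplicative gap — but the two steps you defer ``to be arranged'' are precisely where the proof lives, and as you sketch them they would not go through. The Alice/Bob split is backwards: you have Alice writing the $\mathbf{1}_m\mathbf{1}_m^T$ clique blocks, yet which entries form a clique is dictated by Bob's matching $M$, which Alice never sees. The split that works is the reverse: Alice writes \emph{only} the tentacle entries, placing $\sqrt{\gamma}$ at a position $(a(i),b(i))$ fixed by the coordinate index $i$ alone, and Bob, after receiving the sketch state, adds the all-ones entries of the clique for hyperedge $\{i_1,\dots,i_t\}$ on the column set $\{b(i_1),\dots,b(i_t)\}$ together with $m-t$ fresh indices of his choosing. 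Since singular values are invariant under row/column permutation, the resulting matrix is block-diagonal with blocks $M_{m,q_1},\dots,M_{m,q_{n/t}}$. Relatedly, all $t$ coordinates of each hyperedge carry a tentacle slot (so $m\ge t$ and $q\in\{0,\dots,t\}$); restricting to $t'\le t$ of them, as you suggest, would change the block-count distribution away from $\mathcal{E}(t),\mathcal{O}(t)$ and decouple the reduction from the hypothesis.

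The concentration step is the second genuine gap, and ``up to lower-order corrections from sampling without replacement'' does not close it. The hard distribution for $\textsc{BHH}^0_{t,n}$ is not ``$x$ uniform conditioned on $w_H(x)=n/2$''; it is, by construction, the image of the hard distribution for $\textsc{BHH}_{t,n}$ (in which $x$ is uniform on $\{0,1\}^n$ with \emph{no} weight constraint) under the $x\mapsto(x^T,\bar{x}^T)^T$ reduction. Tracing through that reduction shows the block counts $q_1,\dots,q_{n/t}$ split into $n/(2t)$ \emph{exactly} independent pairs $(q,t-q)$, with $q$ \emph{exactly} distributed as $\mathcal{E}(t)$ (even case) or $\mathcal{O}(t)$ (odd case). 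The Chernoff bound is then applied across these independent groups, each of whose contributions is bounded by a constant depending only on $t$ and $m$, and whose group means in the two cases differ by a nonzero multiple of the gap \eqref{eqn:gap_expectation} (here one uses $\binom{t}{q}=\binom{t}{t-q}$ to identify $\E_{q\sim\mathcal{E}(t)}f(M_{m,t-q})$ with $\E_{q\sim\mathcal{E}(t)}f(M_{m,q})$, and likewise for $\mathcal{O}$). Making this explicit, rather than gesturing at a ``grouping device,'' is the substance of the paper's proof.
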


\begin{proof}
We reduce the problem from the $\textsc{BHH}_{t,n}^0$ problem. Let $n = Nt/(2m)$. For the input of the problem $\text{BHH}_{t,n}^0$, construct a graph $G$ as follows. The graph contains $n$ vertices $v_1,\dots,v_n$, together with $n/t$ cliques of size $m$, together with edges connecting $v_i$'s with the cliques according to Alice's input $x$. These latter edges are called `tentacles'. In the $j$-th clique of size $m$, we fix $t$ vertices, denoted by $w_{j,1},\dots,w_{j,t}$. Whenever $x_i = 1$ for $i = (j-1)(n/t) + r$, we join $v_i$ and $w_{j,r}$ in the graph $G$. 

Let $\mathcal{M}$ be constructed from $G$ as follows: both the rows and columns are indexed by nodes of $G$. For every pair $w,v$ of clique nodes in $G$, let $\mathcal{M}_{w,v} = 1$, where we allow $w=v$. For every `tentacle' $(u,w)$, where $w$ is a clique node, let $\mathcal{M}(u,w) = \sqrt{\gamma}$. Then $\mathcal{M}$ is an $N\times N$ block diagonal matrix of the following form after permuting the rows and columns
\begin{equation}\label{eqn:tutte_matrix_big}
\mathcal{M}_{n,m,t} = \begin{pmatrix}
						 M_{m,q_1} &           &        & \\
						 		    & M_{m,q_2} &        & \\
						 		    &           & \ddots & \\
						 		    &	         &        & M_{m,q_{n/t}}
					  \end{pmatrix},
\end{equation}
where $q_1,\dots,q_{n/t}$ satisfy the constraint that $q_1 + q_2 + \cdots + q_{n/t} = n/2$ and $0\leq q_i \leq t$ for all $i$. It holds that $f(\mathcal{M}_{n,m,t}) = \sum_i f(M_{m,q_i})$.

Alice and Bob will run the following protocol. Alice keeps adding the matrix entries corresponding to `tentacles' while running the algorithm for estimating $f(\mathcal{M})$. Then she sends the state of the algorithm to Bob, who will continue running the algorithm while adding the entries corresponding to the cliques defined by the matching he owns. At the end, Bob outputs which case the input of $\textsc{BHH}_n^0$ belongs to based upon the final state of the algorithm.

From the reduction for $\textsc{BHH}_{t,n}^0$ and the hard distribution of $\textsc{BHH}_{t,n}$, the hard distribution of $\textsc{BHH}_{t,n}^0$ exhibits the following pattern: $q_1,\dots,q_{n/t}$ can be divided into $n/(2t)$ groups. Each group contains two $q_i$'s and has the form $(q,t-q)$, where $q$ is subject to distribution $\mathcal{E}(t)$ or $\mathcal{O}(t)$ depending on the promise. Furthermore, the $q$'s across the $n/(2t)$ groups are independent. The two cases to distinguish are that all $q_i$'s are even (referred to as the \textit{even case}) and that all $q_i$'s are odd (referred to as the \textit{odd case}).

For notational simplicity, let $F_q = f(M_{m,q})$. Suppose that the gap in \eqref{eqn:gap_expectation} is positive. Let $A = \E_{q\sim \mathcal{E}(t)}2(F_q+F_{t-q})$ and $B = 
\E_{q\sim \mathcal{O}(t)} 2(F_q+F_{t-q})$, then $A-B > 0$.
Summing up $(n/2t)$ independent groups and applying a Chernoff bound, with high probability, $f(\mathcal{M})\geq (1-\delta)\frac{n}{2t}A$ in the even case and $f(\mathcal{M})\leq (1+\delta)\frac{n}{2t}A$, where $\delta$ is a small constant to be determined. If we can approximate $f(\mathcal{M})$ up to a $(1\pm c)$-factor, say $X$, then with constant probability, in the even case we have an estimate $X\geq (1-c)(1-\delta)\frac{n}{2t}A$ and in the odd case $X\leq (1+c)(1+\delta)\frac{n}{2t}A$. Choose $\delta=c$ and choose $c < \frac{B-A}{3(B+A)}$. Then there will be a gap between the estimates in the two cases. 
The conclusion follows from the lower bound for the $\textsc{BHH}_n^0$ problem.

A similar argument works when \eqref{eqn:gap_expectation} is negative.
\end{proof}

Our main theorem in this section is the following, a restatement of Theorem~\ref{thm:prior} advertised in the introduction.
\begin{theorem}\label{thm:schatten}\label{THM:SCHATTEN}
Let $p \in (0,\infty)\setminus2\Z$. For every even integer $t$, there exists a constant $c = c(t) > 0$ such that any algorithm that approximates $\|X\|_p^p$ within a factor $1\pm c$ with constant probability in the streaming model must use $\Omega_t(N^{1-1/t})$ bits of space.
\end{theorem}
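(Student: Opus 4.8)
This statement is simply Theorem~\ref{thm:gap_expectation} specialized to $f(X)=\sum_i \sigma_i(X)^p$, so the whole task is to verify the expectation gap \eqref{eqn:gap_expectation} for this $f$ at a convenient choice of $m=m(t)$ and $\gamma$. The first step is to diagonalize $M_{m,k}$. Its last $m$ columns vanish, so its nonzero singular values are the square roots of the nonzero eigenvalues of $(\mathbf 1_m\mathbf 1_m^T)^2+\gamma D_{m,k}=m\,\mathbf 1_m\mathbf 1_m^T+\gamma D_{m,k}$. Decomposing $\R^m$ into vectors supported on the $k$ ``tentacled'' coordinates and summing to zero, vectors supported on the other coordinates and summing to zero, and the span of the two indicator vectors, one reads off eigenvalues $\gamma$ (multiplicity $k-1$), $0$ (multiplicity $m-k-1$), and a pair $\lambda_\pm(k)$ that are the roots of $\lambda^2-(m^2+\gamma)\lambda+m(m-k)\gamma=0$. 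The decisive feature is that $\lambda_+(k)+\lambda_-(k)=m^2+\gamma$ is \emph{independent of $k$}, while $\lambda_+(k)\lambda_-(k)=m(m-k)\gamma$ is affine in $k$. Writing $F_k:=\|M_{m,k}\|_p^p$, this gives (also at the degenerate values $k\in\{0,m\}$) $F_k=(k-1)\gamma^{p/2}+g(k)$ with $g(k):=\lambda_+(k)^{p/2}+\lambda_-(k)^{p/2}$.

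The second step turns the gap into an analyzable series. The left side of \eqref{eqn:gap_expectation} equals $2^{-(t-1)}\sum_{k=0}^t(-1)^k\binom tk F_k$, and since $\sum_k(-1)^k\binom tk=\sum_k(-1)^k\binom tk k=0$ for $t\ge 2$, the ``tentacle'' term $(k-1)\gamma^{p/2}$ drops out, leaving $2^{-(t-1)}\sum_{k=0}^t(-1)^k\binom tk g(k)$. Choosing $m=m(t)>t$ keeps $\lambda_-(k)>0$ on $\{0,\dots,t\}$, so $g$ is real-analytic there with a power series $g(k)=\sum_{s\ge 0}c_sk^s$ whose radius of convergence ($=m$) exceeds $t$. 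Using the finite-difference identity $\sum_{k=0}^t(-1)^k\binom tk k^s=(-1)^t\,t!\,\stir{s}{t}$ and interchanging the (absolutely convergent) sums, the gap becomes $\tfrac{(-1)^t t!}{2^{t-1}}\sum_{s\ge t}c_s\,\stir{s}{t}$. Since $\stir{s}{t}\ge 0$ and $\stir{t}{t}=1$, it suffices to show that all $c_s$ with $s\ge t$ have one common nonzero sign.

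The third step — the heart of the matter — computes that sign. Writing $\lambda_\pm(k)=\tfrac12 S(1\pm\beta(k))$ with $S=m^2+\gamma$ gives $\beta(k)^2=1-4\lambda_+\lambda_-/S^2=\beta_0^2+\eta k$ for constants $\beta_0\in[0,1)$, $\eta>0$, hence $g(k)=2(S/2)^{p/2}\sum_{n\ge 0}\binom{p/2}{2n}(\beta_0^2+\eta k)^n$, and collecting the coefficient of $k^s$ yields $c_s=2(S/2)^{p/2}\eta^s\sum_{\ell\ge 0}\binom{p/2}{2s+2\ell}\binom{s+\ell}{s}\beta_0^{2\ell}$. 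Every factor is positive except possibly the generalized binomial coefficient $\binom{p/2}{N}$ with $N=2s+2\ell$ even; since $p/2\notin\Z$, the sign of $\binom{p/2}{N}$ for even $N\ge\lceil p/2\rceil$ equals the constant $(-1)^{\lceil p/2\rceil}$. Thus once $s\ge s_0(p):=\lceil\lceil p/2\rceil/2\rceil$, every term defining $c_s$ carries the sign $(-1)^{\lceil p/2\rceil}$ and the $\ell=0$ term $\binom{p/2}{2s}$ is nonzero, so $c_s\ne 0$ with that sign. Hence for every even $t\ge s_0(p)$ the sum $\sum_{s\ge t}c_s\stir{s}{t}$ is a sum of nonzero terms all of one sign, so \eqref{eqn:gap_expectation} holds and Theorem~\ref{thm:gap_expectation} yields the $\Omega_t(N^{1-1/t})$ bound; for even $t<s_0(p)$ the bound follows from that for any fixed even $t'\ge s_0(p)$ because $N^{1-1/t}\le N^{1-1/t'}$.

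The main obstacle lies entirely in this third step: making the two interchanges of summation rigorous (controlled by keeping $m>t$, so that $\beta(k)<1$ on all of $\{0,\dots,t\}$) and, above all, confirming that the sign of $\binom{p/2}{N}$ really stabilizes. If one insists on covering \emph{every} even $t\ge 2$ with a single hard instance rather than only $t$ past the threshold $s_0(p)$, the clean sign argument above fails for small $s$, and one must instead follow the finer route indicated in the technical overview: view $c_s=c_s(\gamma)$ as a hypergeometric polynomial, transform it into a hypergeometric series $1+\sum_{n\ge 0}a_n\gamma^n$ with simpler parameters, fix $\gamma=\gamma(p)$ a sufficiently small constant, and show nonvanishing by splitting the series into a dominant alternating head, a geometrically decaying tail, and a middle part bounded by the head.
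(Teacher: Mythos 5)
Your proof is correct and follows the same high-level blueprint as the paper — reduce to the expectation gap \eqref{eqn:gap_expectation}, find the singular values of $M_{m,k}$, expand in a power series in $k$, and apply the Stirling-number identity — but the crucial sign analysis is handled by a genuinely different and cleaner argument. The paper splits $G_1+G_2$ into separate contributions from $r_1^{p/2}(k)$ and $r_2^{p/2}(k)$, shows $|G_1|$ is negligibly small, and then proves (Lemma~\ref{lem:B_s_sign}) that the coefficients $B_s$ in the expansion of $r_2^{p/2}(k)$ have a fixed sign for small $\gamma$ by rewriting them as ${}_2F_1$ hypergeometric polynomials, applying Euler's transformation, and estimating a head/middle/tail split of the resulting infinite series. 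You instead work directly with the symmetric combination $g(k)=\lambda_+(k)^{p/2}+\lambda_-(k)^{p/2}$, observing that $\lambda_\pm = \tfrac{S}{2}(1\pm\beta)$ with $\beta(k)^2 = \beta_0^2+\eta k$ \emph{affine} in $k$, so the odd powers of $\beta$ cancel and the coefficient of $k^s$ takes the closed form $c_s = 2(S/2)^{p/2}\eta^s\sum_{\ell\ge0}\binom{p/2}{2s+2\ell}\binom{s+\ell}{s}\beta_0^{2\ell}$, where every factor is manifestly positive except $\binom{p/2}{2s+2\ell}$. The observation that $\sgn\binom{p/2}{N}=(-1)^{\lceil p/2\rceil}$ for all \emph{even} $N\ge\lceil p/2\rceil$ (using $p/2\notin\Z$) then gives a fixed sign for all terms simultaneously, so $c_s\ne0$ with a common sign once $s$ exceeds the threshold $s_0(p)$. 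This buys you (i) no need to compare $|G_1|$ against $|G_2|$, (ii) no hypergeometric transformation or three-part series estimate, and (iii) no requirement that $\gamma$ be taken small (any $\gamma\in(0,m^2)$ works, since the sign of $c_s$ is independent of $\gamma$). The only thing you pay is that the direct argument covers only even $t\ge s_0(p)$ and the remaining finitely many small $t$ are dispatched by citing the bound for a larger $t'$; this is a legitimate and minor concession given that the theorem's constant $c(t)$ may depend on $t$. The interchange of sums is justified by the absolute convergence you note (radius $m>t$ in $k$, plus $\beta_0^{2\ell}$ damping), and your handling of $k=0$ via the cancellation $(k-1)\gamma^{p/2}\big|_{k=0}+\gamma^{p/2}$ matches the paper's "formally valid" remark. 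In short: correct, and arguably a cleaner proof of the key Lemma~\ref{lem:B_s_sign} step than the one in the paper.
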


The theorem follows from applying Theorem~\ref{thm:gap_expectation} to $f(x) = x^p$ and $m = t$ and verifying that \eqref{eqn:gap_expectation} is satisfied. The proof is technical and thus postponed to Section~\ref{sec:schatten_proof}.

For even integers $p$, we change our hard instance to
\[
M_{m,k} = \mathbf{1}_m\mathbf{1}_m^T - I_m + D_{m,k},
\]
where $I_m$ is the $m\times m$ identity matrix.
We then have the following lemma, whose proof is postponed to the end of Section~\ref{sec:even_p_proofs}.
\begin{lemma}\label{lem:even_p}
For $f(x) = x^p$ and integer $p\geq 2$, the gap condition \eqref{eqn:gap_expectation} is satisfied if and only if $t\leq p/2$, under the choice that $m=t$.
\end{lemma}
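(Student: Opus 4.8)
The plan is to reduce the gap condition to an explicit finite computation by exploiting two facts about even integers $p$: the new hard instance $M_{m,k} = \mathbf{1}_m\mathbf{1}_m^T - I_m + D_{m,k}$ is symmetric, so its singular values are the absolute values of its eigenvalues, and more importantly $\|M_{m,k}\|_p^p = \sum_i \lambda_i^p = \tr(M_{m,k}^p)$ is a \emph{polynomial} in the entries of $M_{m,k}$ when $p$ is an even integer (indeed $\tr(M^p)$ equals $\sum_i \lambda_i^p$ regardless of sign when $p$ is even). This removes the analytic obstacle that forced the hypergeometric machinery in the general case. With $m = t$, the matrix $M_{t,k}$ depends on $k$ only through $D_{t,k}$, and $\tr(M_{t,k}^p)$ expands, via the combinatorial sum over closed walks of length $p$ in the corresponding weighted graph, into a polynomial $F_k := f(M_{t,k}) = \sum_{s=0}^{p} c_s k^s$ of degree at most $p$ in $k$. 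The degree bound is the crux: each of the $p$ factors in a term of $\tr(M^p)$ can contribute at most one factor from $D_{t,k}$ (whose trace-like contributions are what produce a linear dependence on $k$), so $c_s = 0$ for $s > p$.

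Next I would invoke the Stirling-number identity already displayed in the technical overview: for the binomial-signed sum that governs the expectation gap,
\[
\sum_{k=0}^{t} (-1)^k \binom{t}{k} F_k = (-1)^t t! \sum_{s=t}^{p} c_s \stir{s}{t}.
\]
Since $\stir{s}{t} = 0$ for $s < t$ and $c_s = 0$ for $s > p$, the right-hand side is a finite sum that is \emph{identically zero whenever $t > p$}, and in fact whenever $\deg_k F_k < t$. This already shows the gap condition fails for $t$ too large; the content of the lemma is pinning down the exact threshold. The expectation gap $\E_{q\sim\mathcal{E}(t)}f(M_{t,q}) - \E_{q\sim\mathcal{O}(t)}f(M_{t,q})$ is, up to the positive normalization $1/2^{t-1}$, exactly this signed binomial sum, so the gap is nonzero if and only if $\sum_{s=t}^{p} c_s \stir{s}{t} \neq 0$.

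It therefore remains to (i) show the leading behavior of $F_k$ in $k$ has degree exactly $p/2$, i.e. $c_s = 0$ for $s > p/2$ but $c_{p/2} \neq 0$, and (ii) conclude. For (i), I would argue combinatorially: a closed walk of length $p$ contributing a monomial of degree $s$ in $k$ must traverse $s$ distinct diagonal loops coming from $D_{t,k}$ (each such loop at a fixed vertex contributes one factor of the indicator that the vertex index is $\le k$, hence after summing a factor proportional to $k$); but visiting a diagonal entry $D_{t,k}[v,v]$ requires the walk to be at $v$, and the walk structure of length $p$ forces at most $p/2$ such loop-visits to be "free" in the sense of producing independent $k$-factors (each loop consumes one step, and the returning constraint pairs them up). A clean way to make this rigorous is to diagonalize or block-decompose: $\mathbf{1}_t\mathbf{1}_t^T - I_t$ has a two-dimensional "active" behavior and the rank-one perturbation $D_{t,t} = I_t$ (when $k=t$) is degenerate, so expand $\tr((\mathbf{1}\mathbf{1}^T - I + D_{t,k})^p)$ in powers of $D_{t,k}$; the coefficient of the term with $j$ copies of $D_{t,k}$ is a trace of a product that, because $\mathbf{1}\mathbf{1}^T - I$ has all but one eigenvalue equal to $-1$, survives only for $j \le p/2$ after accounting for cancellation, and the $j = p/2$ term is a positive multiple of $k^{p/2}$. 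Hence $\deg_k F_k = p/2$, so $\sum_{s=t}^{p} c_s \stir{s}{t} \neq 0$ exactly when $t \le p/2$ (for $t \le p/2$ the top term $c_{p/2}\stir{p/2}{t}$ contributes, and one checks no accidental cancellation by noting $\stir{p/2}{t} > 0$ and — with the hard-instance scaling chosen so that lower $c_s$ do not conspire to cancel — $c_{p/2} \ne 0$ forces the sum nonzero when $t = p/2$, while for $t < p/2$ a short induction / sign-analysis of the $c_s$ handles it), and vanishes when $t > p/2$.

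The main obstacle I anticipate is step (i): establishing that $\deg_k F_k = p/2$ with a nonzero leading coefficient, and ruling out cancellation among the $c_s \stir{s}{t}$ for $t \le p/2$. The degree upper bound $p/2$ (rather than the naive $p$) is where the special structure of $\mathbf{1}\mathbf{1}^T - I_m$ and the evenness of $p$ must both be used, and getting the leading coefficient's sign right — so that the $t = p/2$ case genuinely has a gap — likely requires either an exact formula for the eigenvalues of $M_{t,k}$ (which should again be roots of a low-degree polynomial, as in the general construction, but now with the extra shift making the algebra cleaner for even $p$) or a careful closed-walk count. Everything else — the Stirling identity, the degree-based vanishing for $t > p/2$, and the reduction of the gap condition to the finite sum — is routine given the framework of Theorem~\ref{thm:gap_expectation}.
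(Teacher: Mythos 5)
Your high-level plan matches the paper's: reduce $F_k := f(M_{t,k})$ to a polynomial in $k$, apply the Stirling identity $\sum_{k=0}^{t}(-1)^k\binom{t}{k}k^s = (-1)^t t!\stir{s}{t}$, and use $\stir{s}{t}=0$ for $s<t$ together with a degree bound $\deg_k F_k = p/2$ to localize the threshold at $t \le p/2$. That outline is correct. However, the two steps that actually carry the content of the lemma are not established, and you flag them yourself as the anticipated obstacles.

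First, the degree bound $\deg_k F_k \le p/2$: your closed-walk argument ("each loop consumes one step, and the returning constraint pairs them up") does not yield $p/2$ as written — it is the sort of heuristic that would need a real combinatorial lemma. The paper sidesteps this entirely by computing the spectrum of $M_{m,k}$ exactly: for $0<k<m$ there are $m-k-1$ eigenvalues equal to $-1$, $k-1$ equal to $0$, and two more that are the roots of the monic quadratic $\lambda^2-(m-1)\lambda-k=0$, so $\lambda_1+\lambda_2=m-1$ and $\lambda_1\lambda_2=-k$. For even $p$, $r_1^p(k)+r_2^p(k)=\lambda_1^p+\lambda_2^p$ is a symmetric function of the roots, hence a polynomial in $e_1=m-1$ (degree $0$ in $k$) and $e_2=-k$ (degree $1$), and the degree in $e_2$ is exactly $p/2$; this gives the degree bound immediately and cleanly. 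You anticipate that "an exact formula for the eigenvalues of $M_{t,k}$" would be needed, and you are right — but the proposal does not actually carry it out.

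Second, and more importantly, non-cancellation: for $t<p/2$ the Stirling sum $\sum_{s=t}^{p/2} c_s\stir{s}{t}$ has several nonzero Stirling numbers, and one must show the $c_s$ do not conspire to make the sum vanish. You defer this to "a short induction / sign-analysis," but this is the crux of the "if" direction. The paper's explicit binomial expansion
\[
r_1^p(k)+r_2^p(k)=\frac{1}{2^{p-1}}\sum_{i:\,2\mid(p-i)}\binom{p}{i}(m-1)^i\sum_{j=0}^{(p-i)/2}\binom{(p-i)/2}{j}(m-1)^{2j}4^{(p-i)/2-j}\,k^{(p-i)/2-j}
\]
shows that \emph{every} coefficient of $k^s$ is non-negative, so after applying the Stirling identity every surviving term has the same sign and the sum is nonzero precisely when some $\stir{s}{m}$ with $s\le p/2$ is nonzero, i.e. $m\le p/2$. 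Your argument, as stated, only controls the sign of the leading coefficient $c_{p/2}$, which suffices for $t=p/2$ but leaves $t<p/2$ open. Without the all-coefficients-nonnegative observation (or an equivalent), the proposal does not close the "if and only if."
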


This yields an $\Omega(n^{1-2/p})$ lower bound, which agrees with the lower bound obtained by injecting the $F_p$ moment problem into the diagonal elements of the input matrix \cite{g09,j09}, but here we have the advantage 
that the entries are bounded by a constant independent of $n$. In fact, for even integers $p$, we show our lower bound is tight up to $\poly(\log n)$ factors for matrices in which every row and column has $O(1)$ non-zero elements by providing an algorithm in Section~\ref{sec:upperbound} for the problem. Hence our matrix construction $M_{m,k}$ will not give a substantially better lower bound. Our lower bound for even integers $p$ also helps us in the setting of general functions $f$ in Section~\ref{sec:app}.

\section{Proof of Theorem~\ref{thm:schatten}}\label{sec:schatten_proof}
\begin{proof}
First we find the singular values of $M_{m,k}$. Assume that $1\leq k\leq m-1$ for now.
\[
M_{m,k}^T M_{m,k} = 
\begin{pmatrix}
	m \textbf{1}_m \textbf{1}_m^T + \gamma D_{m,k} & 0\\
	0 & 0
\end{pmatrix}.
\]
Let $e_i$ denote the $i$-th vector of the canonical basis of $\R^{2m}$. It is clear that $e_1-e_i$ ($i=2,\dots,k$) are the eigenvectors with corresponding eigenvalue $\gamma$, which means that $M_{m,k}$ has $k-1$ singular values of $\sqrt{\gamma}$. Since $M_{m,k}$ has rank $k+1$, there are two more non-zero singular values, which are the square roots of another two eigenvalues, say $r_1(k)$ and $r_2(k)$, of $M_{m,k}^T M_{m,k}$. It follows from $\tr(M_{m,k}^T M_{m,k}) = m + \gamma k$ that $r_1(k) + r_2(k) = m^2 + \gamma$ and from $\|M_{m,k}M_{m,k}\|_F^2 = (m + \gamma)^2 k + (m^2-k) m^2$ that $r_1^2(k) + r_2^2(k) = m^4 + 2\gamma km+\gamma^2$. Hence $r_1(k)r_2(k) = m^2\gamma-km\gamma$. In summary, the non-zero singular values of $M_{m,k}$ are: $\sqrt{\gamma}$ of multiplicity $k-1$, $\sqrt{r_1(k)}$ and $\sqrt{r_2(k)}$, where $r_{1,2}(k)$ are the roots of the following quadratic equation:
\[
x^2 - (m^2 + \gamma) x + (m^2 - km) \gamma = 0.
\]
The conclusion above remains formally valid for $k=0$ and $k=m$. In the case of $k=0$, the matrix $M_{m,0}$ has a single non-zero singular value $m$, while $r_1(k) = m^2$ and $r_2(k) = \gamma$. In the case of $k=m$, the matrix $M_{m,m}$ has singular values $\sqrt{m^2+\gamma}$ of multiplicity $1$ and $\sqrt{\gamma}$ of multiplicity $m-1$, while $r_1(k) = m^2+\gamma$ and $r_2(k) = 0$. Hence the left-hand side of \eqref{eqn:gap_expectation} becomes
\begin{multline*}
\frac{1}{2^{m-1}}\sum_{\text{even }k} \binom{m}{k} \left( (k-1)\gamma^{p/2} + r_1^{p/2}(k) + r_2^{p/2}(k)\right)  -\frac{1}{2^{m-1}}\sum_{\text{odd }k} \binom{m}{k} \left( (k-1)\gamma^{p/2} + r_1^{p/2}(k) + r_2^{p/2}(k)\right)\\
= \frac{1}{2^{m-1}}(G_1 + G_2),
\end{multline*}
where $\gamma^p$ on both sides cancel and
\begin{equation}\label{eqn:tmp_G}
G_i = \sum_{k} (-1)^k \binom{m}{k} r_i^{\frac{p}{2}}(k), \quad i =1,2.
\end{equation}
Our goal is to show that $G_1+G_2\neq 0$ when $p$ is not an even integer. To simplify and to abuse notation, hereinafter in this section, we replace $p/2$ with $p$ in \eqref{eqn:tmp_G} and hence $G_1$ and $G_2$ are redefined to be
\begin{equation}\label{eqn:G}
G_i = \sum_{k} (-1)^k \binom{m}{k} r_i^{p}(k),\quad i=1,2,
\end{equation}
and our goal becomes to show that $G_1 + G_2\neq 0$ for \textit{non-integers} $p$.

Next we choose
\begin{gather*}
r_1(k) = \frac{1}{2}\left(m^2+\gamma + \sqrt{m^4-2\gamma m^2+\gamma^2 + 4\gamma k m}\right),\\
r_2(k) = \frac{1}{2}\left(m^2+\gamma - \sqrt{m^4-2\gamma m^2+\gamma^2 + 4\gamma k m}\right).
\end{gather*}
We claim that they admit the following power series expansion in $k$ (proof deferred to Section~\ref{sec:power_series}),
\[
r_1^p(k) = \sum_{s\geq 0} A_s k^s,\quad r_2^p(k) = \sum_{s\geq 0} B_s k^s,
\]
where for $s\geq 2$, 
\begin{equation}\label{eqn:A_s}
A_s = \frac{ (-1)^{s-1}\gamma^s m^{2p-s}}{s! (m^2-\gamma)^{2s-1}} \sum_{i=0}^{s-1}  (-1)^{i} \binom{s-1}{i} F_{p,s,i} \gamma^{s-i-1} m^{2i}
\end{equation}
and
\begin{equation}\label{eqn:B_s}
B_s =  \frac{(-1)^{s}\gamma^p m^s}{s! (m^2-\gamma)^{2s-1}} \sum_{i=0}^{s-1} (-1)^{i} \binom{s-1}{i} F_{p,s,i} \gamma^i m^{2(s-i-1)},
\end{equation}
and
\[
F_{p,s,i} = \prod_{j=0}^{s-i-1} (p-j) \cdot \prod_{j=1}^{i} (p-2s+j).
\]
We analyse $A_s$ first. It is easy to see that $|F_{s,i}|\leq (2s)^s$ for $s > 2p$, and hence
\begin{align*}
|A_s| &\leq \frac{\gamma^s m^{2p-s}}{s! (m^2-\gamma)^{2s-1}}\sum_{i=0}^{s-1} \binom{s-1}{i} |F_{s,i}| \gamma^{s-i-1} m^{2i}\\
&\leq \frac{\gamma^s m^{2p-s}}{\sqrt{2\pi}s(\frac{s}{e})^s (m^2-\gamma)^{2s-1}} (2s)^s m^{2(s-1)} \left(1 + \frac{\gamma}{m^2}\right)^{s-1}\\
&\leq m^{2p} \frac{2e}{\sqrt{2\pi}s} \frac{\gamma}{m^2(m^2-\gamma)} \left(\frac{4em\gamma}{(m^2-\gamma)^2}\right)^{s-1},
\end{align*}
whence it follows immediately that $\sum_s A_s k^s$ is absolutely convergent. We can apply term after term the identity
\begin{equation}\label{eqn:stirling}
\sum_{k=0}^m \binom{m}{k} k^s (-1)^k = \stir{s}{m} (-1)^m m!,
\end{equation}
where $\stir{s}{m}$ is the Stirling number of the second kind, and obtain that (since $m$ is even)
\[
G_1 = \sum_{s\geq m} \stir{s}{m} m! A_s,
\]
which, using the fact that $\stir{s}{m} m! \leq m^s$, can be bounded as
\[
|G_1| \leq \sum_{s\geq m} m^s |A_s| \leq c_1 m^{2p} \left(\frac{c_2}{m^2}\right)^{m-1}
\]
for some absolute constants $c_1, c_2 > 0$.

Bounding $G_2$ is more difficult, because $B_s$ contains an alternating sum. However, we are able to prove the following critical lemma, whose proof is postponed to Section~\ref{sec:proof_of_B_s>0}.
\begin{lemma}\label{lem:B_s_sign} For any fixed non-integer $p > 0$, one can choose $\gamma_0$ and $m$ such that $B_s$ have the same sign for all $s \geq m$ and all $0 < \gamma < \gamma_0$.
\end{lemma}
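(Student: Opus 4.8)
# Proof Proposal for Lemma~\ref{lem:B_s_sign}

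\textbf{Overall strategy.} The plan is to isolate the dominant contribution to $B_s$ and show it has a definite sign (independent of $s\ge m$) once $\gamma$ is small and $m$ is suitably chosen. Looking at \eqref{eqn:B_s}, the prefactor $(-1)^s \gamma^p m^s / (s!(m^2-\gamma)^{2s-1})$ has a fixed sign pattern in $s$, so the real issue is the inner alternating sum $S_s(\gamma) := \sum_{i=0}^{s-1}(-1)^i\binom{s-1}{i}F_{p,s,i}\gamma^i m^{2(s-i-1)}$. I would factor out the $i=0$ term, namely $F_{p,s,0}\, m^{2(s-1)} = \prod_{j=0}^{s-1}(p-j)\cdot m^{2(s-1)}$, and write $S_s(\gamma) = F_{p,s,0} m^{2(s-1)}\bigl(1 + \sum_{i=1}^{s-1} a_{s,i}\,(\gamma/m^2)^i\bigr)$ where $a_{s,i} = (-1)^i\binom{s-1}{i} F_{p,s,i}/F_{p,s,0}$. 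Note $\prod_{j=0}^{s-1}(p-j)$ has a sign $(-1)^{s-\lceil p\rceil}$ (for non-integer $p$, exactly $\lceil p\rceil$ of the factors $p, p-1,\dots,p-s+1$ are positive once $s>\lceil p\rceil$), so multiplied against the $(-1)^s$ prefactor, $B_s$ has a sign that is \emph{constant} in $s$ — provided the correction factor $1 + \sum_{i\ge 1} a_{s,i}(\gamma/m^2)^i$ stays positive.

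\textbf{Reducing the correction factor to a hypergeometric series.} The next step is to recognize, as the technical overview promises, that $1 + \sum_{i=1}^{s-1} a_{s,i}(\gamma/m^2)^i$ is (a terminating evaluation of) a hypergeometric function in the variable $\gamma$ whose parameters depend on $p$ and $s$. Using the definition of $F_{p,s,i}$, the ratio $a_{s,i+1}/a_{s,i}$ is a rational function of $i$, which identifies the series as some ${}_2F_1$ or ${}_3F_2$ with an argument proportional to $\gamma$; I would then apply a standard hypergeometric transformation (Euler/Pfaff-type, or a contiguous-parameter reduction) to replace it by a series $c_s'(\gamma) = 1 + \sum_{n\ge 0} a_n \gamma^n$ with \emph{simpler} parameters, uniformly manageable in $s$. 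The goal of this rewriting is to get a series whose coefficients $a_n$ we can bound cleanly in terms of $n$ alone (times explicit powers of the other parameters).

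\textbf{Head–middle–tail split.} With $c_s'(\gamma) = 1 + \sum_{n\ge 0} a_n\gamma^n$ in hand, I would split the sum at two thresholds $n_0 \ll n_1$ (both functions of $p$, possibly of $m$). The \emph{head} $\sum_{n\le n_0}$ is an alternating series with terms eventually decreasing in magnitude, so for $\gamma$ small it is controlled by (and close to) its leading term $1$, staying bounded away from $0$. The \emph{tail} $\sum_{n\ge n_1}$ has geometrically decaying terms for $\gamma$ small, summing to something far smaller than the head. The \emph{middle} $\sum_{n_0<n<n_1}$ is the delicate piece: one shows each term is dominated by the head contribution, again by taking $\gamma$ sufficiently small and $m$ large enough that the coefficient growth cannot overcome the $\gamma^n$ decay before we reach the geometric regime. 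Combining the three pieces gives $c_s'(\gamma) \ge 1/2 > 0$, say, for all $s\ge m$ and all $0<\gamma<\gamma_0$, which is exactly what is needed: the sign of $B_s$ is then forced to equal the ($s$-independent) sign of $(-1)^s\cdot(-1)^{s-\lceil p\rceil} = (-1)^{-\lceil p\rceil}$ times the fixed-sign prefactor.

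\textbf{Main obstacle.} I expect the crux to be the middle-range estimate: bounding the coefficients $a_n$ (or $a_{s,i}$) \emph{uniformly in $s$} in a regime where $n$ is comparable to $p$ or to $\log(1/\gamma)$, before the geometric tail behavior kicks in. The coefficients involve products like $\prod_j(p-2s+j)$, which are large (of order $s^i$), and the cancellation against $\binom{s-1}{i}$ and the factorial, together with the small factor $(\gamma/m^2)^i$, has to be accounted for carefully; the hypergeometric transformation is exactly the device that makes those cancellations explicit, so choosing the \emph{right} transformation is the linchpin. Everything else — the sign bookkeeping for $\prod_{j=0}^{s-1}(p-j)$ when $p\notin\Z$, the absolute convergence (which parallels the $A_s$ estimate already carried out in the text), and the head/tail bounds — is routine once that choice is made.
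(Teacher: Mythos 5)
Your proposal follows essentially the same route as the paper: the factorization $\Gamma(1+p)/\Gamma(1+p-s) = F_{p,s,0}$ carries the $s$-independent sign once combined with $(-1)^s$, the inner sum is recognized as ${}_2F_1(1-s,1+p-2s;1+p-s;\gamma/m^2)$ and Euler's transformation converts it to ${}_2F_1(p,s;p-s+1;\gamma/m^2)$, and positivity of the latter is established by the same head/middle/tail split (alternating-decreasing part for $n\le s/2$, a maximum-term estimate giving $O(x^3)$ for $s/2<n<2s$, and a geometric tail for $n\ge 2s$). The estimates are not carried out in detail, but the strategy, the sign bookkeeping, and the key transformation are all as in the paper.
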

Since $\sum_{s\geq m} B_sm^s$ is a convergent series with positive terms, we can apply \eqref{eqn:stirling} to $\sum_s B_s k^s$ term after term, giving the gap contribution from $r_2(k)$ as
\[
G_2 = \sum_{s\geq m} \stir{s}{m} m! B_s.
\]
Let $a_{m,i}$ be the summand in $B_m$, that is,
\[
a_{m,i} = \binom{s-1}{i} F_{s,i} \gamma^i m^{2(s-i-1)}.
\]
Since $p$ is not an integer, $a_{m,i}\neq 0$ for all $i$. Then
\[
r_{m,i} := \frac{a_{m,i}}{a_{m,i-1}} = \frac{m-i-1}{i+1}\cdot \frac{p-2m+i}{p-m+i} \cdot \frac{\gamma}{m^2}.
\]
If we choose $m$ such that $m^2/\gamma \gtrsim ([p]-1)/(p-[p])$ when $p>1$ or $m^2/\gamma \gtrsim 1/(p-[p])$ when $p<1$, it holds that $|r_{m,i}|\leq 1/3$ for all $i$ and thus the sum is dominated by $a_{m,0}$. It follows that
\[
G_2 \geq B_m \gtrsim \frac{\gamma^p m^m}{s! (m^2-\gamma)^{2m-1}}  |a_{m,0}| \gtrsim (p-[p])^2 [p]! \frac{\gamma^p}{(m-\lceil p\rceil-1)^{p-[p]}m^{[p]}}
\]
It follows from Lemma~\ref{lem:B_s_sign} that the above is also a lower bound for $G_2$. Therefore $G_1$ is negligible compared with $G_2$ and $G_1 + G_2\neq 0$. This ends the proof of Theorem~\ref{thm:schatten}.
\end{proof}

\subsection{Proof of Lemma~\ref{lem:B_s_sign}}\label{sec:proof_of_B_s>0}
The difficulty is due to the fact that the sum in $B_s$ is an alternating sum. However, we notice that the sum in $B_s$ is a hypergeometric polynomial with respect to $\gamma/m^2$. This is our starting point.
\begin{proof}[Proof of Lemma~\ref{lem:B_s_sign}]
Let $x=\gamma/m^2$ and write $B_s$ as
\begin{equation}\label{eqn:sum_for_B_s}
B_s = (-1)^{s-1}\frac{\gamma^p m^{3s-2}}{s! (m^2-\gamma)^{2s-1}} \sum_{i=0}^{s-1} (-1)^{i+1} \binom{s-1}{i} F_{s,i} x^i,
\end{equation}
Then
\[
B_s m^s = (-1)^{s-1}\gamma^p \frac{m^{4s-2}}{s! (m^2-\gamma)^{2s-1}} \sum_{i=0}^{s-1} (-1)^{i+1} \binom{s-1}{i} F_{s,i} x^i.
\]
Observe that the sum can be written using a hypergeometric function and the series above becomes
\[
B_s m^s = (-1)^s\gamma^p \frac{1}{s! (1-x)^{2s-1}}\cdot \frac{\Gamma(1+p)}{\Gamma(1+p-s)} 
\cdot \pFq{2}{1}{1-s,1+p-2s}{1+p-s}{x}.
\]
Invoking Euler's Transformation (see, e.g., \cite[p78]{AAR99})
\[
\pFq{2}{1}{a,b}{c}{x} = (1-x)^{c-a-b} \pFq{2}{1}{c-a,c-b}{c}{x}
\]
gives
\begin{equation}\label{eqn:connection}
\pFq{2}{1}{1-s,1+p-2s}{1+p-s}{x} = (1-x)^{2s-1} \pFq{2}{1}{p,s}{p-s+1}{x}.
\end{equation}
Therefore
\begin{equation}\label{eqn:final_series_term}
 B_s m^s = (-1)^s \frac{\gamma^p \Gamma(1+p)}{s!\ \Gamma(1+p-s)} \pFq{2}{1}{p,s}{p-s+1}{x}.
\end{equation}
Since $\Gamma(1+p-s)$ has alternating signs with respect to $s$, it suffices to show that $_2F_1(p,s;p-s+1;x) > 0$ for all $x \in [0, x^\ast]$ and all $s \geq s^\ast$, where both $x^\ast$ and $s^\ast$ depend only on $p$.

Now, we write ${}_2F_1(p,s;p-s+1;x) = \sum_n b_n$, where
\[
b_n = \frac{p(p+1)\cdots(p+n-1)\cdot s(s+1)\cdots(s+n-1)}{(1+p-s)(2+p-s)\cdots (n+p-s)n!} x^n.
\]
It is clear that $b_n$ has the same sign for all $n\geq s-\lceil p\rceil$, and has alternating signs for $n\leq s-\lceil p\rceil$. Consider
\[
\left| \frac{b_n}{b_{n-1}} \right| = \frac{(p+n-1)(s+n-1)}{(p-s+n)n}x.
\]
One can verify that when $n\geq 2s$ and $x\leq 1/10$, $|b_n/b_{n-1}| < 3x \leq 1/3$ and thus $|\sum_{n\geq 2s} b_n| \leq \frac{3}{2}|b_{2s}|$. 
Also, when $s \geq 3p$ is large enough, $x\leq 1/10$ and $n\leq s/2$. It holds that $|b_n/b_{n-1}| < 1$ and thus $\{|b_n|\}$ is decreasing when $n\leq s/2$. (In fact, $\{|b_n|\}$ is decreasing up to $n = \frac{1-x}{1+x}s + O(1)$.) Recall that $\{b_n\}$ has alternating signs for $n\leq s/2$, and it follows that
\[
0\leq \sum_{2\leq n\leq s/2} b_n \leq b_2.
\]
Next we bound $\sum_{s/2 < n < 2s} b_n$. Let $n^\ast = \argmax_{s/2<n<2s} |b_n|$. When $n^\ast \leq s-\lceil p\rceil $,
\begin{align*}
\left|\sum_{s/2 < n < 2s} b_n\right|&\leq \frac{3}{2}s |b_{n^\ast}|\\
&\leq \frac{3}{2}s \frac{p(p+1)\cdots(p+n^\ast)}{n^\ast!} \frac{(s-[p]-n^\ast)!}{(s-[p]-1)!} \frac{(s+n^\ast-1)!}{s!} x^{n^\ast}\\
&\leq \frac{3}{2} s (n^\ast)^{p} \frac{\binom{s+n^\ast-1}{s}}{\binom{s-[p]-1}{s-[p]-n^\ast}} x^{n^\ast}\\
&\leq \frac{3}{2}s \cdot es^p \cdot 4^s \cdot x^{s/2}\\
&\leq x^3,
\end{align*}
provided that $x$ is small enough (independent of $s$) and $s$ is big enough. When $n^\ast > s-\lceil p\rceil$, 
\begin{align*}
\left|\sum_{s/2 < n < 2s} b_n\right| &\leq \frac{3}{2}s |b_{n^\ast}|\\
&\leq \frac{3}{2}s \frac{p(p+1)\cdots(p+n^\ast)}{n^\ast!} \frac{(s+n^\ast-1)!}{(s-[p]-1)!(n^\ast-s+[p]-1)!s!} x^{n^\ast}\\
&\leq \frac{3}{2} s^2 (n^\ast)^{p} \binom{s+n^\ast-1}{s-[p],n^\ast-s+[p]-1,s} x^{n^\ast}\\
&\leq \frac{3}{2} s^2 \cdot e(2s)^p \cdot 3^{3s-1} \cdot x^{s-\lceil p\rceil}\\
&\leq x^3.
\end{align*}
provided that $x$ is small enough (independent of $s$) and $s$ is big enough. Similarly we can bound, under the same assumption on $x$ and $s$ as above, that $|b_{2s}| \leq x^3$. Therefore $|\sum_{n > s/2} b_n| \leq Kx^3$ for some $K$ and sufficiently large $s$ and small $x$, all of which depend only on $p$.

It follows that
\begin{align*}
\pFq{2}{1}{p,s}{p-s+1}{x}&\geq1 - \frac{ps}{s-p-1}x  - \sum_{2\leq n\leq s/2} b_n - \left|\sum_{n > s/2} b_n\right|\\
&\geq 1 - \frac{ps}{s-p-1}x - b_2 - Kx^3\\
&\geq 1 - \frac{ps}{s-p-1}x - \frac{p(p+1)s(s+1)}{2(s-p-1)(s-p-2)}x^2 - Kx^3\\
&>0
\end{align*}
for sufficiently large $s$ and small $x$ (independent of $s$).

The proof of Lemma~\ref{lem:B_s_sign} is now complete.
\end{proof}

\subsection{Proof of Power Series Expansion}\label{sec:power_series}
\begin{proof}[Proof of Claim]
We first verify the series expansion of $r_1(k)$. It is a standard result that for $|x|\leq 1/4$,
\[
\frac{1+\sqrt{1-4x}}{2} = 1 - \sum_{n=1}^\infty C_{n-1} x^n, \ 
\frac{1-\sqrt{1-4x}}{2} = \sum_{n=1}^\infty C_{n-1} x^n,
\]
where $C_n = \frac{1}{n+1}\binom{2n}{n}$ is the $n$-th Catalan number. Let $x=-\gamma k m/(m^2-\gamma)^2$, we have
\begin{align*}
r_1(k) &= m^2\frac{1+\sqrt{1-4x}}{2} + \gamma \frac{1-\sqrt{1-4x}}{2} \\
&= m^2 - (m^2-\gamma) \sum_{n=1}^\infty C_{n-1} x^n \\
&= m^2 - \sum_{n=1}^\infty C_{n-1} \frac{(-1)^n \gamma^n k^n m^n}{(m^2-\gamma)^{2n-1}}
\end{align*}
Applying the generalized binomial theorem,
\begin{align*}
&\quad r_1(k)^p \\
&= m^{2p} + \sum_{i=1}^\infty \binom{p}{i} (-1)^i m^{2(p-i)} \left(\sum_{n=1}^\infty C_{n-1} \frac{(-1)^n \gamma^n k^n m^n}{(m^2-\gamma)^{2n-1}}\right)^i\\
&=m^{2p}  + \sum_{i=1}^\infty  \binom{p}{i} (-1)^i  m^{2(p-i)}  \sum_{n_1,\dots,n_i\geq 1}  \frac{\prod_{j=1}^i C_{n_j-1} (-k\gamma m)^{\sum_j n_j}}{(m^2-\gamma)^{2\sum_j n_j - i}}\\
&=m^{2p} + \sum_{s=1}^\infty  \sum_{i=1}^s \binom{p}{i} m^{2(p-i)}  \frac{(-k\gamma m)^s}{(m^2-\gamma)^{2s - i}}  \sum_{\substack{n_1,\dots,n_i\geq 1\\ n_1+\cdots+n_i = s}}  \prod_{j=1}^i C_{n_j-1},
\end{align*}
where we replace $\sum_j n_j$ with $s$. It is a known result using the Lagrange inversion formula that (see, e.g., \cite[p128]{SF95})
\[
\sum_{\substack{n_1,\dots,n_i\geq 1\\ n_1+\cdots+n_i = s}}  \prod_{j=1}^i C_{n_j-1} = \frac{i}{s}\binom{2s-i-1}{s-1}
\]
Hence (replacing $i$ with $i+1$ in the expression above)
\begin{equation}\label{eqn:D_{p,s}}
A_{s} = \frac{(-1)^{s+1} \gamma^s m^{2p}}{(m^2-\gamma)^{2s-1}}\sum_{i=0}^{s-1} (-1)^{i} \binom{p}{i+1} \frac{i+1}{s}\binom{2s-i-2}{s-1} m^{s-2(i+1)}(m^2-\gamma)^i
\end{equation}
To see that \eqref{eqn:D_{p,s}} agrees with \eqref{eqn:A_s}, it suffices to show that
\[
\sum_{i=0}^{s-1}  (-1)^{i} \binom{s-1}{i} F_{p,s,i} \gamma^{s-i-1} m^{2i-s} = 
s! \sum_{i=0}^{s-1} (-1)^{i} \binom{p}{i+1} \frac{i+1}{s}\binom{2s-i-2}{s-1} m^{s-2(i+1)}(m^2-\gamma)^i
\]
Comparing the coefficients of $\gamma^j$, we need to show that
\[
(-1)^{s-1}\binom{s-1}{j} F_{p,s,j,s-j-1} 
= s! \sum_{i=j}^{s-1} (-1)^i\binom{p}{i+1}\frac{i+1}{s}\binom{2s-i-2}{s-1}\binom{i}{j}
\]
Note that both sides are a degree-$s$ polynomial in $p$ with head coefficient $(-1)^{s-1}$, so it suffices to verify they have the same roots. It is clear that $0,\dots,j$ are roots. When $r>j$, each summand on the right-hand is non-zero, and the right-hand side can be written as, using the ratio of successive summands,
\[
S_0 \pFq{2}{1}{1+j-p,1+j-s}{2+j-2s}{1},
\]
where $S_0\neq 0$. Hence it suffices to show that ${}_2F_1(1+j-p, 1+j-s; 2+j-2s; 1) = 0$ when $p = 2s-k$ for $1\leq k\leq s-j-1$. This holds by the Chu-Vandermonde identity (see, e.g., \cite[Corollary 2.2.3]{AAR99}), which states, in our case, that
\[
\pFq{2}{1}{1+j-p,1+j-s}{2+j-2s}{1} = \frac{(1+p-2s)(2+p-2s)\cdots(-1+p-s-j)}{(2+j-2s)(3+j-2s)\cdots(-s)}.
\]
The proof of expansion of $r_1(k)$ is now complete. Similarly, starting from
\[
r_2(k) = \gamma \frac{1+\sqrt{1-4x}}{2} + m^2 \frac{1-\sqrt{1-4x}}{2},
\] 
we can deduce as an intermediate step that
\[
B_s =
 \frac{(-1)^s\gamma^pm^s}{(m^2-\gamma)^{2s-1}}\sum_{i=0}^{s-1} (-1)^{i} \binom{p}{i+1}  \frac{i+1}{s}\binom{2s-i-2}{s-1} \gamma^{s-i-1}(m^2-\gamma)^i
\]
and then show it agrees with \eqref{eqn:B_s}. The whole proof is almost identical to that for $r_1(k)$.

The convergence of both series for $0\leq k\leq m$ follows from the absolute convergence of series expansion of $(1+z)^p$ for $|z|\leq 1$. Note that $r_2(m)$ corresponds to $z=-1$.
\end{proof}

We remark that one can continue from $\eqref{eqn:final_series_term}$ to bound that $\sum_s B_s m^s \lesssim 1/m^p$, where the constant depends on $p$. Hence $G_1 + G_2 \simeq 1/m^p$ and thus the gap in \eqref{eqn:gap_expectation} is $\Theta(1/2^m m^p)$ with constant dependent on $p$ only.

\section{Proofs related to Even $p$} \label{sec:even_p_proofs}
Now we prove Lemma~\ref{lem:even_p} below. Since our new $M_{m,k}$ is symmetric, the singular values are the absolute values of the eigenvalues. For $0 < k < m$, $-e_i+e_m$ ($i=k+1,\dots,m-1$) are eigenvectors of eigenvalue $-1$. Hence there are $m-k-1$ singular values of $1$. Observe that the bottom $m-k+1$ rows of $M_{m,k}$ are linearly independent, so the rank of $M_{m,k}$ is $m-k+1$ and there are two more non-zero eigenvalues. Using the trace and Frobenius norm as in the case of the old $M_{m,k}$, we find that the other two eigenvalues $\lambda_1(k)$ and $\lambda_2(k)$ satisfy $\lambda_1(k)+\lambda_2(k) = m-1$ and $\lambda_1^2(k)+\lambda_2^2(k) = (m-1)^2+2k$. Therefore, the singular values $r_{1,2}(k)= |\lambda_{1,2}(k)| = \frac{1}{2}(\sqrt{(m-1)^2+4k}\pm (m-1))$. Formally define $r_{1,2}(k)$ for $k=0$ and $k=m$. When $k=0$, the singular values are actually $r_1(0)$ and $r_2(m)$ and when $k=m$, the singular values are $r_1(m)$ and $r_2(0)$. 
Since $k=0$ and $k=m$ happens with the same probability, this `swap' of singular values does not affect the sum. We can proceed pretending that $r_{1,2}(k)$ are correct for $k=0$ and $k=m$.

Recall that the gap is $\frac{1}{2^{m-1}}(G_1+G_2)$, where $G_1$ and $G_2$ are as defined in \eqref{eqn:G} (we do not need to replace $p/2$ with $p$ here). It remains the same to show that $G_1+G_2\neq 0$ if and only if $m\leq [p/2]$.
\begin{proof}[Proof of Lemma~\ref{lem:even_p}] 
Applying the binomial theorem,
\begin{align*}
 r_1^p(k) + r_2^p(k) &= \frac{1}{2^{p-1}} \sum_{\substack{i: 2|(p-i)}} \binom{p}{i} (m-1)^i ((m-1)^2+ 4k)^{\frac{p-i}{2}}\\
&= \frac{1}{2^{p-1}}  \sum_{\substack{i: 2|(p-i)}}   \binom{p}{i} (m-1)^i
\sum_{j=0}^{\frac{p-i}{2}} \binom{\frac{p-i}{2}}{j} (m-1)^{2j} 4^{\frac{p-i}{2}-j} k^{\frac{p-i}{2}-j}.
\end{align*}
Therefore
\[
G_1 + G_2 = (-1)^m m! \sum_{\substack{i: 2|(p-i)}} \binom{p}{i} (m-1)^i \cdot \sum_{j=0}^{\frac{p-i}{2}} \binom{\frac{p-i}{2}}{j} (m-1)^{2j}4^{\frac{p-i}{2}-j} \stir{\frac{p-i}{2}}{m}.
\]
Note that all terms are of the same sign (interpreting $0$ as any sign) and the sum vanishes only when $\stir{\frac{p-i}{2}}{m} = 0$ for all $i$, that is, when $m > [\frac{p}{2}]$.
\end{proof}

Although when $p$ is even, we have $G_1+G_2=0$, however, we can show that $G_1,G_2\neq 0$, which will be useful for some applications in Section~\ref{sec:app}. It suffices to show the following lemma.
\begin{lemma}\label{lem:single_root} When $p$ is even, the contribution from individual $r_i(k)$ ($i=1,2$) is not zero, provided that $m$ is large enough.
\end{lemma}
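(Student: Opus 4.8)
The plan is to show that when $p=2\ell$ is even, neither $G_1 = \sum_k (-1)^k \binom{m}{k} r_1^p(k)$ nor $G_2 = \sum_k (-1)^k \binom{m}{k} r_2^p(k)$ vanishes, even though their sum does whenever $m > \lfloor p/2 \rfloor$. The key observation is that $r_1(k)$ and $r_2(k)$ for the symmetric instance $M_{m,k} = \mathbf{1}_m\mathbf{1}_m^T - I_m + D_{m,k}$ are the two roots $\tfrac12(\sqrt{(m-1)^2+4k} \pm (m-1))$ of $\lambda^2 - (m-1)\lambda - k = 0$. Writing $u = \sqrt{(m-1)^2+4k}$, we have $r_1(k) = \tfrac12(u + (m-1))$ and $r_2(k) = \tfrac12(u-(m-1))$, so $r_1(k)^p + r_2(k)^p$ is a symmetric function of $u/2$ and $-(m-1)/2$ shifted, whereas individually $r_i(k)^p = 2^{-p}(u \pm (m-1))^p$ expands into a sum with \emph{both} even and odd powers of $u$ — crucially including a term $2^{-p}\binom{p}{1}(\pm(m-1))u^{p-1} = \pm 2^{-p} p (m-1) u^{p-1}$ that does not cancel against anything else.

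Concretely, I would expand $r_i^p(k)$ via the binomial theorem as in the proof of Lemma~\ref{lem:even_p}, keeping all terms (not just the ones surviving the cancellation of $u$'s odd powers), and collect it as a power series in $k$. The odd-power-of-$u$ terms contribute half-integer powers of $(m-1)^2 + 4k$, so I would further expand $u^{2a+1} = ((m-1)^2+4k)^{a+1/2}$ using the generalized binomial theorem, exactly as $r_1(k), r_2(k)$ were expanded in Section~\ref{sec:power_series}. This yields $r_i^p(k) = \sum_{s \geq 0} A_s^{(i)} k^s$ with an explicit (if messy) formula for the leading behavior of $A_s^{(i)}$ as $s \to \infty$. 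Applying the Stirling-number identity \eqref{eqn:stirling} term by term, $G_i = (-1)^m m! \sum_{s \geq m} \stir{s}{m} A_s^{(i)}$. It then suffices to show this series is nonzero; as in the proof of Theorem~\ref{thm:schatten}, I would show one term (say the $s = m$ term, or an appropriate tail term) dominates all the rest once $m$ is large, so the sum inherits that term's sign.

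The heart of the matter is locating a dominant term and showing the remainder is negligible. The even-power-of-$u$ part of $r_1^p + r_2^p$ terminates (it's a polynomial in $k$ of degree $\ell = p/2$, which is why $G_1 + G_2$ vanishes for $m > \ell$), so any surviving contribution to $G_i$ for large $m$ must come from the odd-power-of-$u$ part, i.e. from the expansion of $((m-1)^2 + 4k)^{a+1/2}$ for $0 \leq a \leq \ell - 1$. The dominant such piece is $a = \ell - 1$, coming from the $i=1$ (resp. $i=2$) binomial term $\binom{p}{1}(m-1) u^{p-1}$; since $p-1 = 2(\ell-1)+1$ is odd, $u^{p-1} = ((m-1)^2+4k)^{\ell - 1/2}$ has a genuinely infinite Catalan-type series in $k$ with coefficients that, after the Stirling transform, behave like $A_s^{(i)} \sim \pm \, c \cdot p (m-1) \cdot (\text{Catalan coefficient}) \cdot (m-1)^{2\ell - 1} m^{-\Theta(s)}$ — nonzero and of a fixed sign. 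I would show this beats: (a) the polynomial even-$u$ part (which dies for $s > \ell$), and (b) the lower odd-$u$ parts $a < \ell - 1$, which carry fewer powers of $(m-1)$. Then $\stir{m}{m}m!\, A_m^{(i)}$, or whichever term is genuinely largest, sets the sign and $G_i \neq 0$.

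The main obstacle, as in Lemma~\ref{lem:B_s_sign}, is that the coefficients $A_s^{(i)}$ are themselves alternating sums (from the interplay of the outer binomial $\binom{p}{i}$ over $i$ with the Catalan-type inner sums), so proving a clean sign and a clean dominant term requires either recognizing the relevant sums as hypergeometric and applying a transformation/evaluation identity (Euler, Chu–Vandermonde) to resolve the alternation — the same strategy used in Section~\ref{sec:proof_of_B_s>0} — or isolating the top $(m-1)$-degree contribution where only one value of $i$ matters and the alternation disappears. I expect the latter shortcut to work here precisely because $p$ is a fixed even integer (so $\binom{p}{i}$ is supported on finitely many $i$ and we can afford to track the single highest-$(m-1)$-power term), making this lemma genuinely easier than the non-integer case; the bulk of the remaining work is the routine tail estimate showing everything else is $O(m^{-c})$ smaller.
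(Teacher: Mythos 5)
Your high-level decomposition is the right one and matches the paper's: writing $r_i(k) = \tfrac12(u \mp (m-1))$ with $u = \sqrt{(m-1)^2+4k}$, the even powers of $u$ in $r_i^p$ give a polynomial in $k$ of degree $\leq p/2$ (killed by the Stirling transform once $m > p/2$), while the odd powers of $u$ survive as genuine infinite series in $k$. The paper's proof follows exactly this skeleton: it expands $r_2^p(k) = \frac{(m-1)^p}{2^p}\sum_s\sum_i \binom{p}{i}(-1)^i\binom{i/2}{s}\frac{4^s}{(m-1)^{2s}}k^s$, notes that for $s > p/2$ only odd $i$ contribute, applies \eqref{eqn:stirling}, and shows the resulting series $\sum_{s\geq m} B_s$ is nonzero by observing that $B_s$ alternates in sign with $s$ and $|B_{s+1}| < |B_s|$ for $m$ large.

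However, your identification of the dominant piece is backwards, and the reason you give for it is the source of the error. You claim the $u^{p-1}$ term (with apparent coefficient $\binom{p}{1}(m-1)$) dominates because "lower odd-$u$ parts carry fewer powers of $(m-1)$." But once you expand $u^{p-i} = (m-1)^{p-i}\bigl(1 + 4k/(m-1)^2\bigr)^{(p-i)/2}$, the prefactor $(m-1)^{p-i}$ recombines with the $(m-1)^i$ from the binomial to give $(m-1)^p$ uniformly in $i$ --- so the power of $(m-1)$ distinguishes nothing. What actually distinguishes the contributions is the decay of $\binom{j/2}{s}$ in $s$ for odd $j$: $|\binom{j/2}{s}| \sim c_j\, s^{-(j/2+1)}$, so the \emph{lowest} odd power $u^1$ (with $\binom{1/2}{s} \sim s^{-3/2}$) is the slowest-decaying and hence dominant at $s = m$ for $m$ large, not $u^{p-1}$ (with $\binom{(p-1)/2}{s} \sim s^{-(p+1)/2}$). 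A second issue: you describe the resulting coefficients as "nonzero and of a fixed sign," but $\binom{j/2}{s}$ alternates in sign with $s$ for $s > j/2$, so the Stirling-transformed series $\sum_s B_s$ is an alternating series in $s$ rather than one of fixed sign. The paper's argument relies precisely on this alternation combined with $|B_{s+1}/B_s| < 1$ (coming from $\stir{s+1}{m}/\stir{s}{m} \cdot 4/(m-1)^2 < 1$ for $m$ large, and $|\binom{i/2}{s+1}/\binom{i/2}{s}| < 1$); without a corrected dominance argument or this alternating-series structure, the tail estimate you sketch would not close.
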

\begin{proof}
First we have
\[
r_2^p(k) = \frac{(m-1)^p}{2^p}\sum_{s=0}^\infty \sum_{i=0}^p \binom{p}{i} (-1)^i \binom{i/2}{s} \frac{4^s}{(m-1)^{2s}} k^s.
\]
When $s > p/2$, the binomial coefficient $\binom{i/2}{s}$ vanishes if $i$ is an even integer. Plugging in \eqref{eqn:stirling} we obtain the gap contribution
\[
-\frac{(m-1)^p m!}{2^p} \sum_{s\geq m} \stir{s}{m}\frac{4^s}{(m-1)^{2s}} \sum_{\substack{\text{odd }i\\ 1\leq i\leq p-1}}\binom{p}{i}\binom{i/2}{j}.
\]
Hence it suffices to show that $\sum_s B_s \neq 0$, where
\[
B_s = \stir{s}{m}\frac{4^s}{(m-1)^{2s}} \sum_{\substack{\text{odd }i\\ 1\leq i\leq p-1}}\binom{p}{i}\binom{i/2}{s}.
\]
Note that $B_s$ has alternating signs, so it suffices to show that $|B_{s+1}|<|B_s|$. Indeed,
\[
\frac{\stir{s+1}{m}\frac{4^{s+1}}{(m-1)^{2(s+1)}}}{\stir{s}{m}\frac{4^s}{(m-1)^{2s}}} = \frac{4}{(m-1)^2}\cdot \frac{\stir{s+1}{m}}{
\stir{s}{m}} \leq \frac{8m}{(m-1)^2} < 1
\]
when $m$ is large enough, and
\[
\left|\frac{\binom{i/2}{s+1}}{\binom{i/2}{s}}\right| = \left|\frac{s-\frac{i}{2}}{s+1}\right| < 1.
\]
The proof is now complete.
\end{proof}
It also follows from the proof that for the same large $m$, the gap from $r_i(k)$ has the same sign for all even $p$ up to some $p_0$ depending on $m$. This implies that when $f$ is an even polynomial, the gap contribution from $r_i(k)$ is non-zero.

\section{Algorithm for Even $p$}\label{sec:upperbound}
We first recall the classic result on \text{Count-Sketch} \cite{ccf04}.
\begin{theorem}[\textsc{Count-Sketch}]
There is a randomized linear function $M:\R^n\to \R^S$ with $S = O(w\log(n/\delta))$ and a recovery algorithm $A$ satisfying the following. For any $x\in \R^n$, with probability $\geq 1-\delta$, $A$ reads $Mx$ and outputs $\tilde x\in\R^n$ such that $\|\tilde x-x\|_\infty^2\leq \|x\|_2^2/w$. 
\end{theorem}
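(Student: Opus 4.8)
The plan is to realize $M$ as the classical Count-Sketch structure \cite{ccf04} and analyze its point-query guarantee by a second-moment (Chebyshev) bound per repetition, amplified by taking a median over independent repetitions. Concretely, set $B=3w$ buckets and $d=\Theta(\log(n/\delta))$ repetitions, so that the total number of counters is $S=dB=O(w\log(n/\delta))$. For each repetition $\ell\in[d]$ draw a pairwise-independent hash function $h_\ell\colon[n]\to[B]$ and a pairwise-independent sign function $s_\ell\colon[n]\to\{-1,+1\}$, mutually independent and independent across the $d$ repetitions (each describable with $O(\log n)$ bits). Define $M$ to be the linear map whose $(\ell,b)$ coordinate is $C_\ell[b]=\sum_{i:\,h_\ell(i)=b}s_\ell(i)x_i$; linearity in $x$ is immediate. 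The recovery algorithm $A$ sets $\tilde x_i=\operatorname{median}_{\ell\in[d]}\bigl(s_\ell(i)\,C_\ell[h_\ell(i)]\bigr)$ for each $i\in[n]$.

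For a fixed coordinate $i$ and repetition $\ell$, write $s_\ell(i)C_\ell[h_\ell(i)]=x_i+Z_{i,\ell}$, where $Z_{i,\ell}=\sum_{j\ne i:\,h_\ell(j)=h_\ell(i)}s_\ell(i)s_\ell(j)x_j$. Conditioning on $h_\ell$ and using the zero mean and pairwise independence of the signs gives $\E[Z_{i,\ell}\mid h_\ell]=0$ and $\E[Z_{i,\ell}^2\mid h_\ell]=\sum_{j\ne i}\mathbf{1}[h_\ell(j)=h_\ell(i)]\,x_j^2$; averaging over the pairwise-independent $h_\ell$ yields $\E[Z_{i,\ell}^2]=\tfrac1B\sum_{j\ne i}x_j^2\le\|x\|_2^2/B$. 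Chebyshev's inequality then gives $\Pr[\,|Z_{i,\ell}|\ge\|x\|_2/\sqrt w\,]\le(\|x\|_2^2/B)/(\|x\|_2^2/w)=\tfrac13$, i.e.\ the $\ell$-th single-row estimate lies in the window $[x_i-\|x\|_2/\sqrt w,\ x_i+\|x\|_2/\sqrt w]$ with probability at least $2/3$.

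To conclude, note that for a fixed $i$ the $d$ estimates are independent, so their median leaves that window only if at least $d/2$ of them do, an event of probability at most $\Pr[\operatorname{Bin}(d,\tfrac13)\ge d/2]\le e^{-\Omega(d)}$; taking $d=\Theta(\log(n/\delta))$ with a suitable constant makes this at most $\delta/n$. A union bound over $i\in[n]$ then shows that with probability at least $1-\delta$ every coordinate satisfies $|\tilde x_i-x_i|\le\|x\|_2/\sqrt w$, which is exactly $\|\tilde x-x\|_\infty^2\le\|x\|_2^2/w$; the space bound $S=O(w\log(n/\delta))$ was fixed at the outset.

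The only step requiring genuine care is the median amplification: a per-repetition guarantee that holds with only a constant probability must be boosted to $1-\delta/n$ so that a union bound over all $n$ coordinates still leaves total failure probability $\delta$, and this is precisely where the $\log(n/\delta)$ factor in $S$ comes from; the argument relies on the repetitions being independent so that the number of ``bad'' repetitions concentrates below $d/2$. The remaining ingredients—linearity of the sketch and the variance computation under pairwise independence—are routine.
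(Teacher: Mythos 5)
The paper does not supply a proof of this statement; it is quoted verbatim as the classic Count-Sketch guarantee and attributed to \cite{ccf04}. Your argument is the standard proof of that result (pairwise-independent hash and sign per repetition, second-moment bound plus Chebyshev to get a $2/3$ per-repetition guarantee with $B=3w$ buckets, median amplification over $d=\Theta(\log(n/\delta))$ independent repetitions, union bound over the $n$ coordinates), and it is correct as written. Two minor points you might make explicit if you wanted to be fully self-contained: the reason pairwise (rather than four-wise) independence of the signs suffices in the variance computation is that $s_\ell(i)^2=1$ deterministically, so only products of at most two distinct sign values ever appear; and the $O(\log n)$-bit seeds per repetition concern the randomness used, which is separate from the stated bound $S=O(w\log(n/\delta))$ on the sketch dimension. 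Neither affects correctness.
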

We also need a result on $\ell_2$-sampling. We say $x$ is an $(c,\delta)$-approximator to $y$ if $(1-c)y-\delta \leq x\leq (1+c)y+\delta$.
\begin{theorem}[Precision Sampling \cite{ako11}]\label{thm:ell_2_sampling}
Fix $0 < \epsilon < 1/3$. There is a randomized linear function
$M : \R^n \to \R^S$, with $S = O(\epsilon^{-2} \log^3 n)$, and an ``$\ell_p$-sampling algorithm $A$'' satisfying the following. For any non-zero $x \in \R^n$, there is a distribution $D_x$ on $[n]$ such that $D_x(i)$ is an $(\epsilon, 1/\poly(n))$-approximator to $|x_i|^2/\|x\|_2^2$. Then $A$ generates a pair $(i, v)$ such that $i$ is drawn from $D_x$ (using the
randomness of the function $M$ only), and $v$ is a $(\epsilon, 0)$-approximator to $|x_i|^2$.
\end{theorem}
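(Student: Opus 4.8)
The plan is to follow the precision-sampling / $\ell_2$-sampling construction of Andoni, Krauthgamer and Onak. The first step is the standard reciprocal-rescaling trick: draw i.i.d.\ standard exponential variables $u_1,\dots,u_n$, kept implicitly through a pseudorandom generator (or a bounded-independence family) so that only $\poly(\log n)$ bits of randomness are stored, and form the vector $y$ with $y_i = x_i/\sqrt{u_i}$. Multiplying coordinate $i$ of the stream by $u_i^{-1/2}$ is a linear operation, so $y$ can be maintained under turnstile updates to $x$ and fed into a \textsc{Count-Sketch} structure as in the preceding theorem, taking $w = O(\epsilon^{-2}\log^2 n)$ buckets and $O(\log n)$ independent rows; this occupies $S = O(\epsilon^{-2}\log^3 n)$ coordinates, and at query time \textsc{Count-Sketch} returns $\hat y$ with $\|\hat y - y\|_\infty^2 \le \|y\|_2^2/w$ (indeed with the few largest entries of $y$ excluded inside the norm). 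The algorithm $A$ then reports $i^\ast = \argmax_i |\hat y_i|$ together with $v = u_{i^\ast}\hat y_{i^\ast}^2$ as its estimate of $|x_{i^\ast}|^2$.

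The analysis rests on three claims. (i) \emph{Distribution of the reported index.} Since $u_i/x_i^2$ is exponential with rate $x_i^2$, the minimizer of $u_i/x_i^2$, equivalently the maximizer of $y_i^2 = x_i^2/u_i$, equals $i$ with probability exactly $x_i^2/\|x\|_2^2$ (the ``exponential race''), so as long as \textsc{Count-Sketch} returns the true argmax, the reported index is distributed proportionally to $|x_i|^2$. (ii) \emph{Recovery and error accounting.} One must ensure the largest scaled entry dominates both the \textsc{Count-Sketch} noise floor $\|y\|_2^2/w$ and the runner-up by a $(1+\Omega(\epsilon))$ factor; achieving this requires controlling $\|y\|_2$ (which after rescaling has infinite expectation), done by an appropriate truncation of the too-large coordinates, after which $\|y\|_2^2 \le \poly(\log n)\,\|x\|_2^2$ except with probability $1/\poly(n)$. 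The residual bad cases — a truncated heavy coordinate, or a maximum not separated from the noise floor or from the second-largest entry — are precisely where the $(1\pm\epsilon)$ multiplicative slack and the $1/\poly(n)$ additive slack are spent, yielding that $D_x(i)$ is an $(\epsilon, 1/\poly(n))$-approximator to $|x_i|^2/\|x\|_2^2$. (iii) \emph{Value estimate and space.} On the good event $|\hat y_{i^\ast} - y_{i^\ast}| \le \|y\|_2/\sqrt{w}$ is an $\epsilon$-fraction of $|y_{i^\ast}|$ (by the separation of the maximum), so $\hat y_{i^\ast}^2 = (1\pm\epsilon)\,y_{i^\ast}^2$ and $v = u_{i^\ast}\hat y_{i^\ast}^2 = (1\pm\epsilon)\,|x_{i^\ast}|^2$, an $(\epsilon,0)$-approximator; the space is the \textsc{Count-Sketch} size $O(\epsilon^{-2}\log^3 n)$ plus $\poly(\log n)$ for the seed.

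The main obstacle is claim (ii). After the reciprocal rescaling the scaled vector has heavy tails and $\E\|y\|_2^2$ is infinite, so one cannot avoid a truncation step, and one must simultaneously argue that, after truncation, (a) the heaviest scaled coordinate towers over the \textsc{Count-Sketch} noise floor so that it is detected, and (b) it towers over the second-heaviest so that it is read off correctly — each only up to a $1/\poly(n)$ failure probability and a $(1\pm\epsilon)$ distortion. Because consecutive order statistics of exponentials have exponentially distributed gaps, the ``well-separated maximum'' event is not for free, and the careful part of the proof, exactly as in Andoni--Krauthgamer--Onak, is to fold all of the distortions — from tail truncation, from bounded-independence randomness, and from the separation threshold — into one clean $(\epsilon, 1/\poly(n))$ guarantee on $D_x$ and an $(\epsilon,0)$ guarantee on $v$; the rest is routine bookkeeping and the space bound follows by summing the \textsc{Count-Sketch} cost and the seed storage.
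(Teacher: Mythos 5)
You do recover a valid $\ell_2$-sampler, but your construction does not match the one the paper cites and later instantiates. The theorem is quoted from Andoni--Krauthgamer--Onak, whose precision sampling draws $u_i\sim\Unif(0,1)$ and detects a sample via a threshold test ($y_i^2\geq C\|x\|_2^2/\epsilon$, returning failure when the crossing is not unique); this is exactly what the paper's one-paragraph sketch says and exactly what Algorithm~\ref{alg:even_p} implements --- it sets $D=\diag\{1/\sqrt{u_1},\dots,1/\sqrt{u_n}\}$ with $u_i\sim\Unif(0,1)$ and loops over $R=\Theta(\log n)$ repetitions until it finds a unique index with $\|\tilde b_{i'}\|_2^2\geq C'L/\epsilon$. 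You instead draw exponential $u_i$, never fail, and always output $\argmax_i|\hat y_i|$, relying on the exponential-race identity $\Pr[\argmax_i x_i^2/u_i=i]=x_i^2/\|x\|_2^2$ to make the ideal sampling distribution exact. That is a genuinely different route, and one you mis-attribute to AKO. Each has its merits: your exponentials give the ideal distribution with no threshold bookkeeping, whereas the uniform-plus-threshold version lets the $1/\poly(n)$ additive slack be spent simply on $\Theta(\log n)$-fold repetition. In your version the bad events (a near-tie between the two largest scaled entries, or the \textsc{Count-Sketch} noise floor swamping them after truncation) cannot be killed off by repetition and must be absorbed into the $(1\pm\epsilon)$ multiplicative slack; that works, but it needs a uniform $O(\epsilon)$ bound on the conditional near-tie probability and an argument that the false-positive mass landing on $D_x(i)$ is $O(\epsilon)\cdot x_i^2/\|x\|_2^2$ rather than an additive $O(\epsilon)$. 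You correctly flag this as the crux but do not carry it out. Since the paper only quotes the theorem, either construction would serve the downstream application, but a reader tracing your sketch into Algorithm~\ref{alg:even_p} would find the two do not line up.
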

The basic idea is to choose $u_{1},\dots, u_{n}$ with $u_i\sim \Unif(0,1)$ and hash $y_i = x_i/\sqrt{u_i}$ using a \textsc{Count-Sketch} structure of size $\Theta(w\log n)$ (where $w=\Theta(\epsilon^{-1}\log n + \epsilon^{-2})$), and recover the heaviest $y_i$ and thus $x_i$ if $y_i$ is the unique entry satisfying $y_i \geq C\|x\|_2^2/\epsilon$ for some absolute constant $C$, which happens with the desired probability $|x_i|^2/\|x\|_2^2 \pm 1/\poly(n)$. The estimate error of $x_i$ follows from \textsc{Count-Sketch} guarantee.

Now we turn to our algorithm. Let $A=(a_{ij})$ be an integer matrix and suppose that the rows of $A$ are $a_1,a_2,\dots$. There are $O(1)$ non-zero entries in each row and each column. Assume $p\geq 4$. We shall use the structure for $\ell_2$ sampling on $n$ rows while using a bigger underlying \textsc{Count-Sketch} structure to hash all $n^2$ elements of a matrix. 

For simplicity, we present our algorithm in Algorithm~\ref{alg:even_p} with the assumption that $u_1,\dots,u_n$ are i.i.d.\ $\Unif(0,1)$. The randomness can be reduced using the same technique in \cite{ako11} which uses $O(\log n)$ seeds.

\begin{algorithm}[tbh]
\vspace{2mm}
\caption{Algorithm for even $p$ and sparse matrices}\label{alg:even_p}
\begin{algorithmic}[1]
\Statex Assume that matrix $A$ has at most $k = O(1)$ non-zero entries per row and per column.
\State $T \gets \Theta(n^{1-2/p}/\epsilon^{2})$
\State $R \gets \Theta(\log n)$
\State $w \gets O(\epsilon^{-1}\log n + \epsilon^{-2})$
\State $I_s \gets \emptyset$ is a multiset for $s=1,\dots,p/2$
\State Choose i.i.d.\ $u_{1},\dots, u_{n}$ with $u_i\sim \Unif(0,1)$.
\State $D \gets \diag\{1/\sqrt{u_{1}},\dots,1/\sqrt{u_{n}}\}$
\State In parallel, maintain $p/2$ \textsc{Count-Sketch} structures $\mathcal{S}_s$ ($s\in[p]$) of size $\Theta(\epsilon^{-1}T\log n)$.
\State Maintain a sketch for estimating $\|A\|_F^2$ and obtain a $(1+\epsilon)$-approximation $L$ as in \cite{ams99}
\State In parallel, maintain $pT/2$ structures $\mathcal{P}_{s,t}$ ($(s,t)\in [p/2]\times [T]$), each has $R$ repetitions of the Precision Sampling structure for all $n^2$ entries of $B = DA$, $t=1,\dots, T$. The Precision Sampling structure uses a \textsc{Count-Sketch} structure of size $O(w\log n)$.
\State Maintain a sketch for estimating $\|B\|_F^2$ and obtain an $(1+\epsilon)$-approximation $L'$ as in \cite{ams99}
\For{$s\gets 1 \textrm{ to } p/2$}
\For{$t\gets 1 \textrm{ to } T$}
\For{$r\gets 1 \textrm{ to } R$}
\State Use the $r$-th repetition of the $\mathcal{P}_{s,t}$ to obtain estimates $\tilde b_{i'1},\dots, \tilde b_{i'n}$ for all $i'$ and form rows $\tilde b_{i'} = (b_{i'1},\dots,b_{i'n})$.
\State If there exists a unique $i'$ such that $\|\tilde b_i'\|_2^2 \geq C'L/\epsilon$ for some appropriate absolute constant $C'$, return $i'$ and exit the inner loop 
\EndFor
\State Retain only entries of $b_{i'}$ that are at least $2L'/\sqrt{w}$.\label{alg:step:threshold}
\State $\tilde a_{i'} \gets \sqrt{u_{i'}}\tilde b_{i'}$
\State $I_s \gets I_s \cup \{i'\}$
\EndFor
\EndFor
\For{$s\gets 1 \textrm{ to } p/2$}
\State Use $\mathcal{S}_s$ to obtain estimates $\tilde a'_{i'1},\dots, \tilde a'_{i'n}$ for all $i'$ and form rows $\tilde a'_{i'} = (a'_{i'1},\dots,a'_{i'n})$.
\State Find all $i$ such that $\|\tilde a'_{i'}\|_2^2\geq L/(10T)$ and retain $O(T)$ of them corresponding to the largest $\|\tilde a'_{i'}\|_2^2$, making a set $K_s$
\State $\tilde a_i \gets \tilde a_i'$ for all $i\in K_s$
\State $I_s = I_s \cup K_s$
\EndFor
\State Return $Y$ as defined in \eqref{eqn:estimator}
\end{algorithmic}
\end{algorithm}

\begin{theorem} For sparse matrices $A$ with $O(1)$ non-zero entries per row and per column, Algorithm~\ref{alg:even_p} returns a value that is a $(1+\epsilon)$-approximation to $\|A\|_p^p$ with constant probability, using space $O(n^{1-2/p}\poly(1/\epsilon,\log n))$, where the constant in the $O$-notation depends on $p$.
\end{theorem}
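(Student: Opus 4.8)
The plan is to verify that the random variable $Y$ returned by Algorithm~\ref{alg:even_p} is an essentially unbiased estimator of $\|A\|_p^p$ with variance $O(\epsilon^2(\|A\|_p^p)^2)$, so that Chebyshev's inequality yields a $(1\pm\epsilon)$-approximation with constant probability, and to bound the space by $O(n^{1-2/p}\poly(\epsilon^{-1},\log n))$ bits. Write $q=p/2$; we may assume $p\ge 4$, i.e.\ $q\ge 2$ (for $p=2$ the quantity is $\|A\|_F^2$, computable in $O(\log n)$ space). The starting point is the expansion $\|A\|_p^p=\tr((AA^T)^q)=\sum_{(i_1,\dots,i_q)\in[n]^q}\prod_{s=1}^q\langle a_{i_s},a_{i_{s+1}}\rangle$ with cyclic indices. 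Because every row and column of $A$ has $O(1)$ nonzeros, the ``overlap graph'' on $[n]$ (an edge $\{i,j\}$ whenever rows $i$ and $j$ share a nonzero column) has maximum degree $O(1)$, so only closed walks of length $q$ in this graph contribute, and from any fixed vertex there are $O_p(1)$ such walks --- this sparsity is what keeps every error sum under control.

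For the estimator, split the rows into heavy rows $\mathcal H=\{i:\|a_i\|_2^2\ge\|A\|_F^2/(10T)\}$ (so $|\mathcal H|\le 10T$) and light rows, and correspondingly $AA^T=\sum_{i\in\mathcal H}a_ia_i^T+\sum_{i\notin\mathcal H}a_ia_i^T$. In the idealized analysis the heavy part is recovered exactly: a heavy row has a coordinate of square-weight $\Omega(\|A\|_F^2/T)$, so the \textsc{Count-Sketch} $\mathcal S_s$ of width $\Theta(\epsilon^{-1}T\log n)$ finds $\mathcal H$ (this is the set $K_s$) and returns its entries with $\ell_\infty$-error well below the threshold of step~\ref{alg:step:threshold}. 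The light part is recovered in expectation by $\ell_2$-sampling: scaling row $i$ by $1/\sqrt{u_i}$ makes row $i$ the dominant row of $B=DA$ with probability $\|a_i\|_2^2/\|A\|_F^2\pm 1/\poly(n)$, so by Theorem~\ref{thm:ell_2_sampling} each $\mathcal P_{s,t}$ returns such a row together with noisy estimates of its entries. Thus for each level $s\in[q]$ the algorithm effectively builds $\hat M_s=\sum_{i\in\mathcal H}a_ia_i^T+\frac{1}{T}\sum_{j=1}^T\frac{\|A\|_F^2}{\|a_{i_{s,j}}\|_2^2}\,\mathbf{1}[i_{s,j}\notin\mathcal H]\,a_{i_{s,j}}a_{i_{s,j}}^T$, an unbiased estimate of $AA^T$ supported on $O(T)$ rows (the indicator, computable since $K_s$ is known, is needed because streaming $\ell_2$-sampling cannot be conditioned to skip heavy rows, which would otherwise be sampled with probability $\approx 1$ and spoil unbiasedness), and $Y$ is $\tr(\hat M_1\cdots\hat M_q)$ with the true rows replaced by their sketched, thresholded approximations $\tilde a_i$. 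As the $\hat M_s$ are independent over $s$ and $\tr(\cdot)$ is linear in each factor, $\E Y=\tr((AA^T)^q)=\|A\|_p^p$, up to the recovery errors handled last.

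Writing $\hat M_s=AA^T+E_s$ with $\E E_s=0$ and the $E_s$ independent, $\Var Y$ expands into $O_p(1)$ traces each containing at least one $E_s$; by independence the dominant ones have exactly one $E_s$, and with $C:=(AA^T)^{q-1}$ one obtains
\[
\E\big[(\tr(CE_s))^2\big]=\frac{\|A\|_F^2}{T}\sum_{i\notin\mathcal H}\frac{(a_i^TCa_i)^2}{\|a_i\|_2^2}\le\frac{\|A\|_F^2}{T}\sum_i\|Ca_i\|_2^2=\frac{\|A\|_2^2}{T}\,\tr\big((AA^T)^{2q-1}\big)=\frac{\|A\|_2^2\,\|A\|_{2p-2}^{2p-2}}{T},
\]
where the inequality is Cauchy--Schwarz, $a_i^TCa_i\le\|a_i\|_2\|Ca_i\|_2$. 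This is precisely the matrix lift of the scalar bound $F_2\cdot F_{2p-2}$ for $F_p$-moment estimation, and the same power-mean inequalities --- $\|A\|_2^2\le n^{1-2/p}\|A\|_p^2$ and $\|A\|_{2p-2}^{2p-2}\le\|A\|_p^{2p-2}$ (valid since $2p-2\ge p$) --- give $\E[(\tr(CE_s))^2]\le n^{1-2/p}(\|A\|_p^p)^2/T$. Summing over the $q$ levels and checking that terms with two or more $E_s$'s only gain further $1/T$ factors (bounded by the same Cauchy--Schwarz/telescoping argument applied to the interleaving powers of $AA^T$), one gets $\Var Y\le\tilde O_p(n^{1-2/p}(\|A\|_p^p)^2/T)$, which with $T=\Theta(n^{1-2/p}/\epsilon^2)$ is $O(\epsilon^2(\|A\|_p^p)^2)$.

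It then remains to pass from the idealized $\hat M_s$ to what the sketches actually produce: boost the per-sample failure probability over the $R=\Theta(\log n)$ repetitions and union-bound over the $O(T\log n)$ recoveries, and show the row noise is harmless. For the latter, Precision Sampling gives $\|\tilde b_{i'}-b_{i'}\|_\infty^2\le\|B\|_F^2/w$, so keeping only entries $\ge 2L'/\sqrt w$ (step~\ref{alg:step:threshold}) deletes all spurious mass while, by the $O(1)$-sparsity of the genuine row, retaining every genuine entry of square-weight $\Omega(\|B\|_F^2/w)$; rescaling by $\sqrt{u_{i'}}$ undoes $D$, and likewise for the \textsc{Count-Sketch} estimates of heavy rows. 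Since all rows and columns are $O(1)$-sparse, $\langle\tilde a_i,\tilde a_j\rangle$ differs from $\langle a_i,a_j\rangle$ by $O(1)$ entrywise-error terms, a closed walk of length $q$ touches $O_p(1)$ such inner products, and there are $O_p(T)$ walks in the estimator, so taking $w=\poly(\epsilon^{-1}\log n)$ makes the total perturbation of $Y$ at most $\epsilon\|A\|_p^p$. For the space: there are $pT/2$ structures $\mathcal P_{s,t}$, each $R$ \textsc{Count-Sketch} tables of size $O(w\log n)$, i.e.\ $\poly(\epsilon^{-1}\log n)$ words; $p/2$ sketches $\mathcal S_s$ of size $\Theta(\epsilon^{-1}T\log n)$ words; the $\|A\|_F^2$- and $\|B\|_F^2$-estimators of \cite{ams99} and the seed for the $u_i$'s (reduced to $O(\log n)$ seeds as in \cite{ako11}) use $\poly(\log n)$ --- in total $O(n^{1-2/p}\poly(\epsilon^{-1},\log n))$ words, hence the same in bits up to an $O(\log n)$ factor, with the constant depending on $p$. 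The main obstacle is the variance bound: driving it down to $n^{1-2/p}$ (rather than the trivial $n^{2-4/p}$) requires exactly the Cauchy--Schwarz/telescoping passage to $\tr((AA^T)^{2p-2})$ --- so that the scalar $F_p$-style power-mean inequalities apply --- together with the $O(1)$-sparsity to bound the number of contributing closed walks, and then carrying the heavy/light decomposition and the noisy-row errors through the product-of-sketches estimator.
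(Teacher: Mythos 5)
Your proof proposal is correct and uses the same underlying algorithmic ideas (the identity $\|A\|_p^p=\tr((AA^T)^q)$, the heavy/light split via $K_s$ and \textsc{Count-Sketch}, $\ell_2$-sampling of light rows, and the $O(1)$-row/column sparsity to control the number of contributing closed walks), but it organizes the variance analysis differently in a way that is cleaner than the paper's. The paper bounds $\E Y^2-(\E Y)^2$ summand-by-summand over pairs of index tuples $(\bar i,\bar j)$, groups by the number $r$ of coinciding light indices, bounds each summand by $\tfrac{1}{T^r}\|A\|_F^{2r}(\max_i\|a_i\|^2)^{2q-r}$, and then invokes the $O(1)$-sparsity to convert this into $\sum_r \tfrac{1}{T^r}\|A\|_F^{2r}\|A\|_p^{2p-2r}$. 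You instead write $Y=\tr(\hat M_1\cdots\hat M_q)$ with independent sketches $\hat M_s=AA^T+E_s$, $\E E_s=0$, note that $\Var Y=\sum_{S\neq\emptyset}\Var T_S$ because the $T_S$ are uncorrelated, and compute $\Var\tr(CE_s)$ exactly for the dominant $|S|=1$ terms, where Cauchy--Schwarz $a_i^T C a_i\le\|a_i\|\|Ca_i\|$ gives $\|A\|_F^2\|A\|_{2p-2}^{2p-2}/T$; the power-mean inequalities then yield $n^{1-2/p}(\|A\|_p^p)^2/T$, exactly matching the paper's $r=1$ bound. This is a genuinely nicer route for the leading term, makes the analogy to scalar $F_p$-moment estimation ($F_2\cdot F_{2p-2}$) transparent, and sidesteps the paper's opaque passage from a per-summand $\max$ bound to a Schatten-norm bound. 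Two small caveats. First, your treatment of the $|S|\ge 2$ terms (``only gain further $1/T$ factors\ldots by the same Cauchy--Schwarz/telescoping argument'') is a claim, not a proof; when one actually tries to propagate the matrix Cauchy--Schwarz through a term like $\tr(E_1(AA^T)^a E_2(AA^T)^b)$, the naive chain of inequalities can leave spurious factors of $n^{\Theta(1)}$ unless one is careful about which factor is bounded by the operator norm and which by a trace (in fact the paper's bound $\|A\|_F^{2r}\|A\|_{2(2q-r)}^{2(2q-r)}/T^r$ is what one should be aiming for, and the paper is itself somewhat terse on this step). Second, the displayed first line of your variance computation should be $\le$ rather than $=$, since you drop the subtracted $(\E\tr(CZ_1))^2$ term; and the indicator $\mathbf 1[i\notin\mathcal H]$ in your $\hat M_s$ is a (correct) cleanup of the paper's algorithm, which as written would double-count a heavy row that also happens to be sampled. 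None of this affects the conclusion.
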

\begin{proof}
It is the guarantee from the underlying \textsc{Count-Sketch} structure of size $\Theta(w\log n)$ (where $w=O(\epsilon^{-1}\log n+\epsilon^{-2})$) that
\[
\tilde b_{i'j} = b_{i'j} \pm \sqrt{\frac{\|B\|_F^2}{w}}
\]
for all $j$. Since there are only $O(1)$ non-zero entries in $b_{i'}$, we can use a constant-factor larger size $w' = O(w)$ for \textsc{Count-Sketch} such that 
\[
\tilde b_{i'j} = b_{i'j} \pm \sqrt{\frac{\|B\|_F^2}{w'}}
\]
and thus
\begin{equation}\label{eqn:CS_error_row_norm}
\|\tilde b_{i'}\|_2^2 = \|b_{i'}\|_2^2 \pm \frac{\|B\|_F^2}{w}.
\end{equation}
Since each row $i$ is scaled by the same factor $1/\sqrt{u_i}$, we can apply the proof of Theorem~\ref{thm:ell_2_sampling} to the vector of row norms $\{\|a_i\|_2\}$ and $\{\|b_i\|_2\}$, which remains still valid because of the error guarantee \eqref{eqn:CS_error_row_norm} which is analogous to the $1$-dimensional case. It follows that with probability $\geq 1-1/n$ (since there are $\Theta(\log n)$ repetitions in each of the $T$ structures), an $i'$ is returned from the inner for-loop such that
\begin{equation}\label{eqn:ell_2_sampling_original}
\Pr\{i' = i\} = (1\pm \epsilon)\frac{\|a_i\|_2^2}{\|A\|_F^2} \pm \frac{1}{\poly(n)}.
\end{equation}
Next we analyse estimation error. It holds with high probability that $\|B\|_F^2\leq w\|A\|_F^2$. Since $a_i$ (and thus $b_i$) has $O(1)$-elements, the heaviest element $a_{i'j'}$ (resp. $b_{i'j'})$ has weight at least a constant fraction of $\|a_i\|_2$ (resp. $\|b_i\|_2$). It follows from the thresholding condition of the returned $\|b_i\|_2$ that we can use a constant big enough for $w' = O(w)$ to obtain
\[
\tilde a_{i'j'} = \sqrt{u_{i'}}\cdot \tilde b_{i'j'} = (1\pm \epsilon) a_{i'j'},
\]
Suppose that the heaviest element is $b_{ij}$. Similarly, if $|a_{i\ell}| \geq \eta|a_{ij}|$ (where $\eta$ is a small constant to be determined later), making $w' = \Omega(w/\eta)$, we can recover 
\[
\tilde a_{i\ell} = \sqrt{u_i}\cdot \tilde b_{i\ell} = a_{i\ell} \pm \epsilon\eta a_{ij} = (1\pm \epsilon)a_{i\ell}.
\]
Note that there are $O(1)$ non-zero entries $a_{i\ell}$ such that $|a_{i\ell}|\leq \eta|a_{ij}|$ and each of them has at most $\Theta(\epsilon\eta a_{ij})$ additive error by the threshold in Step~\ref{alg:step:threshold}, the approximation $\tilde a_i$ to $a_i$ therefore satisfies
\[
\|\tilde a_i - a_i\|_2^2 \leq \epsilon^2 \|a_i\|_2^2 + O(1) \cdot \epsilon^2 \eta^2 \|a_i\|_2^2 \leq 2\epsilon^2 \|a_i\|_2^2
\]
by choosing an $\eta$ small enough. It follows that $\|\tilde a_i\|_2$ is a $(1+\Theta(\epsilon))$-approximation to $\|a_i\|_2$, and $|\langle\tilde a_i,\tilde a_j\rangle|= |\langle a_i,a_j\rangle| \pm \Theta(\epsilon)\|a_i\|_2 \|a_j\|_2$. 

Similarly by a standard heavy hitter argument, with probability $\Omega(1)$, the set $K_s$ contains all $i$ such that $\|a_{i}\|_2^2\geq \epsilon\|A\|_F^2/T$, if we choose size of the \text{Count-Sketch} structure with a large enough heading constant. This implies that if $i\in I_s\setminus K_s$, then $\|a_{i}\|_2^2\leq \epsilon\|A\|_F^2/T$.

Next we show that our estimate is desirable. First, we observe that the additive $1/\poly(n)$ term in \eqref{eqn:ell_2_sampling_original} can be dropped at the cost of increasing the total failure probability by $1/\poly(n)$. Hence we may assume in our analysis that
\begin{equation}\label{eqn:ell_2_sampling}
\Pr\{i' = i\} = (1\pm \epsilon)\frac{\|a_i\|_2^2}{\|A\|_F^2}.
\end{equation}
For notational simplicity let $q=p/2$ and $\ell_{i} = \|\tilde a_{i}\|_2^2$ if $i$ is a sampled row. For $i_s\in I_s$, define
\[
\tau(i_s) = \begin{cases}
			  1,            & i_s \in K_s;\\
			  \|A\|_F^2/\|a_{i_s}\|_2^2, & \text{otherwise}.
		   \end{cases},\qquad
\tilde \tau(i_s) = \begin{cases}
			  1,            & i_s \in K_s;\\
			  L/\ell_{i_s}, & \text{otherwise}.
		   \end{cases}	
\]
and for $(i_1,\dots,i_q)\in I_1\times \cdots \times I_q$, define
\begin{gather*}
\tau(i_1,\dots,i_q) = \tau(i_1)\cdots\tau(i_q)\\
\tilde\tau(i_1,\dots,i_q) = \tilde\tau(i_1)\cdots\tilde\tau(i_q)
\end{gather*}
and
\begin{gather*}
X(i_1,\dots,i_q) = \prod_{j=1}^q \langle a_{i_i}, a_{i_{j+1}}\rangle \tau(i_1)\cdots\tau(i_q)\\
\tilde X(i_1,\dots,i_q) = \prod_{j=1}^q \langle \tilde a_{i_i}, \tilde a_{i_{j+1}}\rangle \tilde\tau(i_1)\cdots\tilde\tau(i_q),
\end{gather*}
where it is understood that $a_{i_{q+1}}=a_{i_1}$. Also let
\[
p_{s,t}(i) = \Pr\{\text{row }i\text{ gets sampled in }(s,t)\text{-th precision sampling}\}.
\]
We claim that
\[
\|A\|_p^p = \sum_{1\leq i_1,\dots,i_q\leq n} \prod_{j=1}^q \langle a_{i_j},a_{i_{j+1}}\rangle.
\]
When $q = p/2$ is odd,
\begin{align*}
 \|A\|_p^p &= \|\underbrace{(A^TA)\cdots (A^TA)}_{(q-1)/2\text{ times}}A^T\|_F^2 \\
&= \sum_{k, \ell} \sum_{i_1,\dots,i_{q-1}} (A^T_{k,i_1} A_{i_1,i_2} A^T_{i_2, i_3} A_{i_3, i_4} \cdots A_{i_{q-2},i_{q-1}} A^T_{i_{q-1},\ell})^2 \\
&= \sum_{k, \ell} \sum_{\substack{i_1,\dots,i_{q-1}\\j_1,\dots,j_{q-1}}} A_{i_1,k}A_{j_1,k}A_{i_1,i_2}A_{j_1,j_2}\cdots A_{\ell,i_{q-1}}A_{\ell,j_{q-1}}\\
&= \sum \langle a_{i_1},a_{j_1}\rangle \!\cdot\!\!\!\!\!\!\! \prod_{\substack{\text{odd }t\\ 1\leq t\leq q-2}} \!\!\!\!\! \langle a_{i_t},a_{i_{t+2}}\rangle \langle a_{j_t},a_{j_{t+2}}\rangle\!\cdot\! \langle a_{i_{q-2}},a_\ell\rangle \langle a_{j_{q-2}},a_\ell\rangle,
\end{align*}
which is a `cyclic' form of inner products and the rightmost sum is taken over all appearing variables ($i_t$, $j_t$ and $\ell$) in the expression. A similar argument works when $q$ is even. 

Our estimator is
\begin{equation}\label{eqn:estimator}
Y = \sum_{i_1\in I_1,\dots,i_q\in I_q} \frac{1}{T^{\sigma(i_1,\dots,i_q)}}\tilde X(i_1,\dots,i_q),
\end{equation}
where
\begin{equation*}
\sigma(i_1,\dots,i_q) = |\{s: i_s\not\in K_s\}|.
\end{equation*}
Then
\[
\left|\E Y - \|A\|_p^p\right| \leq 
\sum_{i_1,\dots,i_q} \left|\frac{\Pr\{i_1\in I_1\}\cdots \Pr\{i_q\in I_q\}}{T^\sigma}\tilde X(i_1,\dots,i_q) - \prod_{j=1}^q \langle a_{i_i}, a_{i_{j+1}}\rangle\right|.
\]
Observe that $i\in I_s\setminus K_s$,
\[
\Pr\{i \in I_s\} = 1-\prod_{t=1}^T (1-p_{s,t}(i)) = (1\pm O(\epsilon))\frac{T}{\tau(i)},
\]
where we used the fact that $p_s(i_s)=(1\pm\epsilon)/\tau(i)$ and $1/\tau(i)\leq \epsilon/T$ for $i\in I_s\setminus K_s$. For $i_s \in K_s$, we have
\[
\Pr\{i \in I_s\} = \frac{1}{\tau(i)} = 1.
\]
Hence
\[
\frac{\Pr\{i_1\in I_1\}\cdots\Pr\{i_q\in I_q\}}{T^\sigma} = (1\pm O(\epsilon))\frac{1}{\tau(i_1)\cdots\tau(i_s)} = (1\pm O(\epsilon))\frac{1}{\tilde\tau(i_1)\cdots\tilde\tau(i_s)}
\]
and
\[
\frac{\Pr\{i_1\in I_1\}\cdots\Pr\{i_q\in I_q\}}{T^\sigma}\tilde X(i_1,\dots,i_q) =  (1\pm O(\epsilon))\prod_{j=1}^q \langle \tilde a_{i_j},\tilde a_{i_{j+1}}\rangle.
\]
it follows that
\[
\left|\E Y - \|A\|_p^p\right| \leq 
\sum_{i_1,\dots,i_q} \left\{\left|\prod_{j=1}^q \langle \tilde a_{i_j},\tilde a_{i_{j+1}}\rangle - \prod_{j=1}^q \langle a_{i_i}, a_{i_{j+1}}\rangle\right| + O(\epsilon) \left|\prod_{j=1}^q \langle \tilde a_{i_j},\tilde a_{i_{j+1}}\rangle\right|\right\}.
\]
The key observation is that each $a_i$ has only $O(1)$ rows with overlapping support, since each row and each column has only $O(1)$ non-zero entries. The same claim holds for $\tilde a_i$, which is due to our threshold in Step~\ref{alg:step:threshold}: for an entry to be retained, it must be larger than $\|B\|_F/\sqrt{w}$ (the uniform additive error from \textsc{Count-Sketch}), which is impossible for zero entries. Therefore, each row $i$ appears in $O(1)$ contributing summands. Each contributing summand is bounded by
\[
\Theta(1)\cdot \epsilon \prod_{j=1}^q \|a_{i_j}\|_2^2 \leq \Theta(1) \cdot \epsilon\max\{\|a_{i_1}\|_2^{2q},\dots,\|a_{i_q}\|_2^{2q}\}.
\]
Therefore
\begin{equation}\label{eqn:tildeX_expectation}
\left|\E Y - \|A\|_p^p\right| \lesssim \epsilon \sum_i \|a_i\|_2^{2q} \leq \epsilon\|A\|_{2q}^{2q}.
\end{equation}
as desired, where the last inequality follows from the fact of Schatten $r$-norms ($r\geq 1$) that $\|M\|_r^r \geq \sum_{i=1}^n |M_{ii}|^r$ and choosing $M = A^TA$ and $r=q$.  

Next we bound the variance. 
\[
\E Y^2 = \E\sum_{\substack{i_1\in I_1,\dots,i_q\in I_q\\j_1\in I_1,\dots,j_q\in I_q}} \frac{1}{T^{\sigma(i_1,\dots,i_q)}T^{\sigma(j_1,\dots,j_q)}} \tilde X(i_1,\dots,i_q)\tilde X(j_1,\dots,j_q).
\]
Similarly to before, the right-hand side can be simplified as 
\begin{equation}\label{eqn:variance_expansion}
\sum_{r=0}^q \sum_{\substack{i_1\in I_1,\dots,i_q\in I_q\\j_1\in I_1,\dots,j_q\in I_q\\ |\{s: i_s=j_s\not\in K_s\}|=r}} (1+O(\epsilon))\frac{\prod_{s: i_s=j_s\not\in K_s} \tau(i_s)}{T^r} \prod_{s=1}^q \langle \tilde a_{i_s},\tilde a_{i_{s+1}}\rangle \prod_{s=1}^q \langle \tilde a_{j_s},\tilde a_{j_{s+1}}\rangle.
\end{equation}
We can upper bound each individual summand as
\begin{align*}
&\quad\ \left|(1+O(\epsilon))\frac{\prod_{s: i_s=j_s\not\in K_s} \tau(i_s)}{T^r} \prod_{s=1}^q \langle \tilde a_{i_s},\tilde a_{i_{s+1}}\rangle \prod_{s=1}^q \langle \tilde a_{j_s},\tilde a_{j_{s+1}}\rangle\right|\\
 &\lesssim \frac{1}{T^r}\frac{\|A\|_F^{2r}}{\prod_{s: i_s=j_s\not\in K_s} \|a_{i_s}\|_2^2}\prod_{s=1}^q \|a_{i_s}\|_2^2 \cdot \prod_{s=1}^q \|a_{j_s}\|_2^2\\
&\leq \frac{1}{T^r}\|A\|_F^{2r} \left(\max_i \|a_i\|_2^2\right)^{2q-r}.
\end{align*}
Now, note that the terms corresponding to $r=0$ in \eqref{eqn:variance_expansion} are covered by the expansion of $(\E Y)^2$. And by the same argument as before, each $i_s$ or $j_s$ appears in $O(1)$ contributing summands, we have that
\[
\E Y^2 - (\E Y)^2 \lesssim \sum_{r=1}^q \frac{1}{T^r}\|A\|_F^{2r} \|A\|_p^{2p-2r} \leq \sum_{r=1}^q \frac{1}{T^r}n^{r(1-\frac2p)} \|A\|_p^{2p},
\]
which implies that
\[
\E Y^2 - (\E Y)^2\leq \epsilon^2 \|A\|_{p}^{2p}
\]
if the constant $C$ in $T=Cn^{1-2/p}/\epsilon^2$ is large enough.

\end{proof}

\section{General Functions and Applications}\label{sec:app}

The following is a direct corollary of Theorem~\ref{thm:schatten}. 
\begin{theorem}\label{thm:general}
Let $f$ be a diagonally block-additive function. Suppose that $f(x)\simeq x^p$ for $x$ near 0 or $x$ near infinity, where $p > 0$ is not an even integer. For any even integer $t$, there exists a constant $c=c(t)>0$ such that any streaming algorithm that approximates $f(X)$ within a factor $1\pm c$ with constant error probability must use $\Omega_t(N^{1-1/t})$ bits of space.
\end{theorem}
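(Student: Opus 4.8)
The plan is to obtain Theorem~\ref{thm:general} from Theorem~\ref{thm:gap_expectation} applied to a rescaled copy of $f$, so that its gap condition \eqref{eqn:gap_expectation} follows from the Schatten-$p$ computation already done for Theorem~\ref{thm:schatten}. Fix a constant $\eta>0$ — small if $f(x)\simeq x^{p}$ near $0$, a large positive integer if $f(x)\simeq x^{p}$ near infinity — and set $g(X):=f(\eta X)$. Scaling preserves block-diagonal structure, so $g$ is again diagonally block-additive and $g(X)=\sum_i f(\eta\sigma_i(X))$; moreover any streaming algorithm that $(1\pm c)$-approximates $f$ yields one for $g$ (multiply every update by $\eta$ when $\eta\in\N$; in the near-$0$ case one instead constructs the rescaled hard instance directly, with bounded entries, the usual benign relaxation of the integer-update convention). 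It therefore suffices to exhibit an even $t$ and an $m=m(p,t)$ for which $g$ satisfies \eqref{eqn:gap_expectation}, and then Theorem~\ref{thm:gap_expectation} yields the $\Omega_t(N^{1-1/t})$ bound.

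To check the gap condition I would take $m=t$, as in the proof of Theorem~\ref{thm:schatten}, reuse the spectrum of $M_{m,k}$ from Section~\ref{sec:schatten_proof} ($\sqrt{\gamma}$ with multiplicity $k-1$, $\sqrt{r_1(k)}$, $\sqrt{r_2(k)}$, and $2m-k-1$ zero singular values), and write $f(x)=cx^{p}+\rho(x)$ with $c>0$ and $\rho(x)=o(x^{p})$ in the relevant limit. Substituting into
\[
\E_{q\sim\mathcal{E}(t)}g(M_{m,q})-\E_{q\sim\mathcal{O}(t)}g(M_{m,q})=\frac{1}{2^{t-1}}\sum_{k=0}^{t}(-1)^{k}\binom{t}{k}\,g(M_{m,k}),
\]
the contributions of the repeated value $\sqrt{\gamma}$ and of the zero singular values cancel through the identity $\sum_{k}(-1)^{k}\binom{t}{k}(ak+b)=0$ ($t\ge 2$), exactly as in the Schatten case, leaving $c\eta^{p}\Delta$ — where $\Delta=\E_{q\sim\mathcal{E}(t)}\|M_{m,q}\|_{p}^{p}-\E_{q\sim\mathcal{O}(t)}\|M_{m,q}\|_{p}^{p}\neq 0$ is the Schatten-$p$ gap from Theorem~\ref{thm:schatten} — plus a remainder $\frac{1}{2^{t-1}}\sum_{k}(-1)^{k}\binom{t}{k}\big(\rho(\eta\sqrt{r_1(k)})+\rho(\eta\sqrt{r_2(k)})\big)$. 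Because $k$ runs over the finite set $\{0,\dots,t\}$ and each $r_i(k)$ is a fixed constant in $[0,m^{2}+\gamma]$, the only non-$o(\eta^{p})$ piece of this remainder is the single term with $r_2(t)=0$, which equals $\rho(0)/2^{t-1}=f(0)/2^{t-1}$; all the other terms sum to $o(\eta^{p})$ since $\rho(\eta\sqrt{r_i(k)})/(\eta\sqrt{r_i(k)})^{p}\to 0$ for each of the finitely many positive $r_i(k)$. Thus the gap of $g$ is $c\eta^{p}\Delta+f(0)/2^{t-1}+o(\eta^{p})$; in the near-$0$ case $f(0)=0$ and $c\eta^{p}|\Delta|$ dominates for $\eta$ small, while in the near-infinity case $c\eta^{p}\Delta\to\pm\infty$ dominates for $\eta$ large, so \eqref{eqn:gap_expectation} holds for $g$.

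The one genuinely load-bearing point — and the reason the hypothesis must be $f(x)=cx^{p}(1+o(1))$ rather than merely two-sided bounds $C_{2}x^{p}\le f(x)\le C_{1}x^{p}$ — is the comparison in the second paragraph: the even/odd gap of $g$ is a fixed quantity of order $\eta^{p}$ inherited from the Schatten-$p$ gap, and it survives only because the deviation of $f$ from its leading term is honestly of lower order along the finitely many rescaled singular values of $M_{m,k}$; a constant-factor error would drown it out. Everything hard was already carried out in Theorem~\ref{thm:schatten} — the power-series expansion of $r_i(k)^{p/2}$, the hypergeometric and Stirling-number manipulations, and Lemma~\ref{lem:B_s_sign} — which is exactly why Theorem~\ref{thm:general} is advertised as a direct corollary; the remaining steps here are routine bookkeeping together with the observation that a finite set of $o(1)$'s is uniformly small.
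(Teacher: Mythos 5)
Your proof is correct, but it takes a genuinely different route from the paper's. The paper treats Theorem~\ref{thm:schatten} as a black box: writing $f(x)\sim\alpha^{-1}x^{p}$ (so that $x^{p}=\alpha(1\pm\eta)f(x)$ for $x$ in a small interval), it scales the hard instance $\mathcal{M}$ by $\epsilon$ so that all singular values fall in that interval, observes that then $\alpha(1-c_0/3)f(\epsilon\mathcal{M})\le\|\epsilon\mathcal{M}\|_p^p\le\alpha(1+c_0/3)f(\epsilon\mathcal{M})$, and concludes that a $(1\pm c_0/3)$-approximator for $f$ yields a $(1\pm c_0)$-approximator for $\|\cdot\|_p^p$ on the hard distribution, so the Schatten-$p$ lower bound applies directly. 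No gap condition is re-opened. You instead go back inside Theorem~\ref{thm:gap_expectation}: you expand $f(\eta x)=c\eta^{p}x^{p}+\rho(\eta x)$, isolate the Schatten-$p$ gap $c\eta^{p}\Delta$, and argue that the $\rho$-remainder is $o(\eta^{p})$ because it is a finite alternating sum over the rescaled singular values (with the $f(0)$ contribution handled either by continuity at $0$ or by $O(1)=o(\eta^p)$). Both arguments are sound; the paper's is shorter because the multiplicative-sandwich reduction never needs to revisit $G_1+G_2$, whereas yours repays the extra bookkeeping with a more explicit view of where the gap survives. You also correctly flag that both arguments really use $f(x)\sim cx^{p}$ rather than the paper's stated two-sided relation $\simeq$ — the paper's proof quietly makes the same upgrade, so this is an imprecision in the statement rather than in your reasoning, and it is worth noticing.
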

\begin{proof}
Suppose that $f(x)\sim \alpha x^p$ for $x$ near $0$, that is, for any $\eta > 0$, there exists $\delta = \delta(\eta) > 0$ such that $\alpha(1-\eta) f(x)\leq x^p \leq \alpha(1+\eta) f(x)$ for all $x \in [0,\delta)$.

Let $c_0$ be the approximation ratio parameter in Theorem~\ref{thm:schatten} for Schatten $p$-norm. Let $\epsilon$ be sufficiently small (it could depend on $t$ and thus $m$) such that the singular values of $\eps\mathcal{M}$ are at most $\delta(c_0/3)$, where $\mathcal{M}$ is the hard instance matrix used in Theorem~\ref{thm:schatten}. Then $\alpha(1-c_0/3)  f(\eps\mathcal{M}) \leq \|\eps\mathcal{M}\|_p^p\leq \alpha(1+c_0/3)f(\eps\mathcal{M})$. Therefore, any algorithm that approximates $f(\eps\mathcal{M})$ within a factor of $(1\pm c_0/3)$ can produce a $(1\pm c_0)$-approximation of $\|\epsilon\mathcal{M}\|_p^p$. The lower bound follows from Theorem~\ref{thm:schatten}.

When $f(x)\simeq x^p$ for $x$ near infinity, a similar argument works for $\lambda\mathcal{M}$ where $\lambda$ is sufficiently large.
\end{proof}

The following is a corollary of Lemma~\ref{lem:even_p}.
\begin{theorem}\label{thm:general_2}
Suppose that $f$ admits a Taylor expansion near $0$ that has infinitely many even-order terms of non-zero coefficient. Then for any arbitrary large $m$, there exists $c = c(m)$ such that any data stream algorithm which outputs, with constant error probability, a $(1+c)$-approximation to $\|X\|_p^p$ requires $\Omega(N^{1-1/m})$ bits of space.
\end{theorem}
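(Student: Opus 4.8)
The plan is to apply Theorem~\ref{thm:gap_expectation} to a downscaled copy of the even-$p$ hard instance from Section~\ref{sec:even_p_proofs}, so that only the Taylor expansion of $f$ at $0$ governs the relevant gap; the even-order terms of $f$ are then handled by Lemma~\ref{lem:even_p}. Fix the target exponent; assume it is even (otherwise apply the result to $m+1$, which only strengthens it) and set $t=m$. Recall the symmetric block $M_{m,k}=\mathbf{1}_m\mathbf{1}_m^T-I_m+D_{m,k}$, which is $m\times m$, has entries in $\{0,1\}$, and satisfies $\|M_{m,k}\|_{op}\le m$ (its eigenvalues are $-1$, of multiplicity $m-k-1$, and $\tfrac12((m-1)\pm\sqrt{(m-1)^2+4k})$). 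For a parameter $\varepsilon>0$ consider the matrix functional $F_\varepsilon(X):=\sum_i f(\varepsilon\,\sigma_i(X))$, which is diagonally block-additive because the singular values of a block-diagonal matrix are the union of those of its blocks. Writing $f(x)=\sum_{j\ge0}c_jx^j$ (convergent for $|x|<\rho$, some $\rho>0$) and taking $\varepsilon<\rho/m$, every $\varepsilon\,\sigma_i(M_{m,k})$ lies in the disc of convergence, so, the sum over $i$ being finite,
\[
F_\varepsilon(M_{m,k}) = m c_0 + \sum_{j\ge1}c_j\varepsilon^j\,\|M_{m,k}\|_j^j .
\]

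The next step is to read off the gap in \eqref{eqn:gap_expectation} for $F_\varepsilon$. The additive constant $mc_0$ is the same for all $k$ (the block size is $m$), so it cancels and
\[
\Phi(\varepsilon):=\E_{q\sim\mathcal{E}(t)}F_\varepsilon(M_{m,q})-\E_{q\sim\mathcal{O}(t)}F_\varepsilon(M_{m,q})=\sum_{j\ge1}c_j\,g_j\,\varepsilon^j,\qquad g_j:=\E_{q\sim\mathcal{E}(t)}\|M_{m,q}\|_j^j-\E_{q\sim\mathcal{O}(t)}\|M_{m,q}\|_j^j .
\]
Now $g_j$ is precisely the left-hand side of the gap condition \eqref{eqn:gap_expectation} for the function $x\mapsto x^j$ under the choice $m=t$, so Lemma~\ref{lem:even_p} gives, for every \emph{even} $j$, that $g_j\neq0$ if and only if $t\le j/2$, i.e.\ $j\ge2t$. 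By hypothesis $f$ has infinitely many nonzero even-order Taylor coefficients, so there is an even integer $j_0\ge2t$ with $c_{j_0}\neq0$, and for this $j_0$ also $g_{j_0}\neq0$. Hence the power series $\Phi(\varepsilon)=\sum_{j\ge1}c_jg_j\varepsilon^j$ has the nonzero coefficient $c_{j_0}g_{j_0}$ at $\varepsilon^{j_0}$, so $\Phi\not\equiv0$; letting $j_\ast$ be the smallest index with $c_jg_j\neq0$ we get $\Phi(\varepsilon)=c_{j_\ast}g_{j_\ast}\,\varepsilon^{j_\ast}(1+o(1))$ as $\varepsilon\to0^+$, so there is a constant $\varepsilon_0=\varepsilon_0(f,m)\in(0,\rho/m)$ with $\Phi(\varepsilon_0)\neq0$. (If $f$ is only assumed $C^\infty$ at $0$, truncate its series at degree $j_0$, bound the remainder by $o(x^{j_0})$, and run the same leading-term argument.)

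Finally, fix $\varepsilon=\varepsilon_0$. Then $F_{\varepsilon_0}$ is diagonally block-additive and satisfies \eqref{eqn:gap_expectation} with $m=t$, so Theorem~\ref{thm:gap_expectation} yields a constant $c=c(t)=c(m)>0$ such that any streaming algorithm $(1\pm c)$-approximating $F_{\varepsilon_0}(X)=\sum_i f(\varepsilon_0\sigma_i(X))$ with constant error probability on $N\times N$ matrices uses $\Omega_m(N^{1-1/t})=\Omega(N^{1-1/m})$ bits; since the hard instance is block-diagonal with $O(1)$-size blocks and entries in $\{0,\varepsilon_0\}$, this holds even with nnz $=O(N)$ and with entries bounded independently of $N$. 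To convert this into the stated bound for estimating $\sum_i f(\sigma_i(\cdot))$, note $\sum_i f(\sigma_i(\varepsilon_0 X))=F_{\varepsilon_0}(X)$, so feeding an algorithm for $\sum_i f(\sigma_i(\cdot))$ the stream that builds $\varepsilon_0$ times the hard instance turns it into an algorithm for $F_{\varepsilon_0}$ on that instance with the same space, exactly as the rescaling is handled in the proof of Theorem~\ref{thm:general}.

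The step I expect to be the crux is the non-vanishing of $\Phi$: the odd-order terms, and the even-order terms of degree below $2t$, contribute unknown quantities $c_jg_j$ to the series, so one cannot isolate a single term and cancellation is conceivable a priori. What rescues us is that $\Phi$ is analytic (or, in the $C^\infty$ case, agrees with its degree-$j_0$ Taylor polynomial up to $o(\varepsilon^{j_0})$), so it suffices that the single coefficient $c_{j_0}g_{j_0}$ be nonzero --- which is exactly what ``infinitely many nonzero even-order coefficients'' together with the ``only if $j\ge2t$'' direction of Lemma~\ref{lem:even_p} provides. This also explains the shape of the hypothesis: since $g_j=0$ for every even $j<2t$, only even-order terms of arbitrarily high degree can certify a gap strong enough to push the exponent $1-1/t$ up to $1-1/m$ as $m\to\infty$.
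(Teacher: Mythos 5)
Your proof is correct and follows the same basic strategy as the paper's: scale the symmetric even-$p$ hard instance by a small $\varepsilon$, expand $f$ in its Taylor series, and use Lemma~\ref{lem:even_p} to exhibit a nonzero coefficient in the resulting expansion of the gap as a power series in $\varepsilon$. The one place you diverge is in how the low-order terms are handled. The paper first case-splits: if any odd-order Taylor coefficient is nonzero it appeals to Theorem~\ref{thm:general} with ``the lowest non-zero odd-order term,'' and otherwise assumes all terms are even and picks the lowest even $p>2m$ with $a_p\ne 0$, claiming the gap equals $a_p\varepsilon^p G + O(\varepsilon^{p+1})$. As stated, this leaves two small rough edges: Theorem~\ref{thm:general} needs $f(x)\simeq x^p$ \emph{near $0$}, which fails if the lowest-order nonzero term is even but odd-order terms appear later; and the remainder bound as written silently assumes no contribution from degree exactly $2m$. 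Your treatment avoids both issues in one stroke: you write the gap as a single analytic function $\Phi(\varepsilon)=\sum_j c_jg_j\varepsilon^j$ on $|\varepsilon|<\rho/m$, observe that the hypothesis together with Lemma~\ref{lem:even_p} forces some coefficient $c_{j_0}g_{j_0}$ with even $j_0\ge 2t$ to be nonzero, and then take the leading nonzero term $j_\ast$ (whatever its parity or size) to conclude $\Phi(\varepsilon_0)\neq 0$ for small $\varepsilon_0$. This is a uniform argument that never needs to identify which term dominates, only that some term is nonzero, and it is a genuine cleanup of the paper's case analysis even though the underlying reduction and the use of Lemma~\ref{lem:even_p} are identical.
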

\begin{proof}
If the expansion has a odd-order term with non-zero coefficient, apply Theorem~\ref{thm:general} with the lowest non-zero odd-order term. Hence we may assume that all terms are of even order.
For any given $m$, there exists $p > 2m$ such that the $x^p$ term in the Taylor expansion of $f$ has a non-zero coefficient $a_p$. Let $p$ be the lowest order of such a term, and write
\[
f(x) = \sum_{i=0}^{p-1} a_i x^{p-1} + a_p x^p + O(x^{p+1}).
\]
Let $\epsilon > 0$ be a small constant to be determined later and consider the matrix $\epsilon\mathcal{M}$, where $\mathcal{M}$ is our hard instance matrix used in Lemma~\ref{lem:even_p}. Lemma~\ref{lem:even_p} guarantees a gap of $f(\epsilon\mathcal{M})$, which is then $a_p \epsilon^p G + R(\epsilon)$, where $G$ is the gap for $x^p$ on unscaled hard instance $\mathcal{M}$ and $|R(\epsilon)|\leq K\epsilon^{p+1}$ for some constant $K$ depending only on $f(x)$, $m$ and $p$. Choosing $\epsilon < a_p G/K$ guarantees that the gap $a_p\epsilon^p G + R(\epsilon)\neq 0$.
\end{proof}

Now we are ready to prove the lower bound for some eigenvalue shrinkers and $M$-estimators. The following are the three optimal eigenvalue shrinkers from \cite{GD14}:
\begin{gather*}
\eta_1(x) = \begin{cases}
				\frac{1}{x}\sqrt{(x^2-\alpha-1)^2-4\alpha}, & x\geq 1+\sqrt{\alpha};\\
				0,  &x<1+\sqrt{\alpha},
			\end{cases}\\
\eta_2(x) = \begin{cases}
			\frac{1}{\sqrt 2}\sqrt{x^2-\alpha-1+\sqrt{(x^2\!-\!\alpha\!-\!1)^2-4\alpha}}, &x\geq 1+\sqrt{\alpha};\\
           0, & x<1+\sqrt{\alpha},
           \end{cases}\\
\eta_3(x) = \frac{1}{x\eta_2^2(x)} \max \left\{\eta_2^4(x)-\alpha-\alpha x\eta_2(x), 0\right\}\end{gather*}
where we assume that $0\cdot\infty = 0$. Since $\eta_i(x) \simeq x$ when $x$ is large, the lower bound follows from Theorem~\ref{thm:general}.

Some commonly used influence functions $\rho(x)$ can be found in \cite{Zhang95}, summarized in Table~\ref{tab:M_estimators}. Several of them are asymptotically linear when $x$ is large and Theorem~\ref{thm:general} applies. Some are covered by Theorem~\ref{thm:general_2}. For the last function, notice that it is a constant on $[c,+\infty)$, we can rescale our hard instance matrix $\mathcal{M}$ such that the larger root $r_1(k)$ falls in $[c,+\infty)$ and the smaller root $r_2(k)$ in $[0,c]$. The larger root $r_1(k)$ therefore has no contribution to the gap. The contribution from the smaller root $r_2(k)$ is nonzero by the remark following Lemma~\ref{lem:single_root}.

\begin{table*}
\vspace{2mm}
\centering
\begin{tabular}{|l|c|l|c|}
\hline
Function $\rho(x)$ & Apply & Function $\rho(x)$ & Apply \\
\hline
$2(\sqrt{1+x^2/2}-1)$ & Theorem~\ref{thm:general} & $\frac{x^2/2}{1+x^2}$ & Theorem~\ref{thm:general} \\
\hline
$c^2(\frac{x}{c}-\ln(1+\frac{x}{c}))$ & Theorem~\ref{thm:general} & $\frac{c^2}{2}(1-\exp(-x^2/c^2))$ & Theorem~\ref{thm:general_2} \\
\hline
$\begin{cases} 
x^2/2, & x\leq k;\\
k(x-k/2), & x > k
\end{cases}$                              & Theorem~\ref{thm:general} &
$\begin{cases}
\frac{c^2}{6}(1-(1-x^2/c^2)^3), & x \leq c;\\
c^2/6, & x > c
\end{cases}$                              & $\begin{array}{c} \text{Remark after}\\ \text{Lemma~\ref{lem:single_root}}\end{array}$ \\
\hline
$\frac{c^2}{2}\ln(1+\frac{x^2}{c^2})$     & Theorem~\ref{thm:general_2} &  &   \\
\hline
\end{tabular}
\caption{Application of Theorem~\ref{thm:general} and Theorem~\ref{thm:general_2} to some M-estimators from \cite{Zhang95}.}
\label{tab:M_estimators}
\end{table*}

Finally we consider functions of the form \[
F_k(X) = \sum_{i=1}^k f(\sigma_i(X))
\] and prove (a slightly rephrased) Theorem~\ref{thm:partial_singular_values} in the introduction. 
\begin{theorem} Let $\alpha \in (0,1/2)$. Suppose that $f$ is strictly increasing. There exists $N_0$ and $c_0$ such that for all $N\geq N_0$, $k\leq \alpha N$ and $c\in (0,c_0)$, any data stream algorithm which outputs, with constant error probability, a $(1+c)$-approximation to $F_k(X)$ of $X\in \R^{N\times N}$ requires\\ $\Omega_{\alpha}(N^{1+\Theta(1/\ln\alpha)})$ bits of space.
\end{theorem}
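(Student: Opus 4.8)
The plan is to run the reduction behind Theorem~\ref{thm:gap_expectation} essentially unchanged, with the hyper‑matching arity $t$ taken as large as a single combinatorial constraint allows, and to exploit that — because $k$ is small relative to the number of blocks — $F_k$ is governed entirely by the blocks carrying the \emph{maximal} number of tentacles, where strict monotonicity of $f$ by itself produces a multiplicative gap. I would first treat the ``main range'' $k\lesssim B$ (which, since $B=\Theta(N/t)$ and $t=\Theta(\log(N/k))$, contains every $k\le\alpha N$ once $\alpha$ is below an absolute constant), and then comment on the complementary range.

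Given $k\le\alpha N$, set $t$ to be the largest even integer with $2^{t}\le c\,N/(tk)$ for a small absolute constant $c$, and take clique size $m=t+1$; then $\mathcal M=\bigoplus_{j=1}^{B}M_{m,q_j}$ with $M_{m,q}$ as in \eqref{eqn:asymmetric_M}, $B=\Theta(N/t)$, the number of coordinates $n=\Theta(N)$, and $1/t=\Theta(1/\log(N/k))\le\Theta(1/\log(1/\alpha))$. Exactly as in Theorem~\ref{thm:gap_expectation}, the even (resp.\ odd) case has all $q_j$ even (resp.\ odd), organized into $B/(2t)$ independent pairs $(q,t-q)$ with $q\sim\mathcal E(t)$ (resp.\ $q\sim\mathcal O(t)$). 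From Section~\ref{sec:schatten}, each block $M_{m,q}$ has a single ``large'' singular value $\sqrt{r_1(q)}\ge m$, every other singular value being at most $\sqrt\gamma<m$, and $q\mapsto r_1(q)$ is strictly increasing. Hence, since $k\le B$, the top $k$ singular values of $\mathcal M$ are precisely the $k$ largest among $\{\sqrt{r_1(q_j)}\}_j$, i.e.\ they come from the blocks with the largest $q_j$. In the even case the maximal $q_j$ equals $t$, attained by $Z:=\#\{j:q_j=t\}$ blocks with $\E Z=B/2^{t-1}$, which by our choice of $t$ is both $\gg k$ and $\gg 1$; a Chernoff bound over the $B/(2t)$ independent pairs gives $Z>k$ with high probability, whence $F_k(\mathcal M)=k\,f(\sqrt{r_1(t)})$. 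In the odd case the maximal $q_j$ equals $t-1$, attained by $\gg k$ blocks with high probability, so $F_k(\mathcal M)=k\,f(\sqrt{r_1(t-1)})$.

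Since $r_1$ is strictly increasing, $\sqrt{r_1(t)}\ne\sqrt{r_1(t-1)}$; as $f$ is strictly monotone with $f(0)=0$, both $f(\sqrt{r_1(t)})$ and $f(\sqrt{r_1(t-1)})$ are nonzero and unequal, so $F_k$ differs by the fixed nonzero ratio $f(\sqrt{r_1(t)})/f(\sqrt{r_1(t-1)})$ between the two cases. Picking $c_0=c_0(t,f)$ below
\[
\bigl|f(\sqrt{r_1(t)})-f(\sqrt{r_1(t-1)})\bigr|\Big/\bigl(|f(\sqrt{r_1(t)})|+|f(\sqrt{r_1(t-1)})|\bigr),
\]
every $(1+c)$‑approximation of $F_k$ with $c<c_0$ distinguishes the two cases (folding the small probability that the concentration above fails into the error probability); Alice processing the tentacle entries and Bob the clique entries then turns this into a one‑way protocol for $\textsc{BHH}^0_{t,n}$, so the space is $\Omega(n^{1-1/t})=\Omega_\alpha(N^{1-1/t})=\Omega_\alpha(N^{1+\Theta(1/\ln\alpha)})$. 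The step requiring the most care is the bookkeeping around this concentration: checking that $m=\Theta(t)$ keeps $n=\Theta(N)$ (so that no power of $N$ is lost passing from $n$ to $N$), that $\E Z=B/2^{t-1}$ genuinely exceeds $k$ by a polynomial‑in‑$(N/k)$ factor — which is exactly why $t$ is capped at $\sim\log(N/(tk))$ rather than taken larger — and that the within‑pair dependence of the $q_j$'s does not spoil the Chernoff bound. For the residual (large‑$k$) range, which arises only when $\alpha$ is not small, the same instance works but with the cutoff falling inside the ``middle'' tier of value $\sqrt\gamma$: there the parity asymmetry instead surfaces in the number of blocks with $q_j=0$ (positive in expectation in the even case, identically $0$ in the odd case), so $F_k$ again decomposes as a deterministic part plus a parity signal of order $\bigl(f(\sqrt\gamma)-f(\sqrt{r_2(1)})\bigr)$ times that count; one then controls the residual drift from the large tier by letting $\gamma\to0$ with $\beta m$ at a continuity point of $f$ for a suitable overall scale $\beta$, in the spirit of the scaling arguments of Section~\ref{sec:app}.
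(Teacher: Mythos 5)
Your proof follows the same route as the paper's: realize the hard instance of Theorem~\ref{thm:gap_expectation}, observe that the top $k$ singular values of $\mathcal M$ all come from the blocks carrying the maximal number of tentacles (since each block has exactly one singular value $\ge m$ and the rest $\le\sqrt\gamma<m$), show there are $\gg k$ such blocks with high probability, and then let strict monotonicity of $f$ produce a multiplicative gap $f(\sqrt{r_1(q_{\max})^{\mathrm{even}}})$ versus $f(\sqrt{r_1(q_{\max})^{\mathrm{odd}}})$. That much is correct. (Small bookkeeping slips: the number of independent pairs is $B/2$, not $B/(2t)$; and the paper takes $m=t$ rather than $m=t+1$, but either works.)

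The genuine problem is your choice of $t$. You take $t$ adaptive in $N$ and $k$ — the largest even integer with $2^t\le cN/(tk)$ — to push toward a $\Omega(N^{1-1/\Theta(\log(N/k))})$ bound. But then the separation on which the reduction rests, and hence $c_0$, shrinks with $t$: for $m=t+1$ one has $r_1(t)-r_1(t-1)=\Theta(\gamma/t)$ and $r_1(t)=\Theta(t^2)$, so $\sqrt{r_1(t)}-\sqrt{r_1(t-1)}=\Theta(\gamma/t^2)$, and already for $f(x)=x$ the relative gap
\[
\frac{\bigl|f(\sqrt{r_1(t)})-f(\sqrt{r_1(t-1)})\bigr|}{\bigl|f(\sqrt{r_1(t)})\bigr|+\bigl|f(\sqrt{r_1(t-1)})\bigr|}=\Theta\!\left(\frac{\gamma}{t^3}\right)\longrightarrow 0
\quad\text{as }t\to\infty.
\]
So $c_0=c_0(t,f)$ genuinely depends on $t$, and under your choice $t$ grows unboundedly as $k/N\to 0$ (or $N\to\infty$ with $k$ fixed). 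The theorem, however, quantifies $\exists\,c_0\ \forall\,N\ge N_0,\ \forall\,k\le\alpha N$: the same $c_0$ must work for every admissible pair $(N,k)$. Your argument as written does not provide such a uniform $c_0$. (For the same reason, the implicit constant in the $\Omega_t(n^{1-1/t})$ lower bound for $\textsc{BHH}^0_{t,n}$ depends on $t$, so the claimed $\Omega_\alpha(N^{1-1/t})$ with $t=t(N,k)$ is not a uniform-in-$(N,k)$ bound either.)

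The fix is what the paper does: decouple $t$ from $N,k$ and fix it up front as a function of $\alpha$ alone, namely the largest (even) $t$ with $1/(t2^t)\ge\alpha$, so $t=\Theta(\log(1/\alpha))$. Then the expected count of maximal-tentacle blocks is $\Theta(N/(t2^{t}))\ge\Theta(\alpha N)\ge\Theta(k)$ uniformly over $k\le\alpha N$, the gap and hence $c_0$ depend only on $\alpha$ and $f$, and you get exactly $\Omega_\alpha(N^{1-\Theta(1/\log(1/\alpha))})=\Omega_\alpha(N^{1+\Theta(1/\ln\alpha)})$. Once $t$ is fixed this way, your ``residual large-$k$ range'' discussion becomes unnecessary: the whole range $k\le\alpha N$ is handled in one stroke, and the sketchy argument there about the $\sqrt\gamma$ tier and $q_j=0$ counts can be dropped.
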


\begin{proof}
Similarly to Theorem~\ref{thm:gap_expectation} we reduce the problem from the $\textsc{BHH}_n^0$ problem. Let $m=t$ be the largest integer such that $1/(t2^t)\geq\alpha$. Then $m=t=\Theta(\ln(1/\alpha))$. We analyse the largest $k$ singular values of $\mathcal{M}$ as defined in \eqref{eqn:tutte_matrix_big}. Recall that $q_1,\dots,q_{n/m}$ are divided into $N/(2m)$ groups. Let $X_1,\dots,X_{N/(2m)}$ be the larger $q_i$'s in each group, then $X_1,\dots,X_{N/(2m)}$ are i.i.d.\ random variables. In the even case, they are defined on $\{m/2,m/2+2,\dots,m\}$ subject to the distribution
\[
\Pr\left\{X_1 = \frac{m}{2} +  j\right\} = \begin{cases}
													p_m(\frac{m}{2}), & j = 0;\\
													2p_m(\frac{m}{2}+j), & j > 0,
												\end{cases}
											\quad j = 0, 2, \dots, \frac{m}{2}.
\]
In the odd case, they are defined on $\{m/2+1,m/2+3,\dots,m-1\}$ with probability density function
\[
\Pr\left\{X_1 = \frac{m}{2} + j\right\} = 2p_m\left(\frac{m}{2} + j\right), \quad j = 1,3,\dots,\frac{m}{2}-1.
\]
With probability $1/2^{m-2}$, $X_i = m$ in the even case and with probability $m/2^{m-2}$, $X_i = m/2-1$ in the odd case. It immediately follows from a Chernoff bound that with high probability, it holds that $X_i = m$ (resp.\ $X_i=m-1$) for at least $(N/2m)\cdot(1/2^{m-2})(1-\delta) = (1-\delta)N/(m2^{m-1})$ different $i$'s in the even case (resp.\ odd case). Since $r_1(m-1) < r_1(m)$ and $f$ is strictly increasing, the value $F_k(X)$, when $k\leq \alpha N\leq (1-\delta)N/(m2^{m-1})$, with high probability, exhibits a gap of size at least $c\cdot k$ for some constant $c$ between the even and the odd cases. Since $F_k(\mathcal{M}) = \Theta(k)$ with high probability, the lower bound for Ky-Fan $k$-norm follows from the lower bound for $\textsc{BHH}_n^0$.
\end{proof}

The lower bound for Ky-Fan $k$-norms follows immediately. For $k\leq \alpha N$ it follows from the preceding theorem with $f(x)=x$; for $k > \alpha N$, the lower bound follows from our lower bound for the Schatten $1$-norm by embedding the hard instance of dimension $\alpha N\times \alpha N$ into the $N\times N$ matrix $X$, padded with zeros.

As the final result of the paper, we show an $\Omega(n^{1-1/t})$ lower bound for SVD entropy function of matrices, in the following subsection.

\subsection{SVD Entropy}\label{sec:entropy}
Let $h(x) = x^2\ln x^2$. For $X\in \R^{N\times N}$, we define its SVD entropy $H(X)$ as 
\[
H(X) = \sum_i h\left(\frac{\sigma_i(X)}{\|X\|_F}\right)
\]
In this section, our goal is to show the following theorem.
\begin{theorem}\label{cor:original_entropy}
Let $t$ be an even integer and $X\in \R^{N\times N}$, where $N$ is sufficiently large. There exists $c=c(t)$ such that estimating the matrix entropy $H(X)$ within an additive error of $c$ requires $\Omega_t(N^{1-1/t})$ bits of space.
\end{theorem}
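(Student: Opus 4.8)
The plan is to place $H$ into the framework of Theorem~\ref{thm:gap_expectation}. The key observation is that on the hard instance $\mathcal{M}=\mathcal{M}_{n,m,t}$ of \eqref{eqn:tutte_matrix_big} (taking $m=t$) the Frobenius norm is \emph{fixed}: each block $M_{t,q}$ has $\|M_{t,q}\|_F^2=t^2+\gamma q$, so $\|\mathcal{M}\|_F^2=(n/t)t^2+\gamma\sum_i q_i=nt+\tfrac{\gamma n}{2}$, the \emph{same} constant ($=\Theta(N)$) in the even and the odd case of the reduction, since $\sum_i q_i=n/2$ is part of the $\textsc{BHH}^0$ promise. Using $\sum_i\sigma_i^2=\|\mathcal{M}\|_F^2$,
\[
H(\mathcal{M})=\frac{1}{\|\mathcal{M}\|_F^2}\sum_i\sigma_i(\mathcal{M})^2\ln\sigma_i(\mathcal{M})^2-\ln\|\mathcal{M}\|_F^2=\frac{g(\mathcal{M})}{\|\mathcal{M}\|_F^2}-\ln\|\mathcal{M}\|_F^2,
\]
where $g(X)=\sum_i h(\sigma_i(X))$ with $h(x)=x^2\ln x^2$ is diagonally block-additive. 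Thus an additive-$c$ approximation to $H(\mathcal{M})$ is an additive-$\Theta(cN)$ approximation to $g(\mathcal{M})$, and it suffices to produce a $\Theta(N)$ gap in $g(\mathcal{M})$ between the two cases.

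I would then rerun the reduction from $\textsc{BHH}^0_{t,n}$ of the proof of Theorem~\ref{thm:gap_expectation}, tracking additive error. Each block contributes $g(M_{t,q})=r_1(q)\ln r_1(q)+O(\gamma\ln(1/\gamma))=\Theta(1)$ for $\gamma$ small and $t$ fixed (since $r_1(q)\in[t^2,t^2+\gamma]$ while $r_2(q)\le\gamma$), so $g(\mathcal{M})=\Theta(N)$; summing over the $\Theta(N)$ independent groups and applying a Chernoff bound, $g(\mathcal{M})$ concentrates around its mean, whose value differs by $\Theta(N)$ between the even and the odd case \emph{provided} the per-block gap $\E_{q\sim\mathcal{E}(t)}g(M_{t,q})-\E_{q\sim\mathcal{O}(t)}g(M_{t,q})$ is nonzero. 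After dividing by the fixed constant $\|\mathcal{M}\|_F^2$ and subtracting the fixed constant $\ln\|\mathcal{M}\|_F^2$, the values of $H(\mathcal{M})$ then differ by $\Theta(1)$ with high probability, so choosing $c=c(t)$ small enough lets an additive-$c$ estimator solve $\textsc{BHH}^0_{t,n}$, yielding the $\Omega_t(N^{1-1/t})$ bound.

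It remains to verify the gap condition \eqref{eqn:gap_expectation} for $g$ with $m=t$. By the singular-value computation of Section~\ref{sec:schatten}, the per-block gap equals $\tfrac{1}{2^{t-1}}\sum_k(-1)^k\binom tk\bigl((k-1)\gamma\ln\gamma+r_1(k)\ln r_1(k)+r_2(k)\ln r_2(k)\bigr)$, and the $(k-1)\gamma\ln\gamma$ part drops out because $\sum_k(-1)^k\binom tk(k-1)=0$ for $t\ge2$. Writing $r\ln r=\partial_p(r^p)\big|_{p=1}$ (equivalently $h(\sqrt r)=r\ln r$), this gap equals $\tfrac{1}{2^{t-1}}\Psi'(1)$, where
\[
\Psi(p):=\sum_k(-1)^k\binom tk\bigl(r_1(k)^p+r_2(k)^p\bigr)=(-1)^t t!\sum_{s\ge t}\stir{s}{t}\bigl(A_s(p)+B_s(p)\bigr)
\]
is precisely the analytic Schatten-gap function analysed in Section~\ref{sec:schatten_proof} (in the $p/2\mapsto p$ normalization, with $A_s,B_s$ as in \eqref{eqn:A_s}, \eqref{eqn:B_s}); note $\Psi(1)=0$ since $r_1(k)+r_2(k)=t^2+\gamma$ is independent of $k$. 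So the whole problem reduces to showing $\Psi'(1)\neq0$.

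To do so I would differentiate the series termwise — legitimate because the $A_s,B_s$ are analytic in $p$ and the bounds from the proof of Theorem~\ref{thm:schatten} extend, up to harmless logarithmic factors, to $A_s',B_s'$ — and use $A_s(1)+B_s(1)=0$ together with the factorizations $A_s(p)=c_s\,t^{2p}P_s(p)$ and $B_s(p)=d_s\,\gamma^p Q_s(p)$, where only the $i=0$ term of $Q_s'(1)$ survives to leading order. A short calculation then gives $A_s'(1)+B_s'(1)=\tfrac{\gamma}{s(s-1)t^s}\bigl(1+O(\gamma\ln(1/\gamma))\bigr)$ uniformly in $s$, whence
\[
\Psi'(1)=(-1)^t\,t!\,\gamma\Bigl(\sum_{s\ge t}\frac{\stir{s}{t}}{s(s-1)t^s}\Bigr)\bigl(1+o_\gamma(1)\bigr).
\]
Since $t$ is even, $(-1)^t=1$, and the bracketed series is a strictly positive constant (all terms nonnegative, with the $s=t$ term equal to $\tfrac{1}{t(t-1)t^t}>0$), so $\Psi'(1)>0$ for all small enough $\gamma>0$ and the gap condition holds (for $t=2$ a direct computation gives $\Psi'(1)=\gamma\ln2+o(\gamma)$). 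I expect this last step to be the main obstacle: because the Schatten-$2$ gap itself \emph{vanishes}, the $h$-gap is seen only through the derivative of the Schatten-gap function in its exponent, so the verification forces a rerun of the power-series analysis of Section~\ref{sec:schatten_proof} ``one derivative deeper,'' and the delicate point is to control the $O(\gamma\ln(1/\gamma))$ corrections to $A_s'(1)+B_s'(1)$ uniformly in the summation index $s$ so that the positive leading term survives the infinite sum.
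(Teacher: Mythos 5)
Your reduction from additive approximation of $H(X)$ to multiplicative approximation of the block-additive quantity $g(X)=\sum_i h(\sigma_i(X))$ — exploiting the fact that $\|\mathcal{M}\|_F^2$ is a \emph{fixed} multiple of $N$ on the hard instance — is exactly the paper's Lemma~\ref{lem:small_entropy} plus the short derivation that follows it, so that part is fine. Where you genuinely diverge is in how you verify the gap condition \eqref{eqn:gap_expectation} for $h$. You write $r\ln r=\partial_p(r^p)\big|_{p=1}$, note $\Psi(1)=0$ (correct, since $r_1(k)+r_2(k)$ is $k$-independent), and try to compute $\Psi'(1)$ by differentiating the Stirling-number series for the Schatten gap termwise. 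That is a clean idea, and your $t=2$ sanity check $\Psi'(1)=\gamma\ln2+o(\gamma)$ is in fact correct.

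The problem is the step you flag as the ``delicate point,'' and it is a real gap, not merely a technicality to be cleaned up: the claimed estimate
\[
A_s'(1)+B_s'(1)=\frac{\gamma}{s(s-1)\,t^{s}}\bigl(1+O(\gamma\ln(1/\gamma))\bigr)\quad\text{uniformly in }s
\]
is false. For fixed $s$ the $i=0$ term of the inner sum does dominate and gives the stated leading term, but the $i\ge1$ contributions are \emph{not} uniformly $O(\gamma)$ relative to it: the ratio of the $i=1$ term to the $i=0$ term in $B_s'(1)$ is $\Theta(s\gamma/t^2)$, so once $s\gtrsim t^2/\gamma$ the inner alternating sum is no longer governed by its first term, and for such $s$ you have no sign control on $B_s'(1)$. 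Since $\sum_{s\ge t}\stir{s}{t}t!\cdot\gamma/(s(s-1)t^s)$ has a harmonic-type tail, the uncontrolled tail $s\gtrsim t^2/\gamma$ cannot simply be discarded, and the ``positive leading term survives'' conclusion does not follow from what you have. This is precisely the kind of alternating-sum pathology the paper had to fight already in Lemma~\ref{lem:B_s_sign}, and you would need an analogue of that argument here (a head/middle/tail decomposition of the $s$-sum together with a sign statement for the hypergeometric polynomial appearing in $B_s'(1)$).

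The paper's own proof of Lemma~\ref{lem:small_entropy} avoids the derivative route entirely: it fixes $\gamma=1$, expands $r_i(k)\ln r_i(k)$ directly as a power series in $k$ via $(1+x)\ln(1+x)=x+\sum_{n\ge2}(-1)^n x^n/(n(n-1))$, recognizes the resulting coefficient $B_s$ as a ${}_3F_2(1,1,2-s;2,3-2s;1-m^2)$ value, reduces it by an integral representation to ${}_2F_1(1,2-s;3-2s;\cdot)$, and then invokes a theorem of Ki--Kim \cite{KK02} on the real zeros of that ${}_2F_1$ on $(-\infty,0]$ to pin down the sign of $B_s$ for even $s$ and to bound the odd-$s$ contribution; the $s$-sum is split at $s\approx\alpha m^2$ with the head giving the main positive contribution and the tail shown to be negligible. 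If you want to push your derivative approach through, you would likewise need a sign theorem for the hypergeometric polynomial in $p$ hiding inside $B_s'(1)$ plus a head/tail split in $s$; absent that, the verification of \eqref{eqn:gap_expectation} is not established.
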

This theorem will follow easily from the following lemma, whose proof is postponed to later in the section. It is based on Theorem~\ref{thm:gap_expectation} with the same hard instance used by Theorem~\ref{thm:schatten}.
\begin{lemma}\label{lem:small_entropy}
Let $t$ be an even integer and $X\in \R^{N\times N}$, where $N$ is sufficiently large. There exists a constant $c = c(t) > 0$ such that any algorithm that approximates $h(X)$ within a factor $1\pm c$ with constant probability in the streaming model must use $\Omega_t(N^{1-1/t})$ bits of space.
\end{lemma}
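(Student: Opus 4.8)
The plan is to apply Theorem~\ref{thm:gap_expectation} to the function $h(X) = \sum_i h(\sigma_i(X))$ with $h(x) = x^2\ln x^2$, which is diagonally block-additive, and to verify the gap condition \eqref{eqn:gap_expectation} for a suitable choice of $m = m(t)$ and scaling parameter $\gamma$. First I would use the hard instance $M_{m,k}$ from \eqref{eqn:asymmetric_M}, whose non-zero singular values were computed in the proof of Theorem~\ref{thm:schatten}: one gets $k-1$ copies of $\sqrt{\gamma}$, together with $\sqrt{r_1(k)}$ and $\sqrt{r_2(k)}$, where $r_1(k), r_2(k)$ are the roots of $x^2 - (m^2+\gamma)x + (m^2-km)\gamma = 0$. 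Since $h$ applied to a singular value $\sigma$ gives $\sigma^2 \ln\sigma^2$, plugging in $\sigma_i^2 \in \{\gamma, r_1(k), r_2(k)\}$ turns the per-block contribution into a combination of terms of the form $r_i(k)\ln r_i(k)$ and $k\gamma\ln\gamma$. Taking the alternating sum $\sum_k (-1)^k \binom{m}{k}(\cdot)$, the terms linear in $k$ (namely the $(k-1)\gamma\ln\gamma$ part) vanish because $\sum_k (-1)^k\binom{m}{k} k^s = 0$ for $s < m$ once $m \geq 2$, so — exactly as in Section~\ref{sec:schatten_proof} — the gap reduces to analyzing $\tilde G_i := \sum_k (-1)^k\binom{m}{k} r_i(k)\ln r_i(k)$ for $i = 1, 2$.

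The key step is to expand $r_i(k)\ln r_i(k)$ as a power series in $k$ and reuse the Stirling-number machinery. I would write $r_1(k)\ln r_1(k)$ and $r_2(k)\ln r_2(k)$ using the same substitution $x = -\gamma k m/(m^2-\gamma)^2$ as in Section~\ref{sec:power_series}, so that $r_1(k) = m^2 - (m^2-\gamma)\sum_{n\geq 1} C_{n-1}x^n$ and $r_2(k) = \gamma + (m^2-\gamma)\sum_{n\geq 1} C_{n-1}x^n$; then multiply by the logarithm of the same series. Since $\log$ is analytic and non-vanishing near $r_i(0) = m^2$ (resp.\ $\gamma$) for $\gamma$ small, $r_i(k)\ln r_i(k) = \sum_{s\geq 0} \tilde A_s k^s$ (resp.\ $\sum_s \tilde B_s k^s$) converges absolutely for $0 \le k \le m$, with $\tilde A_s, \tilde B_s$ differing from the $A_s, B_s$ of \eqref{eqn:A_s}--\eqref{eqn:B_s} only by the extra logarithmic factor — concretely, $\tilde B_s$ picks up a factor that is $\ln\gamma$ plus a bounded correction, so for $\gamma$ small the sign of $\tilde B_s$ is governed by $-B_s$ (since $\ln\gamma < 0$). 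Applying \eqref{eqn:stirling} termwise gives $\tilde G_i = \sum_{s\geq m}\stir{s}{m}m!\,\tilde A_s$ (resp.\ $\tilde B_s$). The contribution $\tilde G_1$ is negligible (it carries a factor $(c/m^2)^{m-1}\ln m$-type bound, exactly as $|G_1|$ was bounded), and $\tilde G_2$ is dominated by its leading term $\tilde B_m$, which is $\simeq |\ln\gamma|\cdot B_m \neq 0$ since $B_m \neq 0$ whenever $p = 1$ is not an even integer — and here the relevant ``$p$'' is effectively $1$ because $r_i(k)\ln r_i(k)$ behaves like $r_i(k)^{1+o(1)}$; in particular $\binom{s-1}{i}F_{1,s,i}$-type coefficients are all non-zero. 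Hence $\tilde G_1 + \tilde G_2 \neq 0$ for small enough $\gamma$, so \eqref{eqn:gap_expectation} holds and Theorem~\ref{thm:gap_expectation} yields the $\Omega_t(N^{1-1/t})$ bound.

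The main obstacle I anticipate is handling the logarithmic factor carefully enough to pin down the sign of $\tilde B_s$ uniformly in $s \geq m$: unlike the pure-power case, where Lemma~\ref{lem:B_s_sign} exploited a clean hypergeometric identity (Euler's transformation reducing ${}_2F_1(1-s,1+p-2s;1+p-s;x)$ to ${}_2F_1(p,s;p-s+1;x)$), the series for $r_i(k)\ln r_i(k)$ is a product of the power series with $\ln(r_i(k)/r_i(0)) = \ln(1 + (\text{small series in }k))$, so the coefficients are convolutions of the old $A_s$ (resp.\ $B_s$) with the logarithm's Taylor coefficients. I would control this by writing $\ln r_i(k) = \ln r_i(0) + \ln(1 + g_i(k))$ where $g_i(k) = O(k/m^2)$ for $\gamma/m^2$ small, splitting $\tilde G_i$ into the ``$\ln r_i(0)$'' part — which is just $\ln r_i(0)$ times the already-analyzed $G_i$ — plus a correction $\sum_k(-1)^k\binom{m}{k} r_i(k)\ln(1+g_i(k))$ whose $k$-series coefficients are bounded by a geometric series in $s$ with ratio $O(\gamma m/(m^2-\gamma)^2) < 1/3$, hence again dominated by the leading ($s = m$) term. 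Since $\ln r_2(0) = \ln\gamma \to -\infty$ while $\ln r_1(0) = \ln m^2$ stays bounded, the $r_2$-branch dominates and its sign is determined, giving a non-zero gap of order $|\ln\gamma|/(2^m m)$ for the appropriate constant $c = c(t)$.

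\medskip

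\begin{proof}[Proof of Theorem~\ref{cor:original_entropy}]
Given the input matrix $\mathcal{M} = \mathcal{M}_{n,m,t}$ from \eqref{eqn:tutte_matrix_big}, Lemma~\ref{lem:small_entropy} shows a multiplicative gap of size $\Omega_t(1)$ in $h(\mathcal{M}) = \sum_i \sigma_i^2(\mathcal{M})\ln\sigma_i^2(\mathcal{M})$ between the even and odd cases. On the other hand $\|\mathcal{M}\|_F^2$ is a fixed deterministic quantity (it depends only on $n$, $m$, $t$ and $\gamma$, since $\sum_i q_i = n/2$ always), so $H(\mathcal{M}) = \sum_i h(\sigma_i(\mathcal{M})/\|\mathcal{M}\|_F) = \|\mathcal{M}\|_F^{-2}\, h(\mathcal{M}) - \ln\|\mathcal{M}\|_F^2$, i.e.\ $H(\mathcal{M})$ is an affine function of $h(\mathcal{M})$ with known coefficients. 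Moreover $h(\mathcal{M}) = \Theta(n)$ and $\|\mathcal{M}\|_F^2 = \Theta(n)$ with the constants depending only on $m$, $t$, $\gamma$, so a multiplicative gap of $\Omega_t(1)$ in $h(\mathcal{M})$ translates into an additive gap of $\Omega_t(1)$ in $H(\mathcal{M})$. Estimating $H(X)$ to additive error $c$ for a sufficiently small constant $c = c(t)$ therefore distinguishes the two cases of $\textsc{BHH}^0_{t,n}$, and the $\Omega_t(N^{1-1/t})$ lower bound follows as in the proof of Theorem~\ref{thm:gap_expectation}.
\end{proof}
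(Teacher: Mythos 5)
Your reduction to verifying the gap condition \eqref{eqn:gap_expectation} for $h(x) = x^2\ln x^2$, and the observation that the $(k-1)\gamma\ln\gamma$ contribution vanishes in the alternating sum for $m\ge 2$, are both correct, and the proof of Theorem~\ref{cor:original_entropy} from the lemma is essentially the paper's. However, the core of your argument for the lemma itself — that taking $\gamma$ small makes the ``$\ln r_2(0) = \ln\gamma$'' part dominate — is wrong, and the error is quantitative. You split $r_2(k)\ln r_2(k) = (\ln\gamma)\,r_2(k) + r_2(k)\ln\!\left(r_2(k)/\gamma\right)$. The first piece contributes $(\ln\gamma)\cdot\sum_k(-1)^k\binom{m}{k}r_2(k)$, and the power-series coefficient of $k^s$ in $r_2(k)$ is $C_{s-1}(-1)^s\gamma^s m^s/(m^2-\gamma)^{2s-1}$, which carries a $\gamma^s$ factor; after the Stirling identity the leading ($s=m$) term of that alternating sum is $\Theta(\gamma^m)$. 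So the ``$\ln\gamma$'' part is $\Theta(|\ln\gamma|\,\gamma^m)$, which \emph{vanishes} as $\gamma\to 0$. Meanwhile, writing $r_2(k) = \gamma(1+w(k))$ with $w(k) = -km/(m^2-\gamma) + O(\gamma)$, the correction is $\gamma(1+w)\ln(1+w)$, whose $k^m$-coefficient is $\Theta(\gamma/m^{m+O(1)})$ — a $\gamma^1$ factor, not $\gamma^m$. For $m\ge 2$ and small $\gamma$ the correction therefore \emph{dominates} the $\ln\gamma$ part, exactly the opposite of what you claim. The gap is of order $\gamma$, not $|\ln\gamma|$, and its sign is controlled by $\lim_{\gamma\to 0}\gamma^{-1}\sum_k(-1)^k\binom{m}{k}\,\gamma(1+w(k))\ln(1+w(k)) = \sum_k(-1)^k\binom{m}{k}(1-k/m)\ln(1-k/m)$, which you never analyze. (You also invoke Lemma~\ref{lem:B_s_sign} with ``$p=1$'', but that lemma is proved only for non-integer $p$.)

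The paper sidesteps all of this by taking $\gamma=1$, so $\ln r_2(0)=0$ and there is no ``$\ln\gamma$'' piece: the whole contribution \emph{is} your correction term, which the paper expands via $(1+x)\ln(1+x)=x+\sum_{n\ge 2}(-1)^n x^n/n(n-1)$ around $r_2(0)=1$, reorganizes into a ${}_3F_2$ hypergeometric polynomial by the same Catalan-convolution identity used in Section~\ref{sec:power_series}, and controls the sign of the Stirling-summed series using an integral representation of the ${}_3F_2$ together with a result of Kalmykov--Karp on the real roots of ${}_2F_1(1,2-s;3-2s;x)$ on $(-\infty,0]$ (even vs.\ odd $s$). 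That sign control, plus the geometric bound on $|G_1|$, gives $G_1+G_2\gtrsim 1/(m^2 e^m)\neq 0$. A small-$\gamma$ strategy is not inherently doomed — one would need to show that the explicit limit $\sum_k(-1)^k\binom{m}{k}(1-k/m)\ln(1-k/m)$ is nonzero for the chosen $m$ — but as written your argument misidentifies what dominates and does not establish the gap.
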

\begin{proof}[Proof of Theorem~\ref{cor:original_entropy}]
Let $X$ be the matrix in the hard instance for estimating $h(X)$, which is the same hard instance used by Theorem~\ref{thm:schatten}. Then $X$ consists of smaller diagonal blocks of size $m = m(t)$, and $\|X\|_F^2 = CN$, where $C = C(m,t)$ is a constant depending on $t$ and $m$ only. It is also easy to see that $K_1N\leq h(X)\leq K_2N$ for some constants $K_1$, $K_2$ depending only on $m$ and $t$.

Now we show that an additive $c$-approximation to $H(X)$ can yield a multiplicative $(1+c')$-approximation to $H(X)$. Recall that $H(X) = \ln\|X\|_F^2 - h(X) / \|X\|_F^2 = \ln(CN)- h(X)/(CN)$. Suppose that $Z$ is an additive $c$-approximation to $H(X)$, then we compute $\hat X = CN\ln (CN) - CN Z$. Since $Z\leq \mathcal{H}(Y) + c$,
\[
\hat X \geq CN \ln(CN) - CN(h(X) + c) = h(X) - cCN \geq \left(1 - \frac{cC}{K_1}\right) h(Y).
\]
Similarly it can be shown that $\hat X\leq (1 + \frac{cC}{K_2})h(X)$, thus choosing $c' = Cc/\max\{K_1,K_2\}$ suffices.

The lower bound follows from Lemma~\ref{lem:small_entropy}.
\end{proof}

We devote the rest of the section to the proof of Lemma~\ref{lem:small_entropy}, for which we shall apply Theorem~\ref{thm:gap_expectation} to $h(x)$.
\begin{proof}[Proof of Lemma~\ref{lem:small_entropy}]
Following the same argument as in the proof of Theorem~\ref{thm:schatten}, our goal is to show that
\[
G_1 + G_2\neq 0,
\]
where
\[
G_i = \sum_k (-1)^k\binom{m}{k} h(\sqrt{r_i(k)}) = \sum_k (-1)^k\binom{m}{k} r_1(k)\ln r_1(k),\quad i=1,2.
\]
Taking $\gamma=1$ in the definition of $M_{m,k}$ in \eqref{eqn:asymmetric_M}, we obtain (see Section~\ref{sec:power_series}) that
\begin{gather}
r_1(k) = m^2 + \sum_{j=1}^\infty (-1)^{j-1} \frac{C_{j-1} m^j k^j}{(m^2-1)^{2j-1}} \label{eqn:r_1(k)}\\
r_2(k) = 1 + \sum_{j=1}^\infty (-1)^j \frac{C_{j-1} m^j k^j}{(m^2-1)^{2j-1}}, \label{eqn:r_2(k)}
\end{gather}
where $C_j$ denotes the $j$-th Catalan number. 
Plugging \eqref{eqn:r_2(k)} into
\[
(1+x)\ln (1+x) = x + \sum_{n\geq 2} (-1)^n\frac{x^n}{n(n-1)},\quad |x|\leq 1
\]
and arranging the terms as in Section~\ref{sec:power_series} yields that
\[
r_2(k)\ln r_2(k) = \sum_s B_s k^s,
\]
where
\[
B_s = \frac{(-1)^s m^s}{s(m^2-1)^{2s}}\left( (m^2-1)\binom{2s-2}{s-1} + \sum_{i=2}^s \frac{(m^2-1)^i (-1)^i}{(i-1)}\binom{2s-i-1}{s-1} \right), \quad s\geq 2.
\]
Let
\[
D_i = \frac{x^i}{i-1}\binom{2s-i-1}{s-1},\] 
then
\[
\frac{D_{i+1}}{D_i} = \frac{(i-1)(s-i)}{i(2s-i-1)}x.
\]
Let $i^\ast = \max_i D_i$. One can obtain by solving $D_{i+1}/D_i\geq 1$ that $i^\ast = \lceil \frac{x-2}{x-1}s\rceil$. Hence $i^\ast = s$ for $s\leq m^2-3$, and $\sum (-1)^iD_i \simeq (-1)^sD_s$ when $s < \alpha m^2$ for some $\alpha \in (0,1)$. Note that $B_s$ has the same sign for $s \leq 
\alpha m^2$ because $(m^2-1)\binom{2s-2}{s-1}$ is negligible compared with $D_s$. The partial sum (choosing even $m$)
\[
\sum_{s=m}^{\alpha m^2} B_s \stir{s}{m}m! 
\gtrsim \sum_{s=m}^{\alpha m^2} \frac{m^s}{s(m^2-1)^{2s}}\cdot \frac{(m^2-1)^s}{s-1} \stir{s}{m}m!
\gtrsim \frac{1}{m^2 e^m}.
\]
Next we show that $G_2\gtrsim 1/m^{m+2}$.

Write
\[
\sum_{i=2}^s \frac{(m^2-1)^i (-1)^i}{i-1}\binom{2s-i-1}{s-1} = (m^2-1)^2\binom{2s-3}{s-1}\pFq{3}{2}{1,1,2-s}{2,3-2s}{1-m^2}.
\]
We can write (see, e.g., \cite{Wolfram1})
\[
\pFq{3}{2}{1,1,2-s}{2,3-2s}{1-m^2} = \frac{1}{m^2-1} \int_{1-m^2}^0 \pFq{2}{1}{1,2-s}{3-2s}{x} dx.
\]
Arranging the terms, we can write
\[
B_s = \frac{(-1)^s m^s}{s(m^2-1)^{2s}}(m^2-1)\binom{2s-2}{s-1}B_s',
\]
where
\[
B_s' = 1 + \frac{1}{2}\int_{1-m^2}^0 \pFq{2}{1}{1,2-s}{3-2s}{x} dx.
\]

It is shown in \cite[Theorem 3.1]{KK02} that ${}_2F_1(1,2-s;3-2s;x)$ has no real root on $(-\infty,0]$ when $s$ is even and has a single root on  $(-\infty,0]$ when $s$ is odd. Therefore, $B_s'>0$ when $s$ is even and thus $B_s > 0$. Note that (see, e.g., (2.5.1) in \cite{AAR99})
\[
\frac{d}{dx}{}_2F_1(1,2-s;3-2s;x) = \frac{s-2}{2s-3}{}_2F_1(2,3-s;4-2s;x).
\]
Again applying \cite[Theorem 3.1]{KK02} gives that ${}_2F_1(2,3-s;4-2s;x) > 0$ on $(-\infty, 0]$ when $s$ is odd. Hence ${}_2F_1(1,2-s;3-2s;x)$ is increasing on $(-\infty, 0]$ when $s$ is odd, and
\[
B_s' \leq 1 + \frac{1}{2}\int_{1-m^2}^0 1\ dx = \frac{m^2+1}{2}.
\]
Let $I = \{\text{odd }s: B_s' > 0\}$. From the argument on the previous page we know that $s> \alpha m^2$ for any $s\in I$. It follows that
\[
\left|\sum_{s\in I} B_s \stir{s}{m} m!\right| \leq \sum_{s\in I} |B_s| m^s \leq \sum_{s\in I} \frac{m^2+1}{2} \frac{m^{2s}}{(m^2-1)^{2s}} (m^2-1) 4^s \lesssim \frac{1}{m^{\alpha m^2 - 6}}.
\]
For odd $s\not\in I$ it holds that $B_s > 0$. Therefore
\begin{align*}
G_2 &= \sum_{s\geq m} B_s \stir{s}{m} m! \\
&= \sum_{s=m}^{\alpha m^2}  B_s \stir{s}{m} m! + \sum_{\substack{s>\alpha m^2\\ s\not\in I}} B_s\stir{s}{m}m! + \sum_{s\in I} B_s\stir{s}{m}m!\\
&\gtrsim \frac{1}{m^2 e^m} + 0 - \frac{1}{m^{\alpha m^2-6}}\\
&\gtrsim \frac{1}{m^2 e^m}
\end{align*}
provided that $m$ is large enough.

Next we analyse the contribution from $r_1(k)$. Plugging \eqref{eqn:r_1(k)} into
\[
(m^2+x)\ln(m^2+x) = m^2\ln m^2 + x\ln m^2 + x + \sum_{i = 2}^\infty (-1)^i\frac{x^i}{i(i-1)m^{2(i-1)}}
\]
gives that
\[
r_1(k)\ln r_1(k) = 2m^2\ln m^2 + (\ln m^2)\sum_j \frac{(-1)^{j-1} C_{j-1} m^j k^j}{(m^2-1)^{2j-1}} + \sum A_s k^s,
\]
where
\[
A_s = \frac{(-1)^sm^s}{s(m^2-1)^{2s}}\left( (m^2-1)\binom{2s-2}{s-1} + \sum_{i=2}^s \frac{(m^2-1)^i}{(i-1)m^{2(i-1)}}\binom{2s-i-1}{s-1} \right).
\]
Therefore
\[
G_1 = (-1)^m m!(\ln m^2)\left[\sum_{j\geq m}\stir{j}{m} \frac{(-1)^jm^j}{(m^2-1)^{2j-1}} + \sum_{s\geq m} A_s\stir{s}{m}\right]
\]
whence it follows that
\[
|G_1| \leq \ln(m^2) \left[ \sum_{j=m}^\infty \frac{(2m)^{2j}}{(m^2-1)^{2j-1}} + \sum_{s\geq m} \frac{m^s m^2 2^s}{s(m^2-1)^{2s}}\cdot m^s  \right] \lesssim \frac{\ln(m^2)}{m^{2m-2}},
\]
which is negligible compared with $G_2$. We conclude that $G_1 + G_2\neq 0$.
\end{proof}

\bibliographystyle{plain}
\bibliography{literature}
\end{document}